\newtheorem{theorem}{Theorem}[section]
\newtheorem{lemma}[theorem]{Lemma}
\newtheorem{definition}[theorem]{Definition}
\newtheorem{proposition}[theorem]{Proposition}
\newtheorem{myclaim}[theorem]{Claim}
\theoremstyle{definition}
\newtheorem{remark}[theorem]{Remark}
 \def\cqed{\renewcommand{\qedsymbol}{$\lrcorner$}}
 \newcommand{\ka}{\ensuremath{k}}
 \newcommand{\pe}{\ensuremath{p}}
 \newcommand{\dunion}{\dot{\cup}}
\begin{document}

\newif\ifappendix
\newif\ifmain
\newif\ifmove
\maintrue
\movefalse
\newif\ifabstract
\newif\iffull

 \abstractfalse

\ifabstract \fullfalse \else \fulltrue \movetrue \fi 

\ifabstract
\newcommand{\appstar}{$[*]$}
\else
\newcommand{\appstar}{}
\fi

\title{Fixed-parameter algorithms for minimum cost edge-connectivity augmentation}


\author{
D\'aniel Marx\thanks{
{Institute for Computer Science and Control,}
{Hungarian Academy of Sciences (MTA SZTAKI),}
{Budapest, Hungary,}
{dmarx@cs.bme.hu) Research
    supported by the European Research Council (ERC) grant
    ``PARAMTIGHT: Parameterized complexity and the search for tight
    complexity results,'' reference 280152.}
}
\and
L\'aszl\'o A.\ V\'egh\thanks{
       {Dept.\ of Management,}
       {London School of Economics \& Political Science,}
       {Houghton Street, London WC2A~2AE, UK.}
       {(l.vegh@lse.ac.uk)}
}}


\maketitle
\begin{abstract}
  We consider connectivity-augmentation problems in a setting where
  each potential new edge has a nonnegative cost associated with it, and
  the task is to achieve a certain connectivity  target with at most $p$ new
  edges of minimum total cost.  The main result is that the minimum
  cost augmentation of edge-connectivity from $k-1$ to $k$ with at
  most $p$ new edges is fixed-parameter tractable parameterized by $p$
  and admits a polynomial kernel. We also prove the fixed-parameter
  tractability of increasing edge-connectivity from 0 to 2, and
  increasing node-connectivity from 1 to 2.
\end{abstract}

\appendixtrue

\section{Introduction}

Designing networks satisfying certain connectivity requirements has
been a rich source of computational problems since the earliest days
of algorithmic graph theory: for example, the original motivation of
Bor\r uvka's work on finding minimum cost spanning trees was designing an
efficient electricity network in Moravia \cite{DBLP:journals/dm/NesetrilMN01}. In many applications, we
have stronger requirements than simply achieving connectivity: one may
want to have connections between (certain pairs of) nodes even after a
certain number of node or link failures. Survivable network design
problems deal with such more general requirements.

In the simplest scenario, the task is to achieve $k$-edge-connectivity
or $k$-node-connectivity by adding the minimum number of new edges
to a given directed or undirected graph $G$. This setting already
leads to a surprisingly complex theory and, somewhat unexpectedly,
there are exact polynomial-time algorithms for many of these
questions.  For example, there is a polynomial-time algorithm for
achieving $k$-edge-connectivity in an undirected graph by adding the
minimum number of edges (Watanabe and Nakamura \cite{watanabenakamura},
see also Frank \cite{frank-edge}). For
$k$-node-connectivity, a polynomial-time algorithm is known only for
the special case when the graph is already $(k-1)$-node-connected;
the general case is still open \cite{Vegh11}. We refer the reader to
the recent book by Frank \cite{frankkonyv} on more results of
similar flavour.
One can observe that increasing
connectivity by one already poses significant challenges and in
general the node-connectivity versions of these problems seem to be
more difficult than their edge-connectivity counterparts.

For most applications, minimizing the number of new edges is a very
simplified objective: for example, it might not be possible to realize
direct connections between nodes that are very far from each other. A
slightly more realistic setting is to assume that the input specifies
a list of potential new edges (``links'') and the task is to achieve
the required connectivity by using the minimum number of links from
this list. Unfortunately, almost all problems of this form turn out to
be NP-hard: deciding if the empty graph on $n$ nodes can be augmented
to be 2-edge-connected with $n$ new edges from a given list is
equivalent to finding a Hamiltonian cycle (similar simple arguments
can show the NP-hardness of augmenting to $k$-edge-connectivity also
for larger $k$). Even though these problems are already hard, this
setting is still unrealistic: it is difficult to imagine any
application where all the potential new links have the same
cost. Therefore, one typically tries to solve a minimum cost version
of the problem, where for every pair $u,v$ of nodes, a (finite or
infinite) cost $c(u,v)$ of connecting $u$ and $v$ is given. When the
goal is to achieve $k$-edge connectivity, we call this problem {\em
  Minimum Cost Edge-Connectivity Augmentation to $k$} (see
Section~\ref{sec:prelim} for a more formal definition). In the special
case when the input graph is assumed to be $(k-1)$-edge-connected (as
in, e.g., \cite{jordan95,hsu2000,kortsarz07,Vegh11}), we call the
problem {\em Minimum Cost Edge-Connectivity Augmentation by
  One}. Alternatively, one can think of this problem with the
edge-connectivity target being the minimum cut value of the input
graph plus one.  The same terminology will be used for the
node-connectivity versions and the minimum cardinality variants (where
every cost is either $1$ or infinite).

Due to the hardness of the more general minimum cost problems,
research over the last two decades has focused mostly on the approximability of the
problem. This field is also known as survivable network design,
e.g., \cite{agrawal95,goemans95,Jain01,CVV03,kortsarz03,cheriyan12};
for a survey, see \cite{kortsarz07}.
 In this paper, we approach these problems
from the viewpoint of parameterized complexity.
We say that a problem with parameter $p$ is
{\em fixed-parameter tractable (FPT)} if it can be solved in time
$f(p)\cdot n^{O(1)}$, where $f(p)$ is an arbitrary computable function
depending only on $p$ and $n$ is the size of the input
\cite{MR2001b:68042,MR2238686}. The tool box of fixed-parameter
tractability includes many techniques such as bounded search trees,
color coding, bidimensionality, etc. The method that received most
attention in recent years is the technique of kernelization
\cite{DBLP:conf/birthday/LokshtanovMS12,DBLP:journals/disopt/MisraRS11}. A
{\em polynomial kernelization} is a polynomial-time algorithm that
produces an equivalent instance of size $p^{O(1)}$, i.e., polynomial
in the parameter, but not depending on the size of the
instance. Clearly, polynomial kernelization implies fixed-parameter
tractability, as kernelization in time $n^{O(1)}$ followed by any
brute force algorithm on the $p^{O(1)}$-size kernel yields a
$f(p)\cdot n^{O(1)}$ time algorithm. The conceptual message of
polynomial kernelization is that the hard problem can be solved by
first applying a preprocessing to extract a ``hard core'' and then
solving this small hard instance by whatever method available.
An interesting example of fixed-parameter tractability in the context
of connectivity augmentation is the result
by Jackson and Jord\'an \cite{jacksonjordan}, showing that for the
problem of making a graph $k$-node-connected by adding a minimum
number of arbitrary new edges
admits a  $2^{O(k)}\cdot n^{O(1)}$ time algorithm (it is still open whether there is a polynomial-time algorithm for this problem).

As observed above, if the link between arbitrary pair of nodes is
not always available (or if they have different costs for different
pairs), then the problem for augmenting a $(k-1)$-edge-connected graph to a $k$-edge-connected one is NP-hard  for any fixed $k\ge 2$. Thus
for these problems we cannot expect fixed-parameter tractability when
parameterizing by $k$. In this paper, we consider a different
parameterization: we assume that the input contains an integer $p$,
which is a upper bound on the number of new edges that can be added.
Assuming that the number $p$ of new links is much smaller than the size of the
graph, exponential dependence on $p$ is still
acceptable, as long as the running time depends only polynomially on
the size of the graph. 
It follows from Nagamochi~\cite[Lemma
7]{DBLP:journals/dam/Nagamochi03} that {\em Minimum Cardinality
Edge-Connectivity Augmentation from 1 to 2} is fixed-parameter
tractable parameterized by this upper bound $p$. Guo and Uhlmann
\cite{DBLP:journals/networks/GuoU10} showed that this problem, as well
as its node-connectivity counterpart, admits a kernel of $O(p^2)$
nodes and $O(p^2)$ links. Neither of these algorithms seem to work
for the more general minimum cost version of the problem, as the
algorithms rely on discarding links that can be replaced by more
useful ones. Arguments of this form cannot be generalized to the case when
the links have different costs, as the more useful links can have
higher costs.  Our results go beyond the results of
\cite{DBLP:journals/dam/Nagamochi03,DBLP:journals/networks/GuoU10} by
considering higher order edge-connectivity and by allowing arbitrary
costs on the links.

We present a kernelization algorithm for the problem {\em Minimum Cost Edge-Connectivity Augmentation by One} for arbitrary $k$.
The algorithm starts by doing the opposite of the
obvious: instead of decreasing the size of the instance by discarding
provably unnecessary links, we add new links to ensure that the instance has a
certain closure property; we call instances satisfying this property  {\em metric instances}. We argue that these changes do not affect
the value of the optimum solution. Then we show that a metric instance has a bounded number of important links that are provably sufficient for the construction of an optimum solution.
The natural machinery for this approach via metric instances is to work with a more general
problem. Besides the costs, every link is equipped with a positive
integer weight. Our task is to find a minimum cost set of links of
total weight at most $p$ whose addition makes the graph
$k$-edge-connected. Our main result addresses the corresponding problem, {\em Weighted
  Minimum Cost Edge-Connectivity Augmentation}.
\begin{theorem}\label{th:main-weighted}
Weighted Minimum Cost Edge-Connectivity Augmentation by One admits a kernel of
$O(p)$ nodes, $O(p)$ edges, $O(p^3)$ links, with all costs being integers of $O(\pe^6\log
\pe)$ bits.
\end{theorem}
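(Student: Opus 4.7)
I would prove the theorem in three phases, each of which preserves the optimum value: first contract the host graph to $O(p)$ nodes and edges, then close the link set metrically and keep only the cheapest representative in each ``(endpoint pair, integer weight)'' class, and finally compress the cost vector into small integers.

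\emph{Phase 1 (host graph).} The input graph $G$ is $(k-1)$-edge-connected, so every minimum cut has value $k-1$ and these cuts admit the classical tree/cactus representation (depending on the parity of $k-1$). Any feasible augmentation must cover every minimum cut, and the total link weight is at most $p$, so only $O(p)$ extremal minimum cuts can fit into a feasible instance; otherwise the algorithm rejects. Contracting each maximal part of $G$ that is not separated by any extremal minimum cut, and keeping just enough structural edges to certify $(k-1)$-edge-connectivity of the reduced graph, yields $O(p)$ nodes and $O(p)$ edges. The link set is lifted by mapping each endpoint to its image under the contraction; this is where the move to the \emph{weighted} variant pays off, since a link that used to connect two specific vertices in $G$ now serves a whole contracted part, and its integer weight records how it should count against the budget.

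\emph{Phase 2 (metric closure of links).} I would call an instance metric if, whenever two links $e_1,e_2$ with weights $w_1,w_2$ and costs $c_1,c_2$ share an endpoint, there is already a link between the other two endpoints with weight at most $w_1+w_2$ and cost at most $c_1+c_2$. Taking this closure does not change the optimum: in any solution containing two adjacent links, the closure link can replace them while maintaining feasibility and weight budget. Once the instance is metric, for every ordered pair of nodes and every integer weight in $\{1,\dots,p\}$ one may keep a single cheapest link, discarding the rest as dominated. With $O(p)$ nodes and $p$ possible weights, this leaves $O(p^3)$ links.

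\emph{Phase 3 (cost compression).} After Phase 2, the costs may still be arbitrarily large integers. I would apply a Frank--Tardos-style simultaneous Diophantine approximation to the cost vector $c\in \mathbb{Z}^L$, where $L=O(p^3)$. Since every feasible solution is a $\{0,1\}$-combination of at most $p$ entries of $c$, it suffices to sign-preserve $c\cdot x$ for vectors $x$ with $\|x\|_1 \le 2p$. Frank--Tardos compression on a dimension-$L$ vector with this restriction produces integer costs of bit-length $O(L^2\log L)=O(p^6\log p)$, matching the claimed bound.

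\emph{Main obstacle.} The bulk of the work is Phase 1: one must define precisely which contractions preserve the optimum, then lift the original link set correctly onto the contracted graph so that every selected link carries an integer weight whose sum over the kernel's solution mirrors the true cost of the corresponding augmentation in $G$. Phases 2 and 3 then follow from the closure/dominance argument and a now-standard invocation of a compression theorem, respectively.
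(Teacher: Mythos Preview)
Your Phase~3 is essentially the paper's argument (Frank--Tardos applied to the $O(p^3)$-dimensional cost vector with $\|x\|_1\le 2p$), and your overall three-phase shape is right. The substantive gap is in Phases~1 and~2 and in their order.

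\textbf{The host-graph reduction needs the metric closure, not the other way around.} You propose to contract the tree/cactus to $O(p)$ nodes first and do the metric closure afterwards. But nothing justifies contracting degree-$2$ paths in a non-metric instance: a cheapest link may well end at a degree-$2$ node, and if you map its endpoint to a corner node you change the set of minimum cuts it covers, so the lifted solution need not be feasible in the original graph. The paper proceeds in the opposite order. It first computes a metric completion and then proves the key structural lemma (Theorems~\ref{thm:degree-3} and~\ref{thm:degree-3-3}): \emph{in a metric instance there is an optimal solution all of whose link endpoints are corner nodes} (leaves and branch nodes of the tree, or their cactus analogues). Only with this lemma in hand is it safe to contract each maximal degree-$2$ path to a single edge and obtain the $O(p)$-node kernel.

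\textbf{Your metric definition is missing the shadow condition.} You take ``metric'' to mean only the triangle inequality $c(h)\le c(e)+c(f)$ when $w(h)\ge w(e)+w(f)$. The paper's definition has a second clause: a link $f$ is a \emph{shadow} of $e$ if $w(f)\ge w(e)$ and every minimum cut covered by $f$ is covered by $e$, and one requires $c(f)\le c(e)$ in that case. This shadow inequality is precisely what lets you slide an endpoint of a link from a degree-$2$ node to an adjacent corner node without increasing cost, and is the engine behind the corner-node lemma. Triangle closure alone does not imply it.

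So the fix is: move the metric closure before the contraction, include the shadow condition in the closure, and then supply the corner-node lemma as the structural heart of Phase~1. Your remark that weights ``record how a link should count against the budget'' is on target, but its real role is in the closure step (combining two adjacent links into one of summed weight), not in the contraction.
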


The original problem is the special case when all links have weight
one. Strictly speaking, Theorem~\ref{th:main-weighted} does not give a
kernel for the original problem, as the kernel may contain links of
higher weight even if all links in the input had weight one. Our next
theorem, which can be derived from the previous one,
shows that we may obtain a kernel that is an unweighted
instance. However, there is a trade-off in the bound on the kernel size.

\begin{theorem}\label{th:main}
Minimum Cost Edge-Connectivity Augmentation by One admits a kernel of
$O(p^4)$ nodes, $O(p^4)$ edges and $O(p^4)$ links, with all costs being  integers of $O(\pe^8\log
\pe)$ bits.
\end{theorem}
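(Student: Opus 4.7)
The plan is to derive this theorem from the weighted kernel of Theorem~\ref{th:main-weighted}. First, an instance of Minimum Cost Edge-Connectivity Augmentation by One is a special case of the weighted problem in which every link has weight~$1$, so applying Theorem~\ref{th:main-weighted} yields a weighted kernel $(G,L,c,w)$ with $O(p)$ nodes, $O(p)$ edges, $O(p^3)$ links, and $O(p^6\log p)$-bit costs. Because any feasible solution has total weight at most~$p$, links with $w(e)>p$ can be discarded, so we may assume $w(e)\in\{1,2,\ldots,p\}$ for every remaining link.

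The second step converts this weighted kernel into a genuine unweighted instance. For each weighted link $e=uv$ with $w=w(e)$ and cost $c=c(e)$, I would replace $e$ by a chain gadget consisting of $w-1$ new intermediate nodes $x^e_1,\ldots,x^e_{w-1}$ together with $w$ unit-weight links $ux^e_1, x^e_1x^e_2, \ldots, x^e_{w-1}v$. The intermediate nodes are attached to the new graph $G'$ via suitably many parallel graph-edges (for instance to a private dummy vertex of the gadget) so that the cut structure on the original vertices of $G$ is preserved and $G'$ remains $(k-1)$-edge-connected. The costs of the chain links are then chosen so that (a)~the cheapest feasible way to cover the $u$--$v$ cut using the gadget has total cost exactly $c$ and uses all $w$ chain links, faithfully simulating the weighted link, and (b)~no proper nonempty sub-chain is competitive with either the full chain or the empty choice, so at the optimum each chain is used either fully or not at all.

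Each weighted link contributes $w-1\le p$ new nodes and $w\le p$ new unit-weight links, so from $|L|=O(p^3)$ the resulting instance has $O(p^4)$ nodes, $O(p^4)$ edges, and $O(p^4)$ links, as claimed. Correctness is then a two-sided bijection: a weighted solution of cost $\le C$ and weight $\le p$ expands into an unweighted solution of cost $\le C$ using $\le p$ links by choosing every chain link of every selected weighted link, and conversely the all-or-nothing property~(b) forces an unweighted solution to correspond to a set of weighted links whose total weight equals the number of links used. The main technical obstacle is the combined design of the gadget graph-structure and of the cost distribution on the chain: the parallel graph-edges around each intermediate node must keep $G'$ $(k-1)$-edge-connected without creating spurious cuts or over-covering existing ones, while the costs must penalize any partial use of the chain; this balance is especially delicate when $k=2$, which seems to require a separate construction, since the intermediate nodes cannot be insulated by many parallel graph-edges. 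Absorbing this all-or-nothing penalty into integer costs is what drives the bit-complexity up from $O(\pe^6\log \pe)$ to $O(\pe^8\log \pe)$.
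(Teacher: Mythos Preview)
Your reduction via Theorem~\ref{th:main-weighted} is the right first move, but the second step diverges from the paper and, as written, does not go through.

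The paper does not build gadgets. Instead it exploits the fact that the weighted kernel was obtained from the \emph{original} unweighted instance by running \textsc{Metric-Completion}. Every link $f$ of weight $w(f)$ in the metric closure was produced by a sequence of shadow and triangle steps; unwinding these steps gives a set $A(f)$ of at most $w(f)\le p$ \emph{original} unit-weight links with $\sum_{e\in A(f)} c(e)\le \bar c(f)$ and $\bigcup_{e\in A(f)}\mathcal D(e)\supseteq \mathcal D(f)$. Replacing each of the $O(p^3)$ weighted links by such an emulating set yields $O(p^4)$ original links; keeping all nodes incident to them gives the $O(p^4)$ bound on nodes and edges. No new graph structure is introduced, and the resulting instance is genuinely unweighted because the links are original ones.

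Your chain-gadget plan, by contrast, has a real gap that you yourself flag. If the intermediate nodes $x^e_i$ are made $k$-inseparable from, say, $v$ (via many parallel graph-edges), then the single unit link $(u,x^e_1)$ already covers every minimum cut that the weighted link $(u,v)$ covered, so nothing forces the use of all $w$ chain links and the weight budget is not faithfully simulated. If instead the $x^e_i$ are \emph{not} $k$-inseparable from the old vertices, they create new minimum cuts that must themselves be covered, and the instance is no longer equivalent. Cost penalties alone cannot fix this: once a partial chain covers the relevant cuts, the solver is free to ignore the rest regardless of how you distribute the cost. There is no evident way to engineer both ``no new cuts'' and ``all-or-nothing'' simultaneously for general $k$, and the $k=2$ case you single out is precisely the case to which the whole problem reduces. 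The paper sidesteps all of this by going back to the original links rather than inventing synthetic ones.
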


\medskip

Let us now outline the main ideas of the proof of Theorem~\ref{th:main-weighted}.
We first show that every input can be efficiently reduced to
a metric instance, one with the closure property.
We first describe our algorithm in the special case of increasing
edge-connec\-tivity from 1 to 2, where connectivity augmentation can be
interpreted as covering a tree by paths. 
The closure
property of the instance allows us to prove that there is an optimum
solution where every new link is incident only to ``corner nodes'' 
(leaves and branch nodes).  Either the
problem is infeasible, or we can
bound the number of corner nodes by $O(p)$. Hence we can also bound the
number of potential links in the resulting small instance.


Augmenting edge connectivity from 2 to 3 is similar to augmenting from
1 to 2, but this time the graph we need to work on is no longer a
tree, but a cactus graph. Thus the arguments are slightly more
complicated, but generally go along the same lines. Finally, in the
general case of increasing edge-connectivity from $k-1$ to $k$, we use
the uncrossing properties of minimum cuts and a classical result of
Dinits, Karzanov, and Lomonosov \cite{dinits} to show that (depending on the parity of $k$) the problem can be always
reduced to the case $k=2$
or $k=3$.

In kernels for the weighted problem, a further technical issue has to be
overcome: each finite cost in the produced instance has to be a
rational number represented by $p^{O(1)}$ bits. As we have no assumption
on the sizes of the numbers appearing in the input, this is a nontrivial requirement. It turns out that
a technique of Frank and Tardos~\cite{franktardos} (used
earlier in the design of strongly polynomial-time algorithms) can be
straightforwardly applied here: the costs in the input can be
preprocessed in a way that the each number is an integer of $O(p^6\log
p)$ bits long and the relative costs of the feasible solutions do not
change. We believe that this observation is of independent interest,
as this technique seems to be an essential tool for kernelization of
problems involving costs.

To prove Theorem~\ref{th:main} (see Section~\ref{sec:main-proof}),
we first obtain a kernel by applying our weighted result to our unweighted instance; this kernel will however contain links of weight higher than one.
Still, every link $f$ of weight $w(f)$ 
in the (weighted) kernel can be replaced by a sequence of $w(f)$ original unweighted edges. This replaces the $O(p^3)$ links
by $O(p^4)$ original ones. 

We try to extend our results in two directions. First, we show that in
the case of increasing connectivity from 1 to 2, the
node-connectivity version can be directly reduced to the edge-connectivity
version \iffull  (see Section \ref{sec:node}).\else (see Appendix \ref{sec:node}). \fi

\begin{theorem}\label{th:node}
Weighted Minimum Cost
Node-Connectivity Augmentation from $1$ to $2$ admits a
a kernel of
$O(p)$ nodes, $O(p)$ edges, $O(p^3)$ links, with all costs being integers of $O(\pe^6\log
\pe)$ bits.
\end{theorem}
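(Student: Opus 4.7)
The plan is to reduce Weighted Minimum Cost Node-Connectivity Augmentation from $1$ to $2$ to the edge-connectivity version and then invoke Theorem~\ref{th:main-weighted} with $k=2$. On the node-connectivity side, the natural tree playing the role of the bridge tree is the \emph{block-cut tree} $T$ of the input graph $G$: its nodes are the blocks of $G$ together with the cut vertices of $G$, with edges joining every cut vertex to the blocks containing it. Each link $\ell = uv$ of the input maps canonically to a pair of nodes in $T$: $u$ itself if $u$ is a cut vertex of $G$, and the unique block node containing $u$ otherwise, and analogously for $v$.

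\medskip

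Given an instance $(G,L,c,w,p)$, I first build $T$ and translate $L$ into a link set $L'$ on $T$ with inherited costs and weights. The central claim to verify is that $G+F$ is $2$-node-connected if and only if $T'+F'$ is $2$-edge-connected, where $T'$ is obtained from $T$ by installing a short gadget at each cut vertex of $G$ belonging to three or more blocks, and $F' \subseteq L'$ is the image of $F$. The gadget re-encodes the requirement ``the branches at $v$ are connected by $F$ after removing $v$'' (which is what it takes to eliminate the cut vertex $v$ of $G$) as an ordinary edge-covering requirement on $T'$. Once this cost-preserving equivalence is in hand, Theorem~\ref{th:main-weighted} applied to the edge-connectivity instance on $T'$ yields a kernel with $O(p)$ nodes, $O(p)$ edges, $O(p^3)$ links, and $O(p^6\log p)$-bit integer costs. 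Reinterpreting the kernel back as a node-connectivity instance (the gadget construction is invertible on the reduced, bounded-size instance) completes the argument, and the bounds transfer directly.

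\medskip

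The main obstacle is the correctness of the reduction at high-degree cut vertices. For a cut vertex in exactly two blocks, edge-covering of the incident tree edge literally coincides with eliminating the cut vertex, and no gadget is needed. For a cut vertex in $k\ge 3$ blocks, however, ordinary edge-covering of the incident tree edges is strictly weaker than the branch-spanning condition required for $2$-node-connectivity: a family of links may ``pair up'' branches rather than spanning all of them, covering every tree edge while still leaving the cut vertex unreached. Designing the gadget so that covering its edges captures branch-spanning faithfully, while keeping the size of $T'$ only a small polynomial blow-up of $T$ so that the kernel parameters of Theorem~\ref{th:main-weighted} carry over unchanged, is the main technical step. Once that is settled, the remainder of the proof is bookkeeping: verifying that the reduction and its inverse run in polynomial time and that the claimed bounds on nodes, edges, links, and cost bit-lengths are preserved by the translation in both directions.
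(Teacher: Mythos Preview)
Your plan correctly identifies the crux and then stops short of it. At a cut vertex $v$ with $r\ge 3$ branches $B_1,\dots,B_r$, eliminating $v$ requires that the chosen links, viewed as a graph on $\{B_1,\dots,B_r\}$, be \emph{connected}; edge-covering of the $r$ incident tree edges only asks that every branch be touched by some link. You note this discrepancy yourself and then promise a gadget whose $2$-edge-covering encodes the spanning predicate, without constructing it. That unspecified gadget is the entire content of the reduction, and your own example shows why no naive tree gadget works: if all inter-branch links are routed through a common subtree at $v$ (a star, a path, anything where the branch-attachment points are the only interface), then a pairing such as $\{B_1B_2,\,B_3B_4\}$ already covers every gadget edge while leaving the branch graph disconnected. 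Covering needs only $\lceil r/2\rceil$ links, spanning needs $r-1$; encoding the latter as the former is not a bookkeeping detail. Until you actually exhibit the gadget or argue differently, there is no proof.

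For comparison, the paper does not pass through the block--cut tree. It works on $G$ directly: at every cut vertex $v$, with $V_1,\dots,V_r$ the components of $G-v$, it introduces fresh nodes $v_1,\dots,v_r$, inserts ``special'' edges $vv_i$, and redirects every edge and link from $V_i$ that was incident to $v$ so that it now ends at $v_i$. Lemma~\ref{lem:2-node-conn} then asserts that $G+F$ is $2$-node-connected iff the split graph $G'+\varphi(F)$ is $2$-edge-connected, after which Theorem~\ref{th:main-weighted} is applied to $G'$ and the special edges are contracted back. But this splitting is exactly a star gadget at each cut vertex, so your pairing-up objection applies to it verbatim: take $G=K_{1,4}$ with centre $v$ and $F=\{u_1u_2,\,u_3u_4\}$; then $G'+\varphi(F)$ consists of two $5$-cycles meeting only at $v$ (hence $2$-edge-connected), while $G+F$ still has $v$ as a cut vertex. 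Thus the paper's stated reduction does not resolve the difficulty you raise either; you should scrutinise Lemma~\ref{lem:2-node-conn} rather than assume it hands you the missing gadget.
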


For higher connectivities, we do not expect such a clean reduction to
work. Polynomial-time exact and approximation algorithms for
node-connectivity are typically much more involved than for
edge-connectivity (compare e.g., \cite{watanabenakamura} and
\cite{frank-edge} to \cite{frankjordan} and \cite{Vegh11}), and it is
reasonable to expect that the situation is similar in the case of
fixed-parameter tractability.

A natural goal for future work is trying to remove the assumption of
Theorems~\ref{th:main-weighted} and \ref{th:main} that the input graph is $(k-1)$-connected. In
the case of 2-edge-connectivity, we show that the problem is
fixed-parameter tractable even if the input graph is not
connected. However, the algorithm uses nontrivial branching and it
does not provide a polynomial kernel.

\begin{theorem}\label{th:0to2}
Minimum Cost
Edge-Connectivity Augmentation to $2$ can be solved in time $2^{O(p\log p)}\cdot n^{O(1)}$. 
\end{theorem}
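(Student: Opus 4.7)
The plan is to reduce to the connected case of Theorem~\ref{th:main-weighted} via a branching step that guesses the inter-component ``skeleton'' of an optimal solution.

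First, I would set up the forest model. Let $C_1,\ldots,C_c$ be the connected components of $G$ and let $T_i$ be the bridge-block tree of $C_i$; the forest $F=T_1\cup\cdots\cup T_c$ captures the augmentation problem, since making $G$ $2$-edge-connected amounts to augmenting $F$ to a $2$-edge-connected multigraph by adding links. Any feasible solution of size at most $p$ must contain at least $c-1$ inter-component links, so we may assume $c\le p+1$ (otherwise we report infeasibility). Moreover, every leaf of $F$ must be an endpoint of some link in any solution, so $F$ has at most $2p$ leaves and hence only $O(p)$ ``corner'' vertices (leaves and internal branching vertices) in total.

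Next I would adapt the metric closure construction underlying Theorem~\ref{th:main-weighted} to the forest $F$: by introducing appropriately priced shortcut links one may assume without loss of generality that every link used in some optimum has both endpoints at corner vertices of $F$, leaving only $O(p^2)$ candidate links.

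With the candidate links restricted, I would branch on the inter-component skeleton. For any optimum $L^*$, the inter-component links of $L^*$ span the $c$ super-nodes $C_1,\ldots,C_c$, hence contain a spanning tree $S^*$ of the component super-graph with $|S^*|=c-1\le p$; there are at most $(O(p^2))^p = 2^{O(p\log p)}$ choices for $S^*$. For each candidate, the graph $G+S^*$ is connected, and finding a minimum-cost augmentation of $G+S^*$ to $2$-edge-connectedness using at most $p-(c-1)$ further links fits the hypothesis of Theorem~\ref{th:main-weighted} and runs in polynomial time. Returning the cheapest overall solution (adding the cost of $S^*$) gives the claimed $2^{O(p\log p)}\cdot n^{O(1)}$ running time.

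The main obstacle will be the forest version of the closure argument: the uncrossing used for a single tree in Theorem~\ref{th:main-weighted} does not directly accommodate links that bridge different components, so one must verify that inter-component shortcuts can be introduced without decreasing the optimum and that intra-component corners remain the only relevant endpoints. Once this is established, the branching enumeration and the black-box call to the connected-case algorithm combine immediately to give the stated bound; the fact that we lose a polynomial kernel (and only obtain an FPT algorithm) is reflected in the $2^{O(p\log p)}$ branching factor rather than a bounded-size instance.
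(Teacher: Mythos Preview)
Your plan hinges on the claim that the metric closure can be extended to the forest so that some optimum uses only corner-to-corner links, leaving $O(p^2)$ candidates to branch over. This is precisely the step that fails, and you correctly flag it as the main obstacle but do not resolve it. The paper addresses this explicitly: for a link bridging two components there is no shadow relation that lets one slide an endpoint to an arbitrary corner of the target tree while preserving feasibility, and the authors are unable to prove a corner-only structure theorem in the disconnected setting. What they establish instead is weaker and leaf-dependent. After a restricted metric completion (shadows are defined only for internal links and for ``foliate'' external links leaving a leaf), they show that for any fixed leaf $u$ there is an optimum in which every link incident to $u$ has its other endpoint in $R\cup\mathcal{S}_u$, where $\mathcal{S}_u=\{S_t(u,V_j):1\le t\le p,\ j\neq i\}$ records, for each weight $t$ and each other component $V_j$, the endpoint of a cheapest $(\le t)$-link from $u$ into $V_j$. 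The nodes in $\mathcal{S}_u$ need not be corners and the set depends on $u$, so no global $O(p^2)$ link pool exists, and your spanning-skeleton enumeration has no bounded domain to range over.

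Accordingly, the paper's algorithm is organised differently from your skeleton branching: it picks a single leaf $u$, branches over the $O(p)\cdot p=O(p^2)$ possible links incident to $u$ (endpoint in $R\cup\mathcal{S}_u$, weight in $\{1,\dots,p\}$), adds the chosen link to $G$, and recurses with the budget reduced by that link's weight. Depth at most $p$ yields the $2^{O(p\log p)}\cdot n^{O(1)}$ bound directly, without invoking Theorem~\ref{th:main-weighted} as a black box. One minor slip: Theorem~\ref{th:main-weighted} produces a polynomial kernel, not a polynomial-time algorithm, so even in your scheme a brute-force pass on the kernel would still be required; this is harmless for the stated bound but is not what you wrote.
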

The proof is given in Section~\ref{sec:augm-arbitr-appendix}. 
The additional branching arguments needed in Theorem~\ref{th:0to2} can
show a glimpse of the difficulties one can encounter when trying to
solve the problem larger $k$, especially with respect to
kernelization.  For augmentation by one, the following notion of
shadows was crucial to define the metric closure of the instances: $f$
is a shadow of link $e$ if the weight of $e$ is at most that of $f$,
and $e$ covers every $k$-cut covered by $f$ --- in other words,
substituting link $f$ by link $e$ retains the same connectivity. When
the input graph is not assumed to be connected, we cannot extend the
shadow relation to  links connecting different components, only in
special, restricted situations. Therefore, we cannot prove  the
existence of an optimal solution with all links incident to corner
nodes only. Instead, we prove that there is an optimal solution such
that all leaves are adjacent to either corner nodes or certain other
special nodes; this enables the branching in the FPT algorithm. A
further difficulty  arises if we want to avoid using two copies of the
same link. This was automatically excluded for augmentation by one,
whereas now further efforts are needed to enforce this requirement.




\ifmain
\section{Preliminaries}\label{sec:prelim}
For a set $V$, let ${V\choose 2}$ denote the edge set of the complete
graph on $V$. Let $n=|V|$ denote the number of nodes.
 For a set $X\subseteq V$ and $F\subseteq {V\choose 2}$, let $d_F(X)$
denote the number of edges $e=uv\in F$ with $u\in X$, $v\in V\setminus X$. When we are given a
graph $G=(V,E)$ and it is clear from the context, $d(X)$ will denote
$d_E(X)$. A set $\emptyset\neq X\subsetneq V$ will be called a {\em cut}, and {\em minimum cut} if
$d(X)$ takes the minimum value. 
For a function $z:V\rightarrow \mathbb{R}$, and a set $X\subseteq V$, let
$z(X)=\sum_{v\in X} z(v)$ (we use the same notation with functions
on edges as well). For $u,v\in V$, a set $X\subseteq V$ is called an
$u\bar{v}$-set if $u\in X$, $v\in V\setminus X$.

Let us be given an undirected graph $G=(V,E)$ (possibly containing
parallel edges), a connectivity target $\ka\in\mathbb{Z}_+$, and  a cost function $c:{V\choose 2}\rightarrow
\mathbb{R}_+\cup\{\infty\}$.
For a given nonnegative integer \pe{}, our aim is to find a minimum cost set of edges $F\subseteq {V\choose 2}$ of cardinality at most \pe{} 
such that $(V,E\cup F)$ is \ka-edge-connected. 

We will work with a more general version of this problem. 
Let $E^*$ denote an edge set on $V$, possibly containing parallel
edges. We call the elements of $E$ {\em edges} and the elements of $E^*$ {\em
  links}.  
Besides the cost function $c:E^*\rightarrow
\mathbb{R}_+\cup\{\infty\}$, we are also given a
positive integer weight function $w:E^*\rightarrow \mathbb{Z}_+$. 
We restrict the total weight of the augmenting edge set to be at most
$p$ instead of restricting its cardinality.
Let us define our main problem.

\medskip

\begin{center}
\fbox{
\parbox{0.85\linewidth}{
\smallskip

\noindent
 Weighted Minimum Cost Edge Connectivity
  Augmentation
\vspace{1mm}

\begin{tabularx}{\linewidth}{lX}
\textit{Input:}&
Graph $G=(V,E)$, set of links $E^*$, integers $\ka,\pe>0$, weight function $w:E^*\rightarrow \mathbb{Z}_+$, cost function $c:E^*\rightarrow
\mathbb{R}_+\cup\{\infty\}$.\\
\textit{Find:}& 
 minimum cost  link set $F\subseteq E^*$ such that $w(F)\le
\pe$ and $(V,E\cup F)$ is \ka-edge-connected.
\end{tabularx}
}}
\end{center}
\medskip

A problem instance is thus given by $(V,E,E^*,c,w,\ka,\pe)$. An $F\subseteq E^*$ for which $(V,E\cup F)$ is \ka-edge-connected is  called
an {\em augmenting link set}. If all weights are equal to one, we
simply refer to the problem as  {\em Minimum Cost Edge Connectivity
  Augmentation}. 

An edge between $x,y\in V$ will be denoted as $xy$.  For a link $f$,
we use $f=(x,y)$ if it is a link between $x$ and $y$; note
that there might be several links between the same nodes with
different weights.  We may ignore all links of weight $>\pe$. If for a
pair of nodes $u,v\in V$, there are two links $e$ and $f$ between $u$
and $v$ such that $c(e)\le c(f)$ and $w(e)\le w(f)$, then we may also
ignore the link $f$. It is convenient to assume that for every value
$1\le t\le \pe$ and every two nodes $u,v\in V$, there is exactly one
link $e$ between $u$ and $v$ with $w(e)=t$ (if there is no such link
in the input $E^*$, we can add one of cost $\infty$). This $e$ will be
referred to as the {\em $t$-link between $u$ and $v$}. With this
convention, we will assume that $E^*$ consists of exactly \pe{} copies
of ${V\choose 2}$: a $t$-link between any two nodes $u,v\in V$ for
every $1\le t\le \pe$. However, in the input links of infinite cost
should not be listed. (We avoid the discussion of exactly how the links are
represented in the input: as we express the size of the kernel in 
terms of the number of nodes/edges/links, the exact representation does not matter for our
results.)

As defined above, an optimal solution  to {\em Weighted Minimum Cost Edge Connectivity
  Augmentation} does not allow using the same link in $E^*$
twice. Motivated by the original (unweighted) problem, a natural
further restriction is to forbid using multiple links (of possibly
different weights) between the same two nodes $u$ and $v$. If the
input graph is already $(k-1)$-edge-connected, neither of these
restrictions makes a
difference, since given an augmenting edge set, deleting all but one
links from a parallel bundle is still an augmenting edge
set. In Section~\ref{sec:augm-arbitr-appendix} we investigate the
problem of augmenting an arbitrary (possibly disconnected) graph to 2-edge-connected, where
using parallel links may result in a cheaper solution. We first solve
here the problem with allowing multiple copies of the same link,
and in Section~\ref{sec:no-parallel}, we show how the problem can be
solved if parallel links are forbidden.

For a set $S\subseteq V$, by $G/S$ we mean the contraction of
$S$ to a single node $s$. That is, the node set of the contracted graph is $(V-S)\cup \{s\}$,
and every edge $uv$ with $u\notin S$, $v\in S$ is replaced by an edge $us$ (possibly creating parallel edges); edges inside $S$ are removed.
Note that $S$ is not assumed to be connected.
 We also contract the links to $E^*/S$ accordingly. If multiple $t$-links are
created between $s$ and another node, we keep only one with
minimum cost. 

 We say that two nodes $x$ and $y$ are {\em \ka{}-inseparable} if
there is no $x\bar y$-set $X$ with $d(X)<\ka$. By Menger's theorem, this is equivalent to the existence of \ka{} edge-disjoint paths between $x$ and $y$; this property can be tested in polynomial time by a max flow-min cut computation.
Let us say that the node set $S\subseteq V$ is {\em \ka{}-inseparable}
if any two nodes $x,y\in S$ are \ka{}-inseparable. It is easy to
verify that being \ka{}-inseparable is an equivalence relation.\footnote{To see
transitivity, observe that if $x$ and $y$ are \ka{}-inseparable and
$y$ and $z$ are \ka{}-inseparable, then a cut $X$ separating $x$ and
$z$ would either separate $x$ and $y$, or $y$ and $z$, a
contradiction.}  The maximal \ka{}-inseparable sets hence give a
partition of the node set $V$. The following proposition provides us
with a preprocessing step that can be used to simplify the instance:

\begin{proposition}\label{prop:contract}
For a problem instance $(V,E,E^*,c,w,\ka,\pe)$, let $S\subseteq V$
be a \ka{}-inseparable set of nodes.
Let us consider the instance obtained by the contraction of $S$.
Assume $\bar F\subseteq E^*/S$ is an
optimal solution to the contracted problem. Then the pre-image of
$\bar F$ in $E^*$ is an optimal solution to the original problem.
\end{proposition}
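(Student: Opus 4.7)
The plan is to establish a bijection-style correspondence between cuts and augmenting link sets of the original instance and those of the contracted instance, exploiting that $S$ being $\ka$-inseparable forbids any small cut from separating $S$. The argument has two parts: a feasibility direction (the pre-image $F$ of $\bar F$ is an augmenting link set in the original instance, with cost $c(F)=c(\bar F)$) and an optimality direction (any cheaper augmenting link set $F^*$ for the original instance would contract to a cheaper feasible solution $F^*/S$ for the contracted instance, contradicting optimality of $\bar F$).

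First I would set up the cut correspondence. For every cut $\emptyset\neq \bar X\subsetneq (V\setminus S)\cup\{s\}$ of the contracted graph, define its \emph{lift} $X\subseteq V$ by $X=\bar X$ if $s\notin \bar X$, and $X=(\bar X\setminus\{s\})\cup S$ otherwise. By construction $X$ does not separate $S$, and since contraction only merges the endpoints in $S$ without creating or destroying edges with exactly one endpoint in $S$, we have $d_E(X)=d_{E/S}(\bar X)$ and, for any link set $F\subseteq E^*$ containing no link inside $S$, $d_F(X)=d_{F/S}(\bar X)$. Conversely, if $X\subsetneq V$ is any cut with $d_E(X)<\ka$, then $X$ cannot separate two nodes of $S$ (otherwise $S$ would not be $\ka$-inseparable), so $S\subseteq X$ or $S\subseteq V\setminus X$, and $X$ is the lift of a well-defined cut $\bar X$ in the contracted graph.

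For the feasibility direction, let $F\subseteq E^*$ be the pre-image of $\bar F$ (for each link in $\bar F$ incident to $s$, take the unique cheapest original $t$-link realizing it; for links with both endpoints outside $S$, take the link itself). No link in $F$ lies inside $S$, so the cut correspondence yields $d_{E\cup F}(X)=d_{E/S\cup \bar F}(\bar X)\ge \ka$ for every cut $X$ whose image $\bar X$ is a proper nonempty subset of the contracted node set; any remaining cut $X$ with $d_E(X)\ge\ka$ is already large enough, and by the $\ka$-inseparability of $S$, no cut of $E$ with $d_E(X)<\ka$ crosses $S$, so all cuts are covered. Thus $(V,E\cup F)$ is $\ka$-edge-connected, and $w(F)=w(\bar F)\le \pe$, $c(F)=c(\bar F)$ by construction of $E^*/S$.

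For optimality, suppose $F'\subseteq E^*$ is feasible for the original instance with $c(F')<c(F)$. Any link of $F'$ lying strictly inside $S$ can be discarded without destroying $\ka$-edge-connectivity: since $S$ is $\ka$-inseparable in $(V,E)$ already, every small cut of $(V,E\cup F')$ must arise from a cut $X$ with $S\subseteq X$ or $S\subseteq V\setminus X$, and such $X$ receives no contribution from links internal to $S$. Define $\bar F'=F'/S\subseteq E^*/S$, keeping in each parallel bundle of $t$-links between $s$ and a fixed node only the cheapest representative. By the cut correspondence in the reverse direction, $(V/S,E/S\cup \bar F')$ is $\ka$-edge-connected; moreover $c(\bar F')\le c(F')$ and $w(\bar F')\le w(F')\le \pe$ because merging parallel $t$-links only decreases both cost and weight. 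Hence $\bar F'$ is feasible for the contracted instance with $c(\bar F')\le c(F')<c(F)=c(\bar F)$, contradicting the optimality of $\bar F$. The only mild technical point, and the step I would present most carefully, is checking that the contraction of $F'$ is well-defined with respect to the ``one $t$-link per pair'' convention and that the resulting weight is bounded by $w(F')$, which follows directly since weight is additive over links and deleting duplicates can only decrease it.
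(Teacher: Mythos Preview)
Your proof is correct and follows essentially the same approach as the paper: both hinge on the observation that $\ka$-inseparability of $S$ forces every cut of size less than $\ka$ to lie entirely on one side of $S$, yielding a bijection between such cuts in $G$ and in $G/S$, from which feasibility and optimality transfer immediately. The paper compresses this into a terse biconditional (``$F$ is augmenting iff its image $\bar F$ is augmenting'') and leaves the cost and weight bookkeeping implicit, whereas you spell out the pre-image/image constructions and the handling of the $t$-link convention explicitly; this extra care is sound but not a different argument.
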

\fi\ifappendix
\begin{proof}
  We claim that for a link set $F\subseteq E^*$, $(V,E\cup F)$ is
  \ka{}-edge-connected if and only if adding the image $\bar F$ of $F$
  to the contracted graph is \ka{}-edge-connected.  It is
  straightforward that if $F$ is an augmenting link set, then so is
  $\bar F$.  Conversely, assume for a contradiction that $\bar F$ is
  an augmenting link set but $F$ is not.  This means that there exists
  a set $X\subseteq V$ with $d_E(X)+d_F(X)<\ka$.  Since $S$ is
  \ka{}-inseparable, either $S\subseteq X$ or $S\cap X=\emptyset$.
  This implies that under the contraction the image of $X$ will
  violate \ka{}-edge-connectivity in the augmented graph, a
  contradiction.  
\end{proof}
\fi\ifmain

Note that contracting a
\ka{}-inseparable set $S$ does not affect whether $x,y\not\in S$ are
\ka{}-inseparable.  Thus by Proposition~\ref{prop:contract}, we can
simplify the instance by contracting each class of the partition given
by the \ka{}-inseparable relation. Observe that after such a
contraction, there are no longer any \ka{}-inseparable pair of
nodes any more. Thus we may assume in our algorithms that every
pair of nodes can be separated by a cut of size smaller than $k$.

\fi

\ifmain
\section{Augmenting edge connectivity by one}
\label{sec:augm-edge-conn}

\subsection{Metric instances}\label{sec:metric}
The following notions will be used for augmenting  edge-connectivity from 1 to 2 and
from 2 to 3. We formulate them here in a generic
way.
Assume the input graph is $(\ka-1)$-edge-connected.  Let $\cal D$
denote the set of all minimum cuts, represented by the node sets. That
is, $X\in {\cal D}$ if and only if $d(X)=\ka-1$. Note that, by the minimality of the cut, both $X$ and $V\setminus X$ induce connected graphs if $X\in {\cal D}$. For a link
$e=(u,v)\in E^*$, let us define ${\cal D}(e)\subseteq {\cal D}$ as the
subset of minimum cuts {\em covered} by $e$.  That is, $X\in {\cal D}$
is in ${\cal D}(e)$ if and only if $X$ is an $u\bar{v}$-set or a
$v\bar{u}$-set. Clearly, augmenting edge-connectivity by one is equivalent to
covering all the minimum cuts of the graph.
\begin{proposition}\label{prop:s-ec}
Assume $(V,E)$ is $(\ka{}-1)$-edge-connected. Then $(V,E\cup F)$ is
\ka{}-edge-connected if and only if $\cup_{e\in F} {\cal D}(e)={\cal D}$.
\end{proposition}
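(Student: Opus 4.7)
The plan is to show that the statement reduces to a cut-counting identity: adding $F$ to $E$ increases the size of every minimum cut by at least one precisely when each minimum cut is crossed by at least one link of $F$. Since $(V,E)$ is $(k-1)$-edge-connected, we have $d_E(X)\ge k-1$ for every nonempty proper $X\subsetneq V$, with equality exactly when $X\in\mathcal{D}$. The augmented graph is $k$-edge-connected iff $d_{E\cup F}(X)=d_E(X)+d_F(X)\ge k$ for all such $X$.

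For the forward direction, suppose $(V,E\cup F)$ is $k$-edge-connected and fix $X\in\mathcal{D}$. Then $d_E(X)=k-1$, so $d_F(X)\ge 1$, which means there is a link $e=(u,v)\in F$ with $u\in X$ and $v\in V\setminus X$. By the definition of $\mathcal{D}(e)$, this is exactly the statement that $X\in\mathcal{D}(e)$, so $X\in\bigcup_{e\in F}\mathcal{D}(e)$.

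For the backward direction, assume $\bigcup_{e\in F}\mathcal{D}(e)=\mathcal{D}$ and take an arbitrary nonempty proper $X\subsetneq V$. If $X\notin\mathcal{D}$, then $d_E(X)\ge k$, so $d_{E\cup F}(X)\ge k$ regardless of $F$. If $X\in\mathcal{D}$, then by hypothesis some $e\in F$ satisfies $X\in\mathcal{D}(e)$, so $e$ has its endpoints on opposite sides of $X$, contributing $d_F(X)\ge 1$. Combined with $d_E(X)=k-1$, this yields $d_{E\cup F}(X)\ge k$. Thus every cut has size at least $k$ in $(V,E\cup F)$, proving $k$-edge-connectivity.

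There is essentially no obstacle here; the proposition is a direct unpacking of the definitions once one observes that only minimum cuts of $(V,E)$ are at risk of failing the $k$-edge-connectivity inequality after augmentation, and that a link covers such a cut (in the sense of $\mathcal{D}(e)$) exactly when it contributes to $d_F(X)$. The only subtle point worth stating explicitly is that $d_E(X)\ge k-1$ holds for every nonempty proper $X$, which is just the definition of $(k-1)$-edge-connectivity and is what makes the covering reformulation equivalent to the connectivity condition.
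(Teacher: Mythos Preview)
Your proof is correct and is exactly the natural argument from the definitions. The paper states this proposition without proof, treating it as an immediate observation; your write-up makes explicit precisely the cut-counting identity the authors are relying on.
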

The following definition identifies the class of metric instances that plays a key role in our algorithm.
%
\begin{definition}
We say that the link $f$ is a {\em shadow} of link
$e$, if $w(f)\ge w(e)$ and ${\cal D}(f)\subseteq {\cal D}(e)$.
The instance $(V,E,E^*,c,w,\ka,\pe)$ is {\em metric}, if 
\begin{enumerate}[(i)]
\item $c(f)\le c(e)$  holds whenever the link $f$ is a shadow of link $e$.
\item Consider three links $e=(u,v)$, $f=(v,z)$ and $h=(u,z)$ with
  $w(h)\ge w(e)+w(f)$. Then $c(h)\le c(e)+c(f)$.
\end{enumerate}
\end{definition}

Whereas the input instance may not be metric, we can create its metric
completion with the following simple subroutine.
Let us call the inequalities in  {\em(i) shadow inequalities} and those in {\em (ii)  triangle inequalities}. Let us 
define  the {\em rank} of the inequality $c(f)\le c(e)$  to be $w(f)$, and the rank of $c(h)\le c(e)+c(f)$ to  be $w(h)$.
By {\em fixing} the triangle inequality $c(h)>c(e)+c(f)$, we mean decreasing the value of $c(h)$ to $c(e)+c(f)$.

\begin{figure}[t]
\fbox{\parbox{\textwidth}{
\begin{tabbing}
xxxxx \= xxx \= xxx \= xxx \= xxx \= \kill
\> \textbf{Subroutine} \textsc{Metric-Completion}$(c)$\\
\> \textbf{for} $t=1,2,\ldots,p$ \textbf{do} \\
\> \> \textbf{for} every 3 links  $e=(u,v)$, $f=(v,z)$, $h=(u,z)$ with
  $w(h)=t\ge w(e)+w(f)$ \textbf{do}\\
\> \> \> $c(h)\leftarrow \min\{c(h),c(e)+c(f)\}$\\
\> \> \textbf{for} every link $f$ with $w(f)=t$ \textbf{do}\\
\> \> \> $c(f)\leftarrow \min\{c(e): f\mbox{ is a shadow of }e\}$.
\end{tabbing}
}}\caption{The algorithm for computing the metric completion}\label{fig:completion-alg}
\end{figure}

The subroutine {\sc Metric-Completion$(c)$} (see
Figure~\ref{fig:completion-alg}) consists of $p$ iterations, one for
each $t=1,2,\ldots,p$. In the $t$'th iteration, first all triangle
inequalities of rank $t$ are taken in an arbitrary order, and the
violated ones are fixed. Then for every $t$-link $f$, we decrease
$c(f)$ to the minimum cost of links $e$ such that $f$ is a shadow of
$e$. Note that we perform these steps one after the other for every
violated inequality: in each step, we decrease the cost of a single
link $f$ only (this will be important in the analysis of the
algorithm).
The first part of iteration 1 is void as there are no rank 1 triangle inequalities. The subroutine can be implemented in polynomial time: the number of triangle inequalities is $O(p^3n^3)$, and they can be efficiently listed; furthermore, every link is the shadow of $O(pn^2)$ other ones.

\begin{lemma}\label{lem:metric-new}
Consider a problem instance $(V,E,E^*,c,w,\ka,\pe)$ with the graph $(V,E)$ being $(\ka-1)$-edge-connected.
{\sc Metric-Completion$(c)$} returns a metric cost function $\bar c$ with $\bar{c}(e)\le c(e)$ for every link $e\in E^*$. 
Moreover, if for a link set $\bar F\subseteq E^*$, graph $(V,E\cup \bar F)$ is
\ka{}-edge-connected, then there exists an  $F\subseteq E^*$
such that $(V,E\cup F)$ is \ka{}-edge-connected, $c(F)\le \bar{c}(\bar F)$,
and $w(F)\le w(\bar F)$. Consequently, an optimal solution for $\bar{c}$
provides an optimal solution for $c$.
\end{lemma}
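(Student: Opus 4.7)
The plan is to prove the three claims of the lemma in turn. The inequality $\bar c(e) \le c(e)$ is immediate, since every assignment in \textsc{Metric-Completion} is a $\min$ with the current value, so costs are monotonically non-increasing.

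To verify that $\bar c$ is metric, I would examine the iterations in order. For a triangle inequality $c(h) \le c(e) + c(f)$ of rank $t = w(h) \ge w(e) + w(f)$, the weights $w(e), w(f)$ are strictly less than $t$, so at the start of iteration $t$ the costs $\bar c(e), \bar c(f)$ are already final (nothing in iterations $\ge t$ touches links of smaller weight). The triangle step of iteration $t$ directly enforces $c(h) \le \bar c(e) + \bar c(f)$, and subsequent decreases of $c(h)$ in the same iteration only strengthen the inequality. For a shadow inequality $c(f) \le c(e)$ with $f$ a shadow of $e$, the only subtle case is $w(f) = w(e) = t$, where both are reassigned in the same loop. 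Here transitivity of the shadow relation saves the day: any $g$ of which $e$ is a shadow is also a link of which $f$ is a shadow, so the $\min$ defining the updated $c(f)$ ranges over a superset of that defining the updated $c(e)$, giving $c(f) \le c(e)$ after the step irrespective of the order in which rank-$t$ links are processed.

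The central task is the link-replacement construction. I would maintain the following invariant throughout the execution of the subroutine: for every link $e \in E^*$ there exists a set $F_e \subseteq E^*$ with $c_0(F_e) \le c(e)$, $w(F_e) \le w(e)$, and $\bigcup_{e' \in F_e} \mathcal{D}(e') \supseteq \mathcal{D}(e)$, where $c_0$ denotes the original cost function and $c$ denotes the current one. Initially $F_e := \{e\}$ trivially satisfies all three. When a triangle update decreases $c(h)$ via $h = (u,z)$, $e = (u,v)$, $f = (v,z)$, I set $F_h := F_e \cup F_f$; the weight bound uses $w(e)+w(f) \le w(h)$, the cost bound uses additivity together with the new value $c(h) = c(e) + c(f)$, and the coverage uses the observation that any minimum cut separating $u$ from $z$ must place $v$ together with one of them, so $\mathcal{D}(h) \subseteq \mathcal{D}(e) \cup \mathcal{D}(f)$. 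When a shadow update decreases $c(f)$ to $c(e^*)$ for the argmin $e^*$, I set $F_f := F_{e^*}$; the three conditions follow directly from $f$ being a shadow of $e^*$ (giving $w(e^*) \le w(f)$ and $\mathcal{D}(f) \subseteq \mathcal{D}(e^*)$) and the inductive hypothesis on $F_{e^*}$.

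At termination, given an augmenting set $\bar F$ for $\bar c$, taking $F := \bigcup_{\bar e \in \bar F} F_{\bar e}$ (as a set; duplicates only reduce cost and weight) yields $c_0(F) \le \bar c(\bar F)$ and $w(F) \le w(\bar F)$. The union of $\mathcal{D}(e')$ over $e' \in F$ contains $\bigcup_{\bar e \in \bar F} \mathcal{D}(\bar e) = \mathcal{D}$, so by Proposition~\ref{prop:s-ec}, $(V, E \cup F)$ is $\ka$-edge-connected. The final optimality statement is then a two-line consequence: $\bar c \le c$ gives one inequality on the optima, and applying the replacement to an optimum $\bar F^*$ for $\bar c$ gives the other. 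The main obstacle is pinpointing this invariant linking intermediate costs of the subroutine to concrete original link sets, and in particular verifying the $\mathcal{D}$-coverage at triangle updates via the positional case analysis of $v$ relative to a minimum cut separating $u$ and $z$; once the invariant is in hand, both the metric check and the replacement are routine inductions on subroutine steps.
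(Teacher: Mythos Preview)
Your proposal is correct and follows essentially the same approach as the paper. Both verify the metric property by observing that weights $w(e),w(f)<t$ make their costs final before iteration $t$, and both use transitivity of the shadow relation for the $w(f)=w(e)=t$ case; for the replacement claim, the paper argues one modification step at a time (swap $h$ for $\{e,f\}$ at a triangle fix, swap $f$ for $e$ at a shadow update) and composes, whereas you package the same induction as an explicit invariant $F_e$ maintained throughout---these are the same argument in different dress.
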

\iffull \else 
The proof (deferred to the Appendix) proceeds by showing that after iteration $t$, all rank $t$ inequalities are satisfied and they remain satisfied later on.
\fi
\fi\ifappendix
\begin{proof}
Inequality $\bar c(e)\le c(e)$ clearly holds for all links since the algorithm only decreases the costs.
To verify the metric property, we prove that at the end of iteration $t$, all rank $t$ inequalities are satisfied.
This implies that the final cost function is metric, as the costs of the edges participating in rank $t$
inequalities are not modified during any later iteration.

Consider a triangle inequality with links $t=w(h)\ge w(e)+w(f)$. As $w(e),w(f)<t$, the costs of $e$ and $f$ are not modified in iteration $t$.
After fixing this inequality if necessary, we have $c(h)\le c(e)+c(f)$. In the second part of the iteration, $c(h)$ may  only decrease. Consequently, all triangle inequalities of rank $t$ must be valid at the end of iteration $t$.

Let $\tilde c$ denote the cost function at the end of the first part
of iteration $t$, after fixing all triangle inequalities. Using the
fact that the shadow relation is transitive, it is easy
to see that the values $c(f)$ after the second part of iteration $t$ equal
\begin{equation}\label{setshadow}
c(f)=\min\{\tilde c(e): f\mbox{ is a shadow of }e\}.
\end{equation}
Consider now two links $e$ and $f$ with $f$ being a shadow of $e$, and
let $t=w(f)\ge w(e)$. We have to show $c(f)\le c(e)$ at the end of
iteration $t$.
This is straightforward if $w(e)<t$: the new value of $c(f)$ is
defined as a minimum value taken over a set containing $c(e)$; $c(e)$ itself is not modified. Assume now $w(e)=t$.
Let $h$ be the link giving the minimum in (\ref{setshadow})
for the link $e$, that is, the  new value  is  $c(e)=\tilde c(h)$ with $e$
being the shadow of $h$. Again by the transitivity of the shadow
relation, $f$ is also a shadow of $h$, and consequently, $c(f)\le \tilde c(h)=c(e)$, as required.

For the second part of the lemma, it is enough to verify the statement for the case when  $\bar c$ arises by a single modification step from $c$ (i.e., fixing 
a triangle inequality or taking a minimum).
First, assume we fixed a triangle inequality $c(h)>c(e)+c(f)$ by setting $\bar c(h)=c(e)+c(f)$ and $\bar c(g)=c(g)$ for every $g\neq h$.
Consider an edge set $\bar F$ such that  $(V,E\cup \bar F)$ is
\ka{}-edge-connected. If $h\notin \bar F$, then $F=\bar F$ satisfies the conditions. If $h\in \bar F$, then let us set $F=(\bar F\setminus\{h\})\cup \{e,f\}$. We have $c(F)\le \bar c(F)$, $w(F)\le w(\bar F)$. Furthermore, every cut covered by $h$ must be covered by either $e$ or $h$, implying that 
$(V,E\cup F)$ is also \ka{}-edge-connected.

Next, assume $\bar c(f)=c(e)$ was set for a link 
$e$ such that $f$ is a shadow of $e$, and $\bar c(g)=c(g)$ for every
$g\neq f$. Now $F=(\bar F\setminus \{f\})\cup \{e\}$ clearly satisfies the conditions: recall that by the definition of shadows, ${\cal D}(f)\subseteq {\cal D}(e)$.
\end{proof}
\fi\ifmain
The proof also provides an efficient way for transforming an augmenting link set $\bar F$ to another $F$ as in the lemma.
For this, in every step of {\sc Metric-Completion$(c)$}  we have to keep track of the inequalities responsible for cost reductions. 

By Lemma~\ref{lem:metric-new}, we may restrict our attention to metric instances. In what follows, we show how to construct a kernel for metric instances for cases $\ka=2$ and $\ka=3$. (The case $\ka=2$ could be easily reduced to $\ka=3$, but we treat it separately as it is somewhat simpler and more intuitive.)
Section~\ref{sec:by-1} then shows how the case of general \ka{} can be reduced to either of these cases depending on the parity of \ka{}.

\subsection{Augmentation from 1 to 2}\label{sec:1-to-2}
In this section, we assume that the input graph $(V,E)$ is connected.
By Proposition~\ref{prop:contract}, we may assume that it is
 a tree: after contracting all the 2-inseparable sets, there are no two nodes with two edge-disjoint paths between them, implying that there is no cycle in the graph.

 The minimum cuts are given by the edges,
 that is, ${\cal D}$ is in one-to-one correspondence with $E$.
For a link $e$ between two nodes $u,v\in V$, let $P(e)=P(u,v)$ denote the unique path between
$u$ and $v$ in this tree. Then the link $f$ is a shadow of
the link $e$ if $P(f)\subseteq P(e)$ and $w(f)\ge w(e)$. Now
Proposition~\ref{prop:s-ec} simply amounts to the following.

\begin{proposition}\label{prop:2-ec}
Graph $(V,E\cup F)$ is 2-edge-connected if and only if $\cup_{e\in F} P(e)=E$.
\end{proposition}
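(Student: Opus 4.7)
The plan is to specialize Proposition~\ref{prop:s-ec} to the case $\ka=2$ with the tree $(V,E)$ and translate the abstract statement about minimum cuts into the concrete statement about tree paths.

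First I would establish a bijection between $E$ and $\mathcal D$. Since $(V,E)$ is a tree, removing any edge $xy\in E$ splits $V$ into exactly two components, producing a cut $X\subsetneq V$ with $d_E(X)=1$. Conversely, if $X\in\mathcal D$, then $d_E(X)=\ka-1=1$, and the unique edge crossing the cut $X$ must be a bridge of the tree, hence every edge of $E$; and $X$ is one of the two components obtained by deleting that edge. This gives a one-to-one correspondence $xy\mapsto X_{xy}$ between $E$ and $\mathcal D$.

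Next I would identify $\mathcal D(e)$ for a link $e=(u,v)\in E^*$. By definition, the cut $X_{xy}\in\mathcal D$ lies in $\mathcal D(e)$ iff $X_{xy}$ separates $u$ from $v$, i.e., iff $u$ and $v$ lie in different components of $(V, E\setminus\{xy\})$. Since the tree contains a unique $u$--$v$ path $P(e)$, this happens precisely when $xy\in P(e)$. Hence under the bijection above, $\mathcal D(e)$ corresponds exactly to the edge set $P(e)\subseteq E$.

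Combining these two observations, $\bigcup_{e\in F}\mathcal D(e)=\mathcal D$ is equivalent to $\bigcup_{e\in F} P(e)=E$. An application of Proposition~\ref{prop:s-ec} with $\ka=2$ (which requires $(V,E)$ to be $(\ka-1)=1$-edge-connected, i.e., connected — indeed it is a tree) finishes the proof. There is no substantive obstacle here: the entire statement is a direct translation of Proposition~\ref{prop:s-ec} through the tree-cut bijection.
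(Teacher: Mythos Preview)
Your proposal is correct and matches the paper's intended argument: the paper does not give a separate proof for this proposition, merely stating that Proposition~\ref{prop:s-ec} ``simply amounts to'' it, and your write-up spells out precisely that specialization via the edge--cut bijection in the tree. (Minor wording slip: ``hence every edge of $E$'' should read ``hence an edge of $E$''.)
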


Based on Lemma~\ref{lem:metric-new}, it
suffices to solve the problem assuming that the instance
$(V,E,E^*,c,w,2,\pe)$ is metric. The main observation is that in a
metric instance we only need to use  links that connect certain special
nodes, whose number we can bound by a function of $p$.

Let us refer to the leaves and nodes of degree at least 3 as {\em
  corner nodes}; let $R\subseteq V$ denote their set.  Every leaf in
the tree $(V,E)$ requires at least one incident edge in
$F$. If the number of leaves is greater than $2\pe$, 
we may conclude that the problem is infeasible. (Formally, in this case we may return the following kernel: 
a single edge as the input graph with an empty link set.)
  If there are at most $2\pe$ leaves, then $|R|\le
4\pe-2$, due to the following simple fact.
\begin{proposition}\label{prop:deg-3}
The number of nodes of degree at least 3 in a tree is at most the
number of leaves minus 2.
\end{proposition}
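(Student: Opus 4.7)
The plan is to give a short counting argument based on the handshake lemma applied to a tree. Let $T=(V,E)$ be the tree in question, write $n=|V|$, and partition $V$ into leaves (degree $1$), degree-$2$ nodes, and nodes of degree at least $3$; denote the sizes of these classes by $\ell$, $m$, and $h$ respectively, so that $n=\ell+m+h$.

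First I would use the fact that a tree on $n$ nodes has exactly $n-1$ edges, hence by the handshake lemma the sum of all degrees equals $2(n-1)$. Next I would bound this sum from below using the partition: each leaf contributes at least $1$, each degree-$2$ node contributes exactly $2$, and each high-degree node contributes at least $3$, giving
\[
\ell + 2m + 3h \;\le\; \sum_{v\in V}\deg(v) \;=\; 2(n-1) \;=\; 2\ell+2m+2h-2.
\]
Rearranging yields $h\le \ell-2$, which is exactly the claim.

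There is no real obstacle here: the statement is a classical degree-counting inequality for trees, and the only thing to be careful about is handling degenerate small cases (e.g.\ a tree consisting of a single edge has $\ell=2$ and $h=0$, consistent with the bound; the convention that an isolated vertex is a leaf does not arise since we are inside a $1$-edge-connected component). The argument above works uniformly for every tree with at least one edge.
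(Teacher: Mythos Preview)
Your proof is correct. The paper actually states this proposition as a ``simple fact'' without giving any proof at all, so there is nothing to compare against; your handshake-lemma counting argument is the standard way to establish it and is entirely sound.
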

 
Based on the following theorem, we can obtain a kernel on at most
$4\pe-2$ nodes by contracting each path of degree-2 nodes to a single
edge. The number of links in the kernel will be $O(\pe^3)$: there are $O(\pe^2)$ possible edges and $\pe$ possible weights for each edge.

\begin{figure}
{\centering
\includegraphics[width=0.8\linewidth]{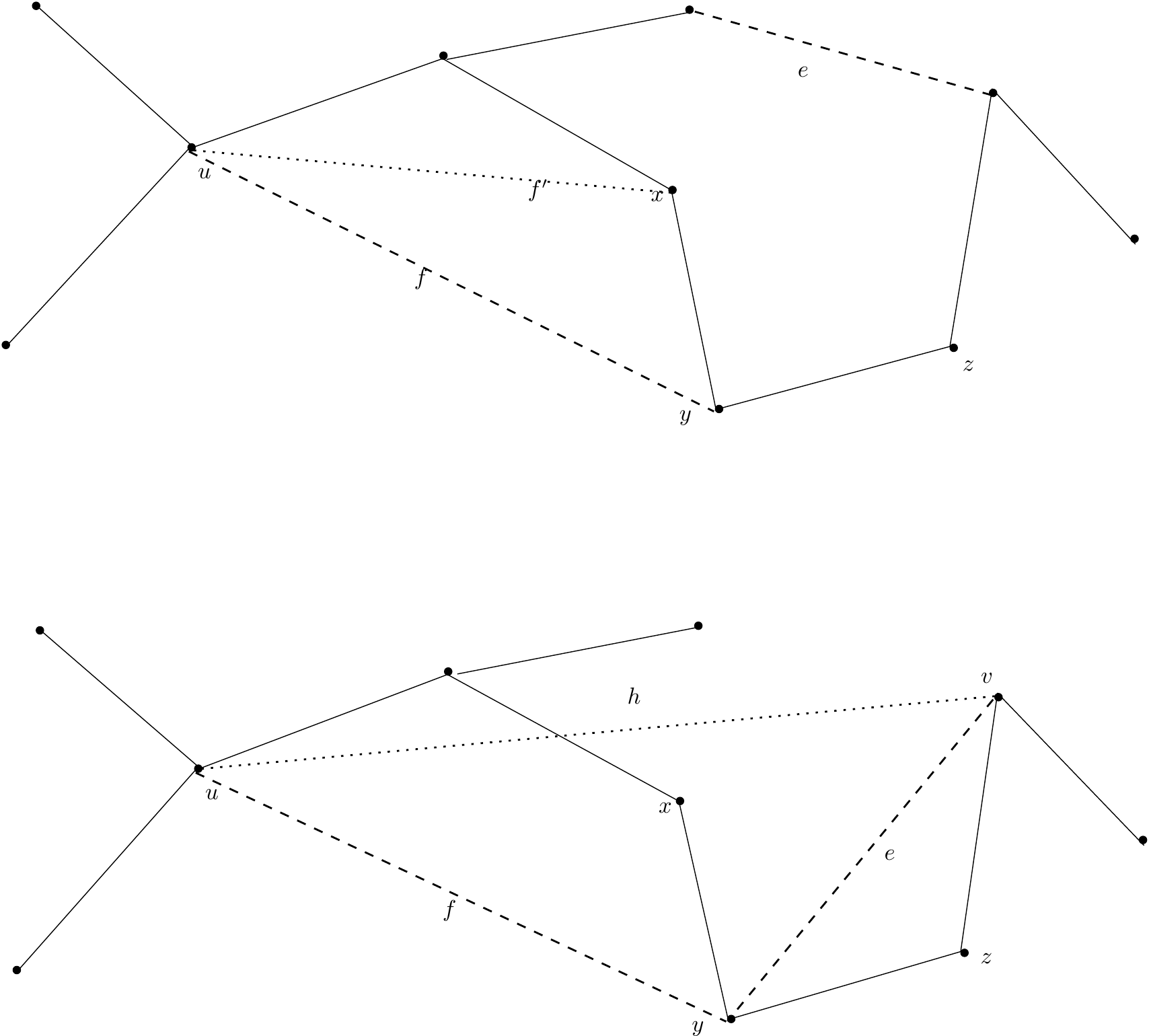}
\caption{Illustration of Cases I and II in the proof of Theorem~\ref{thm:degree-3}.}\label{fig:proof1}
}
\end{figure}

\begin{theorem}\label{thm:degree-3}
For a metric instance $(V,E,E^*,c,w,2,\pe)$, there exists an optimal
solution $F$ such that every edge in $F$ is only incident to corner nodes.
\end{theorem}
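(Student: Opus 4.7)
The plan is to start from an optimal solution $F$ chosen to minimize, among all optima, the potential $\Phi(F)=|F|+\sum_{e\in F}\ell(e)$, where $\ell(e)$ denotes the number of tree edges in the path $P(e)$. I will then show that if any link of $F$ has a non-corner endpoint, one can exhibit another optimal $F'$ with $\Phi(F')<\Phi(F)$, contradicting the choice of $F$. The point of including both $|F|$ and the length sum is that the two cases below each decrease exactly one of the two terms.

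Suppose for contradiction $e=(u,v)\in F$ with $u\notin R$. Since $u$ is a degree-$2$ node, it has exactly two tree neighbors $u_1,u_2$, and WLOG $u_2$ lies on $P(u,v)$ while $u_1$ does not. Because $P(e)$ does not contain the edge $uu_1$ but this edge must be covered (by Proposition~\ref{prop:2-ec}), some $f\in F\setminus\{e\}$ covers $uu_1$. The path $P(f)$ enters $u$ through $uu_1$, and I split into the two cases of Figure~\ref{fig:proof1}.

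Case~I: $P(f)$ terminates at $u$, so $f=(u,v_1)$ with $v_1$ on the $u_1$-side. I will replace the pair $\{e,f\}$ by the single link $h=(v,v_1)$ of weight $w(h)=w(e)+w(f)$; note $w(h)\le w(F)\le p$, so $h$ exists in the instance. The triangle inequality (ii) gives $c(h)\le c(e)+c(f)$, and $P(h)=P(e)\cup P(f)$ as edge sets, so $F'=(F\setminus\{e,f\})\cup\{h\}$ covers everything $F$ does. Hence $F'$ is optimal and $\Phi(F')=\Phi(F)-1$ (the $|F|$ term drops by one, the $\ell$-sum is unchanged). Case~II: $P(f)$ passes through $u$, so it also covers $uu_2$. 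I will replace $e$ by $e'=(u_2,v)$ with $w(e')=w(e)$; since $P(e')\subsetneq P(e)$ and $w(e')\ge w(e)$, $e'$ is a shadow of $e$, hence $c(e')\le c(e)$ by (i). The edge $uu_2$ dropped from $P(e)$ is still covered by $f$, so $F'=(F\setminus\{e\})\cup\{e'\}$ is a valid augmenting set, optimal, and $\Phi(F')=\Phi(F)-1$ (the $\ell$-sum drops by one, $|F|$ unchanged).

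The main obstacle I expect is handling a few degenerate sub-cases cleanly: $h$ (or $e'$) might already belong to $F$, or one might have $u_2=v$ in Case~II, so that the "new" link should actually be omitted. In each such situation I would simply delete the offending link(s) from $F$ without substituting anything; the resulting $F'$ still covers every tree edge (by the same argument using $f$ or the already-present copy of $h$), is still optimal, and has strictly smaller $\Phi$. Once this is done, iterating the reduction until no non-corner endpoint remains yields the optimal solution required by the theorem, which can then be contracted along the degree-$2$ paths to give a kernel on the $O(p)$ corner nodes.
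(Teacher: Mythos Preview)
The proposal is correct and follows essentially the same two-case local replacement argument as the paper's proof (your Cases~I and~II are the paper's Cases~II and~I, with the roles of the two links swapped). The only cosmetic difference is that you minimize the single potential $\Phi(F)=|F|+\sum_{e}\ell(e)$, whereas the paper minimizes $|F|$ first and then $\ell(F)$ lexicographically; since the merging step leaves the $\ell$-sum unchanged and the shortening step leaves $|F|$ unchanged, either ordering produces the required contradiction.
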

\iffull \else 
The proof (see Appendix) analyses an optimal solution with the total number of links minimal, and subject to this, the total
length of the paths in the tree between the endpoints of the links minimal. Such an optimal solution may contain no links incident to degree 2 nodes.
\fi
\fi\ifappendix
\begin{proof}
For every link $f$, let $\ell(f)=|P(f)|$ denote the length of the
path in the tree between its endpoints.
Consider an optimal solution $F$ such that $|F|$ is minimal, and
subject to this, $\ell(F)=\sum_{e\in F}\ell(f)$ is minimal.
We show that no link in this set $F$ can be incident to a degree 2 node.

For a contradiction, assume that $f=(u,y)\in F$ has an endnode $y$ having degree
2 in $E$; let $x$ and $z$ denote the two neighbors of $y$, with $xy\in
P(f)$. Since $(V,E\cup F)$ is 2-edge-connected, there must be a link
$e\in F$ with $yz\in P(e)$. We distinguish two cases, as illustrated
in Figure~\ref{fig:proof1}.

{\bf Case I.}  $xy\in P(e)$. In this case, we may replace the link
$f=(u,y)$ by a link $f'=(u,x)$ with $w(f')=w(f)$. By property (i) of metric instances, 
we have  $c(f')\le c(f)$
as $f'$ is a shadow of
$f$. By Proposition~\ref{prop:2-ec}, $(V,E\cup F') $ is still
2-edge-connected for the resulting solution $F'$, yet $c(F')\le c(F)$
and $\ell(F')<\ell(F)$, a contradiction to the choice of $F$.

{\bf Case II.} $xy\notin P(e)$. This is only possible if $e$ is
incident to $y$, say $e=(y,v)$. For $t=w(f)+w(e)$, consider the $t$-link
$h$ between $u$ and $v$. By property (ii), $c(h)\le
c(f)+c(e)$. Furthermore, $P(h)=P(f)\cup P(e)$. For the resulting
solution $F'$, graph
$(V,E\cup F')$ is 2-edge-connected, $c(F')\le c(F)$ and $|F'|<|F|$, a
contradiction again.
\end{proof}
\fi\ifmain

\subsection{Augmentation from 2 to 3}\label{sec:2-to-3}
In this section, we assume that the input graph is 2-edge-connected but
not 3-edge-connected.  Let us call a 2-edge-connected graph $G=(V,E)$
a {\em cactus}, if every edge belongs to exactly one circuit. This is
equivalent to saying that every block (maximal induced
2-node-connected subgraph) is a circuit (possibly of length 2, using
two parallel edges). Figure~\ref{fig:cactus} gives an example of a
cactus.

\begin{figure}
{\centering
\includegraphics[width=0.5\linewidth]{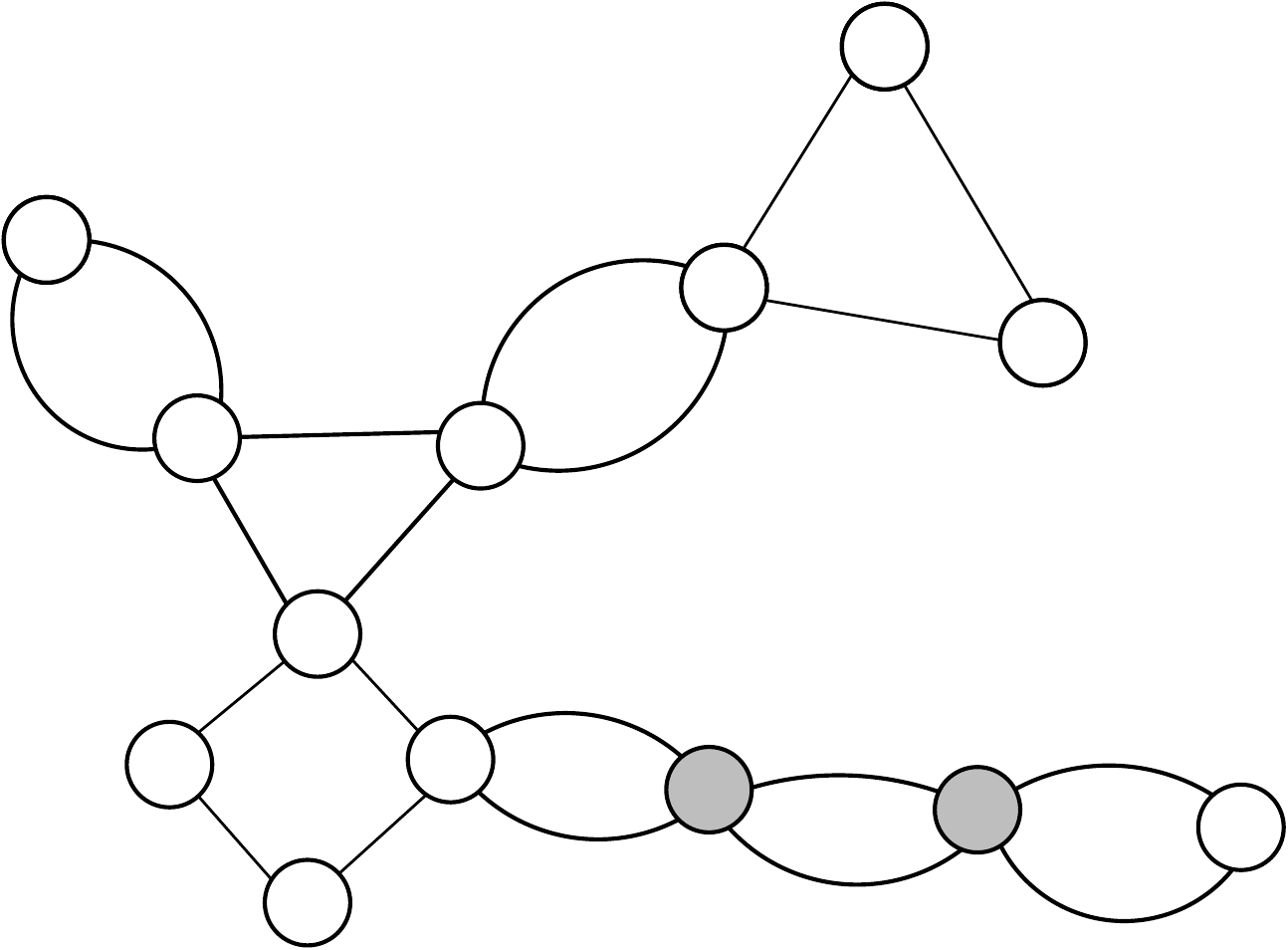}
\caption{A cactus graph. The shaded nodes are in the set $T$.}\label{fig:cactus}
}
\end{figure}

By Proposition~\ref{prop:contract}, we may assume that every
$3$-inseparable set in $G$ is a singleton, that is, there are no two nodes in
the graph connected by 3 edge-disjoint paths. 

\begin{proposition}\label{prop:cactus}
Assume that $G=(V,E)$ is a 2-edge-connected graph such that every
3-inseparable set is a singleton. Then $G$ is a cactus.
\end{proposition}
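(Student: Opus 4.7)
I will argue the contrapositive: if $G$ is 2-edge-connected but not a cactus, then it contains two distinct nodes joined by three edge-disjoint paths, i.e.\ a 3-inseparable set of size at least two.

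Since $G$ is 2-edge-connected it has no bridges, so in its block decomposition every block is either a pair of parallel edges or a 2-node-connected subgraph on at least three nodes (a single edge would be a bridge). The graph $G$ is a cactus precisely when every block is a circuit (a pair of parallel edges being the length-2 case). So if $G$ is not a cactus, there is a block $B$ that is not a circuit, and I will produce the required triple of edge-disjoint paths inside $B$.

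If $|V(B)|=2$, then $B$ being not a circuit forces it to have at least three parallel edges between its two nodes, which already gives three edge-disjoint paths. Otherwise $B$ is a 2-node-connected graph on at least three nodes that is not a simple cycle. I will then invoke the standard ear decomposition theorem for 2-node-connected graphs: $B$ can be built from a cycle $C$ by successively adding open ears, and since $B$ is not itself a cycle, at least one ear $P$ is added. Let $x,y$ be the two endpoints where the first such ear $P$ attaches to $C$; the two arcs of $C$ from $x$ to $y$ together with $P$ are three internally node-disjoint (hence edge-disjoint) $x$--$y$-paths in $B$, and therefore in $G$.

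In every case I exhibit two distinct nodes of $G$ between which three edge-disjoint paths exist, contradicting the hypothesis that every 3-inseparable set is a singleton. The only nontrivial point is the structural step for $|V(B)|\ge 3$; I expect that to be cleanest via ear decomposition, though one could also argue directly that a 2-node-connected graph with $|E(B)|>|V(B)|$ contains a theta-subgraph and extract three edge-disjoint paths from there via Menger's theorem.
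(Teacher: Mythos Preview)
Your argument is correct but takes a heavier route than the paper. The paper argues directly from the definition of a cactus: if some edge $e$ lies in two distinct circuits $C_1,C_2$, pick $f\in C_1\setminus C_2$ and extend it to a maximal subpath $P$ of $C_1$ whose only vertices on $C_2$ are its endpoints $x,y$ (the common edge $e\in C_1\cap C_2$ forces $x\neq y$); then $P$ together with the two $x$--$y$ arcs of $C_2$ gives three edge-disjoint $x$--$y$ paths. Your approach instead passes through block decomposition plus open ear decomposition, which is conceptually clean but invokes more machinery than needed---the paper's proof uses nothing beyond the definition of a circuit and is entirely self-contained. One small point worth making explicit in your version: the open ear decomposition theorem is usually stated for simple 2-connected graphs, so for a multigraph block $B$ on $\geq 3$ vertices you are implicitly using that its underlying simple graph is still 2-vertex-connected (true, since parallel edges do not affect whether a vertex is a cut vertex), after which any parallel edges can be appended as length-$1$ open ears.
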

\fi\ifappendix
\begin{proof}
  By 2-edge-connectivity, every edge must be contained in at least one
  circuit.  For a contradiction, assume there is an edge $e$ contained in two different circuits 
   $C_1$ and $C_2$. Pick an edge $f\in
   C_1\setminus C_2$, and take the maximal path $P$ in $C_1$
   containing $f$ such that the nodes incident to both $P$ and $C_2$
   are precisely the endpoints of $P$, say $x$ and $y$.
   The edge $e\in C_1\cap C_2$ guarantees the existence of such a
   path, that is, $x\neq y$.
Now there
  are three edge-disjoint paths connecting $x$ and $y$: $P$ and the
  two $x-y$ paths contained on $C_2$. This contradicts our assumption.  
\end{proof}
\fi\ifmain

In the rest of the section, we assume that $G=(V,E)$ is a cactus.
The set of minimum cuts $\cal D$ corresponds to 
arbitrary pairs of 2 edges on the same circuit. 
 We say that the node
$b$ {\em separates} the nodes $a$ and $c$, if every path between $a$ and $c$
must traverse $b$ (we allow $a=b$ or $b=c$).

\begin{proposition}\label{prop:3-shadow}
Consider links $e=(u,v)$ and $f=(x,y)$
with $w(f)\ge w(e)$. Then $f$ is a shadow of $e$ if and only if both
$x$ and $y$ separate $u$ and $v$.
\end{proposition}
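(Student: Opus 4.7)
The plan is to exploit the concrete structure of minimum cuts in the cactus $G$: every minimum cut $X$ arises by deleting two edges of a single circuit, and consequently both $G[X]$ and $G[V\setminus X]$ are connected. The weight condition $w(f)\ge w(e)$ appears on both sides of the biconditional, so in each direction I only need to match the relation ${\cal D}(f)\subseteq {\cal D}(e)$ against the separation property of $u$ and $v$.

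For the $(\Leftarrow)$ direction, assume that both $x$ and $y$ separate $u$ from $v$, and take an arbitrary $X\in{\cal D}(f)$; WLOG $x\in X$ and $y\in V\setminus X$. If $u,v$ were both in $V\setminus X$, the connectedness of $G[V\setminus X]$ would yield a $u$--$v$ path avoiding $x\in X$, contradicting that $x$ separates $u$ from $v$. Symmetrically, $u,v$ cannot both lie in $X$, because $G[X]$ is connected and any $u$--$v$ path inside $X$ would avoid $y\in V\setminus X$. Therefore $u$ and $v$ are split by $X$, i.e.\ $X\in{\cal D}(e)$.

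For the $(\Rightarrow)$ direction, I would prove the contrapositive; by the symmetric roles of $x$ and $y$ in $f$, it suffices to show that if $x$ does not separate $u$ from $v$, then some cut lies in ${\cal D}(f)\setminus {\cal D}(e)$. The hypothesis means $x\notin\{u,v\}$ and $u,v$ share some component of $G\setminus x$. If $x$ is not a cut vertex of $G$, then $x$ lies in a unique block and has degree $2$ in $G$, so its two incident edges form the min cut $\{x\}$; it is covered by $f$ (since $y\ne x$) but not by $e$ (since $u,v\ne x$). If $x$ is a cut vertex, let $V_1,\ldots,V_d$ ($d\ge 2$) be the components of $G\setminus x$, with $u,v\in V_1$, and let $B_j$ be the block at $x$ corresponding to $V_j$. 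Choose $\ell$ to be the index with $y\in V_\ell$ when $y\notin V_1$, and $\ell=1$ otherwise. The two edges of $B_\ell$ incident to $x$ form a min cut whose sides are $V_\ell$ and $\{x\}\cup\bigcup_{j\ne\ell} V_j$; a quick verification in each subcase shows that $u,v$ lie together in $V_1$ on one side, while $x$ and $y$ lie on opposite sides, producing a cut in ${\cal D}(f)\setminus {\cal D}(e)$ and a contradiction. The main technical care is in the cut-vertex case, where the choice of $\ell$ must place $y$ on the opposite side of $x$ from $\{u,v\}$; the case split on whether $y\in V_1$ handles both possibilities uniformly.
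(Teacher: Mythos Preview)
Your proof is correct and follows essentially the same approach as the paper: the $(\Leftarrow)$ direction uses connectedness of both sides of a minimum cut exactly as the paper does, and the $(\Rightarrow)$ direction produces a minimum cut by deleting the two edges of a circuit through $x$. The only difference is that you are more explicit than the paper about \emph{which} circuit at $x$ to choose so that $y$ ends up on the opposite side of the cut from $x$ (the paper simply says ``pick two edges incident to $x$ that are contained in the same cycle'' and asserts the resulting cut lies in ${\cal D}(f)$, leaving this verification implicit); your case split on whether $x$ is a cut vertex, and your choice of $\ell$ with $y\in V_\ell$, fill in precisely this detail.
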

\fi\ifappendix
\begin{proof}
To see sufficiency, assume that both $x$ and $y$ separate $u$ and $v$, and
consider an $x\bar{y}$-set $X\in {\cal D}(f)$. 
We have to show that $X\in {\cal D}(e)$, that is, one of $u$ and $v$ is in $X$ and the
other in $V\setminus X$. Indeed, assume for a contradiction that $u,v\in X$. Since
$X$ is connected, it contains a path between $u$ and $v$ avoiding $y$,
a contradiction.

For necessity, assume $x$ does not separate $u$ and $v$, that is,
there exists a path $Q$ between $u$ and $v$ not containing $x$.
Pick two edges incident to $x$ that are contained in the same cycle.
They correspond to a minimum cut $X\in {\cal D}(f)$ (they are the two edges between $X$ and $V-X$).
The path $Q$ is either entirely contained in $X$ or in $V-X$ (as it cannot traverse the edges incident to $x$), and therefore  $e=(u,v)$ cannot cover $X$.
This contradicts ${\cal D}(f)\subseteq {\cal D}(e)$.
\end{proof}
\fi\ifmain

Again by Lemma~\ref{lem:metric-new},
we may restrict our attention to metric instances. 
Let us call a circuit of length 2 a {\em 2-circuit} (that is, a set of two parallel edges between two nodes).
Let $R_1$ denote
the set of nodes of degree 2, or equivalently, the set of nodes
incident to exactly one circuit. Let $R_2$ denote the set of nodes
incident to at least 3 circuits, or at least two circuits not both
2-circuits. Let $R=R_1\cup R_2$ and let $T=V\setminus R$ denote the
set of remaining nodes, that is, the set of nodes that are incident to
precisely two circuits, both 2-circuits (see Figure~\ref{fig:cactus}). The elements of $R$ will be again called {\em corner nodes}.
We can give the following
simple bound:

\begin{proposition}\label{prop:r1-r2}
$|R_2|\le 4|R_1|-8$.
\end{proposition}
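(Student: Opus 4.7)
The plan is to apply Proposition~\ref{prop:deg-3} to an auxiliary tree built from the cactus, first to bound $|R_2^{\ge 3}|$ and then to carefully amortize $|R_2^2|$ over the edges of a contraction of that tree.

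Construct the tree $\mathcal{T}^*$ with vertex set $V \cup \{c_C : C \text{ a non-2-circuit}\}$, adding an edge from $c_C$ to each vertex of $C$ for every non-2-circuit $C$ and a single edge $uv$ for every 2-circuit $\{u,v\}$. An Euler-type count using the cactus identity $m = n + b - 1$ gives $|E(\mathcal{T}^*)| = |V(\mathcal{T}^*)| - 1$, and connectivity of $G$ yields connectivity of $\mathcal{T}^*$, so $\mathcal{T}^*$ is a tree. Each $v \in V$ has degree $d(v)$ in $\mathcal{T}^*$, so the leaves are precisely $R_1$, the vertices of $T \cup R_2^2$ are exactly the degree-$2$ vertices, and $R_2^{\ge 3}$ together with the centers $c_C$ form the degree-$\ge 3$ vertices. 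Proposition~\ref{prop:deg-3} applied to $\mathcal{T}^*$ then immediately yields
\[
|R_2^{\ge 3}| + b_3 \;\le\; |R_1| - 2,
\]
where $b_3$ denotes the number of non-2-circuits.

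For $|R_2^2|$, contract every maximal degree-$2$ path of $\mathcal{T}^*$ to obtain a tree $\mathcal{T}^{**}$ on vertex set $R_1 \cup R_2^{\ge 3} \cup \{c_C\}$ with $|R_1| + |R_2^{\ge 3}| + b_3 - 1$ edges and all internal vertices of degree $\ge 3$. Each edge of $\mathcal{T}^{**}$ expands to a maximal degree-$2$ path of $\mathcal{T}^*$, and a case analysis on the alternation of star-edges (incident to some $c_C$) with 2-circuit edges (between two vertices of $V$) along such a path shows: if an endpoint is a leaf of $R_1$, the path has all internal $V$-vertices in $T$ except possibly the one adjacent to a $c_C$-endpoint, contributing at most one $R_2^2$ Case A vertex; paths between two centers $c_C, c_{C'}$ contribute either one $R_2^2$ Case B vertex (length $2$) or two $R_2^2$ Case A vertices at the ends of a $T$-chain (length $\ge 3$); paths with both endpoints in $R_1 \cup R_2^{\ge 3}$ have all internal vertices in $T$. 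Hence every edge of $\mathcal{T}^{**}$ incident to a leaf carries at most one vertex of $R_2^2$, while every non-leaf edge carries at most two. With exactly $|R_1|$ leaf-edges and at most $|R_2^{\ge 3}| + b_3 - 1 \le |R_1| - 3$ non-leaf edges, this gives
\[
|R_2^2| \;\le\; |R_1| + 2(|R_1| - 3) \;=\; 3|R_1| - 6,
\]
and combining with the previous inequality $|R_2^{\ge 3}| \le |R_1|-2$ yields the claimed bound $|R_2| \le 4|R_1|-8$.

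The main obstacle will be the case analysis that validates the one- and two-$R_2^2$-per-edge bounds above: the key structural fact to use is that every center $c_C$ has degree $|C| \ge 3$ in $\mathcal{T}^*$ and therefore cannot appear as an internal vertex of a maximal degree-$2$ path, which forces each star-edge to occur only at the boundary of such a path and pins down the exact arrangement of $T$, $R_2^2$ Case A, and $R_2^2$ Case B vertices in its interior.
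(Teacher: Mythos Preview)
Your proof is correct and takes a genuinely different route from the paper's. The paper argues by induction on $|V|$: if every circuit is a $2$-circuit then $G$ is a doubled tree and Proposition~\ref{prop:deg-3} applies directly; otherwise one removes the edges of a circuit $C$ of length $r\ge 3$, obtaining $t$ smaller cacti (one per node of $C$ that is incident to another circuit), applies the inductive hypothesis $b_i\le 4a_i-8$ to each, and combines via the estimates $|R_2|\le \sum_i b_i+t$ and $|R_1|\ge \sum_i(a_i-1)+(r-t)$. Your global auxiliary-tree construction instead exposes the circuit-incidence structure all at once, so Proposition~\ref{prop:deg-3} bounds $|R_2^{\ge 3}|+b_3$ in a single shot and the contraction $\mathcal{T}^{**}$ lets you amortize $|R_2^2|$ over its edges. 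Your argument is more structural and avoids the bookkeeping of how $R_1/R_2$ membership shifts at the attachment vertices when a circuit is deleted; the paper's induction is shorter to write and sidesteps your path-type case analysis.

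Two small points to clean up. The degree of $v\in V$ in $\mathcal{T}^*$ equals the number of circuits through $v$, i.e.\ $d_G(v)/2$, not $d_G(v)$; your subsequent identification of leaves, degree-$2$ vertices, and degree-$\ge 3$ vertices is nonetheless correct. You should also define ``Case~A'' and ``Case~B'' explicitly (presumably: one $2$-circuit plus one long circuit, versus two long circuits). Finally, note the boundary situation $|R_2^{\ge 3}|+b_3=0$, where your non-leaf-edge count is formally $-1$; but then $b_3=0$ forces $R_2^2=\emptyset$ and $R_2=R_2^{\ge 3}=\emptyset$, so the bound holds trivially.
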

\fi\ifappendix
\begin{proof}
The proof is by induction on $|V|$. If all circuits in $G$ are 2-circuits,
that is, $G$ is created by duplicating every edge of a tree, $R_1$ corresponding to the leaves and $R_2$ to the branching nodes.
The claim follows by Proposition~\ref{prop:deg-3},
 as $|R_1|\ge 2$.
Assume now $G$ has at least one circuit $C$ of length 
$r\ge 3$, and has $t\le r$ nodes incident to other circuits.
Consider the graph after removing the edges of $C$ and the
$r-t$ isolated nodes. We obtain $t$ cacti; let $a_i$ and $b_i$ denote
the corresponding $|R_2|$ and $|R_1|$ values for $i=1,\ldots, t$.
By induction, $b_i\le 4a_i-8$ holds for each of them, giving
\begin{equation}
\sum_{i=1}^t b_i\le \sum_{i=1}^t(4a_i-8)=4\sum_{i=1}^t(a_i-1)-4t.\label{eq:induction}
\end{equation}
Observe that $|R_2|\le \sum_{i=1}^t b_i+t$, since the only nodes of $R_2$ that are possibly not accounted for in any of the smaller cacti are the $t$ nodes where these cacti are incident to $C$.  Also, $|R_1|\ge \sum_{i=1}^t (a_i-1)+r-t$, since
we remove at most one node of degree 2 from each component and add $r-t$ new ones.
Adding up the inequalities we obtain
\begin{align*}
|R_2|\le \sum_{i=1}^t b_i +t\le 4\sum_{i=1}^t
(a_i-1)-3t\\
\le 4(|R_1|+t-r)-3t=4|R_1|+t-4r\le 4|R_1|-8
\end{align*}
The second inequality holds by (\ref{eq:induction}), and the last one uses $8\le
4r-t$ that is valid since $t\le r$ and $r\ge 3$.
\end{proof}
\fi\ifmain

Observe that every
node in $R_1$ forms a singleton minimum cut.
Hence if $|R_1|> 2\pe$, we may conclude infeasibility. Otherwise,
Proposition~\ref{prop:r1-r2} gives $|R|\le 10\pe-8$.

We prove  the analogue of Theorem~\ref{thm:degree-3}: we show that it
is sufficient to consider only links incident to $R$%
 \iffull \else 
(see Theorem~\ref{thm:degree-3-3})%
\fi.  It follows that
we can obtain a kernel on at most $10\pe-8$ nodes by replacing every
path consisting of 2-circuits by a single 2-circuit. The number of
links in the kernel will again be $O(\pe^3)$. 
\fi\ifappendix

\begin{theorem}\label{thm:degree-3-3}
For a metric instance $(V,E,E^*,c,w,3,\pe)$, there exists an optimal
solution $F$ such that every edge in $F$ is only incident to corner nodes.
\end{theorem}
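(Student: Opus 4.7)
I would mirror the argument of Theorem~\ref{thm:degree-3}, working with minimum cuts in the cactus instead of tree edges. Take an optimal solution $F$ minimizing the pair $(|F|,\ell(F))$ lexicographically, where $\ell(f)$ is the distance in $G$ between the endpoints of $f$ and $\ell(F)=\sum_{f\in F}\ell(f)$. Suppose for contradiction that some link $f=(u,y)\in F$ has $y\in T$. Then $y$ lies on exactly two 2-circuits, with neighbors $x$ and $z$, and $y$ separates $V\setminus\{y\}$ into two pieces $V_x$ and $V_z$; each is a minimum cut coming from the corresponding 2-circuit. Without loss of generality $u\in V_x$, so $f$ covers $V_x$ but not $V_z$, and some other link $e\in F$ covers $V_z$.

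Case~I (analogue of ``$xy\in P(e)$'' in the tree proof): $e$ is not incident to $y$. Then $e=(a,b)$ with $a\in V_x$ and $b\in V_z$, so $e$ also covers $V_x$. Replace $f$ by the link $f':=(u,x)$ of the same weight; Proposition~\ref{prop:3-shadow} shows $f'$ is a shadow of $f$, because every simple $u$--$y$ path in $G$ must enter $y$ through its neighbor $x$ (the $V_z$ side is unreachable from $u\in V_x$ without first visiting $y$). Metric property~(i) then yields $c(f')\le c(f)$. A short case analysis of minimum cuts $X$ covered by $f$ but not $f'$, combined with the fact that both sides of every minimum cut in a cactus induce connected subgraphs, pins the only such cut down to $V_x$ itself, which is already covered by $e$. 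Hence $F\setminus\{f\}\cup\{f'\}$ is still feasible and has strictly smaller $\ell(F)$. In the degenerate subcase $u=x$ the same structural argument shows that $f$ covers no minimum cut other than $V_x$, so $f$ is already redundant and can simply be deleted, decreasing $|F|$.

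Case~II (analogue of ``$xy\notin P(e)$''): $e$ is incident to $y$, say $e=(v,y)$, and $v\in V_z$ since $e$ must cover $V_z$ while $y\notin V_z$. Apply metric property~(ii) to the triple of links $(u,y),(y,v),(u,v)$ with weights $w(f),w(e),w(f)+w(e)$, obtaining a link $h=(u,v)$ of weight $w(f)+w(e)$ with $c(h)\le c(f)+c(e)$. Replacing $\{f,e\}$ by $\{h\}$ yields a set of no greater cost and total weight. For feasibility it suffices to show that every minimum cut covered by $f$ or $e$ is covered by $h$; the only way this can fail is a minimum cut placing $y$ on one side and both of $u,v$ on the other, but the $y$-free side would then be a connected set containing $u\in V_x$ and $v\in V_z$ yet avoiding $y$, contradicting the separating role of $y$. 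This strictly decreases $|F|$, the desired contradiction.

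The main obstacle compared to the tree case is that minimum cuts in a cactus are richer than single tree edges, so both the shadow-slide and the triangle-combine manipulations must be justified via the explicit characterisation of minimum cuts (two edges of a single circuit) together with the key structural fact that both sides of any minimum cut in a cactus induce connected subgraphs. Once this tool is in place, the case analysis proceeds in exact parallel to the tree proof.
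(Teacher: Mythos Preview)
Your argument is correct and follows the same route as the paper's proof: lexicographically minimize $(|F|,\ell(F))$, then at a non-corner endpoint $y\in T$ apply either the shadow-slide (Case~I) or the triangle-merge (Case~II). The only cosmetic differences are that the paper sets $\ell(f)=|\mathcal{D}(f)|$ rather than graph distance, and phrases the case split as ``does $e$ cover the cut of $C_x$'' rather than ``is $e$ incident to $y$''; as your analysis shows, these formulations are equivalent, and your explicit treatment of the degenerate subcase $u=x$ is a detail the paper glosses over.
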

\fi\ifappendix
\begin{proof}
The proof goes along the same lines as that of
Theorem~\ref{thm:degree-3}.
For every link $f$, let $\ell(f)=|{\cal D}(f)|$.
Consider an optimal solution $F$ such that $|F|$ is minimal, and
subject to this, $\ell(F)=\sum_{e\in F}\ell(f)$ is minimal.
We show that no link in this set $F$ can be incident to a node in $T$.

For a contradiction, assume $f=(u,y)\in F$ has an endnode $y\in T$.
Node $y$ is incident to two 2-circuits; let us denote these by $C_x$ and $C_z$, with $C_x$ consisting of two parallel edges between $x$ and $y$ and $C_z$ between $y$ and $z$.
 Clearly, $f$  covers exactly
one of the corresponding two cuts. W.l.o.g. assume that the cut
corresponding to  $C_x$ is in  ${\cal D}(f)$; note that this implies that $x$ separates $u$ and $y$.
Since $(V,E\cup F)$ is 3-edge-connected, there must be a link
$e\in F$ such that the cut corresponding to  $C_z$ is in ${\cal D}(e)$.
The two cases whether the cut corresponding  to $C_x$ is in ${\cal
  D}(e)$ lead to contradictions the same way as in the proof of Theorem~\ref{thm:degree-3}, using Proposition~\ref{prop:3-shadow}.
%
%
\end{proof}
\fi\ifmain

\subsection{Augmenting edge-connectivity for higher values}\label{sec:by-1}
In this section, we assume that the input graph $G=(V,E)$ is already $(\ka{}-1)$-connected, where
\ka{} is the connectivity target. We show that for even or odd \ka{}, the
problem can be reduced to the $k=2$ or the $k=3$ case, respectively.

Assume first that \ka{} is even. We use the following simple structure theorem, which is based on the observation that if the minimum cut value in a graph is odd, then the family of minimum cuts is cross-free.
\begin{theorem}[{\cite[Thm 7.1.2]{frankkonyv}}]\label{thm:tree-structure}
Assume that the minimum cut value $\ka-1$ in the graph $G=(V,E)$ is
odd. Then there exists a tree $H=(U,L)$ along with a map
$\varphi:V\rightarrow U$ such that the min-cuts of $G$ and the edges
of $H$ are in one-to-one correspondence: for every edge $e\in L$, the
pre-images of the two components of $H-e$ are the sides of the
corresponding min-cut, and every minimum cut can be obtained this way.
\end{theorem}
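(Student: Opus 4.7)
The plan is to split the proof into two halves: first establish that when $k-1$ is odd the family $\mathcal{D}$ of minimum cuts is cross-free, meaning that for any $X, Y \in \mathcal{D}$ at least one of $X\cap Y$, $X\setminus Y$, $Y\setminus X$, $V\setminus(X\cup Y)$ is empty; then exploit the standard bijection between cross-free families on a ground set (after fixing a root) and trees on that ground set to produce the pair $(H, \varphi)$.

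For cross-freeness I would argue by contradiction. Suppose $X, Y \in \mathcal{D}$ properly cross, and denote $A=X\cap Y$, $B=X\setminus Y$, $C=Y\setminus X$, $D=V\setminus(X\cup Y)$, all non-empty. Submodularity of $d$ gives $d(A) + d(X\cup Y) \leq d(X) + d(Y) = 2(k-1)$, and since both left-hand terms are cuts of value at least $k-1$, equality must hold and a direct edge-count shows $e(B,C)=0$ (where $e(S,T)$ counts edges with one endpoint in each of $S, T$). The symmetric posimodular inequality $d(B) + d(C) \leq 2(k-1)$ likewise forces $d(B)=d(C)=k-1$ and $e(A,D)=0$. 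Substituting these zeros into the expansions $d(X) = e(A,C) + e(B,D)$, $d(A) = e(A,B) + e(A,C)$, and $d(B) = e(A,B) + e(B,D)$, all equal to $k-1$, yields $d(A) + d(B) - d(X) = 2e(A,B) = k-1$, contradicting the oddness of $k-1$.

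Given that $\mathcal{D}$ is cross-free and closed under complementation, define the equivalence relation $u \sim v$ on $V$ by the condition that no min cut separates $u$ from $v$; a cut separating $u$ and $w$ must separate at least one of the pairs $(u,v)$, $(v,w)$, so $\sim$ is transitive. Set $U := V/{\sim}$ and let $\varphi : V \to U$ be the quotient map. By the defining property of $\sim$, each $X \in \mathcal{D}$ is a union of classes, so $\varphi(X)\subseteq U$ is well defined and $\{\varphi(X),\, U\setminus\varphi(X)\}$ is an unordered bipartition of $U$. Pick any base class $r \in U$ and, for each unordered pair $\{X, V\setminus X\}$ of complementary min cuts, retain the side whose image does not contain $r$; the cross-freeness of $\mathcal{D}$ translates into laminarity of the resulting family on $U\setminus\{r\}$. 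By the standard bijection between laminar families rooted at a vertex and rooted trees on the same ground set, this laminar family determines a unique tree $H=(U,L)$ in which edges correspond one-to-one with chosen sets, and removing an edge partitions $U$ into exactly $\varphi(X)$ and $\varphi(V\setminus X)$ for the corresponding min cut $X$.

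The main technical hurdle is the parity computation—specifically, forcing both the submodular and posimodular inequalities to be tight simultaneously and then extracting the identity $2e(A,B)=k-1$ from the three balance equations. Once cross-freeness is in hand, the tree construction is routine, but two easy-to-overlook verifications should be recorded: that every $X \in \mathcal{D}$ is genuinely a union of $\sim$-classes (immediate from the definition of $\sim$), and that every edge of $H$ corresponds to a genuine min cut of $G$ rather than to a spurious cut introduced by the laminar-to-tree construction (which follows because the laminar family was chosen to record exactly one representative per unordered min cut).
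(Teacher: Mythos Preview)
The paper does not prove this theorem; it is quoted from Frank's book with only the remark that it ``is based on the observation that if the minimum cut value in a graph is odd, then the family of minimum cuts is cross-free.'' Your parity argument establishing cross-freeness is correct and is exactly the observation the paper alludes to.

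The second half of your argument, however, has a genuine gap. You set $U := V/{\sim}$ and then appeal to a ``standard bijection between laminar families rooted at a vertex and rooted trees on the same ground set.'' No such bijection exists in the generality you need: a tree on $U$ has exactly $|U|-1$ edges, so it can only represent a laminar family with $|U|-1$ members, and nothing in your construction forces the number of unordered minimum cuts to equal $|V/{\sim}|-1$. Take $G=K_4$: the minimum cut value is $3$ (odd), the minimum cuts are precisely the four singletons, and every pair of vertices is separated by some minimum cut, so $|V/{\sim}|=4$; yet there are four unordered minimum cuts and no tree on four nodes has four edges. The correct representing tree here is a star $K_{1,4}$ whose centre is a Steiner node outside the image of $\varphi$. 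The theorem permits this, since $\varphi$ is only required to be a map, not a surjection, but your construction does not. The standard repair is to build $H$ from the Hasse diagram of $\mathcal{L}\cup\{U\}$ under containment (nodes are these sets, edges are covering relations, $\varphi$ sends each $v$ to the minimal set containing its class); this tree has exactly $|\mathcal{L}|$ edges and in general contains nodes with empty $\varphi$-pre-image.
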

Note that Theorem~\ref{thm:tree-structure} {\em does not} say that $G$ is a somehow a tree with duplicated edges: it is possible $x$ and $y$ are adjacent in $G$ even if $\phi(x)$ and $\phi(y)$ are not adjacent in the tree $H$ (see Figure~\ref{fig:exampletree}).

\begin{figure}
{\centering
\includegraphics[width=0.8\linewidth]{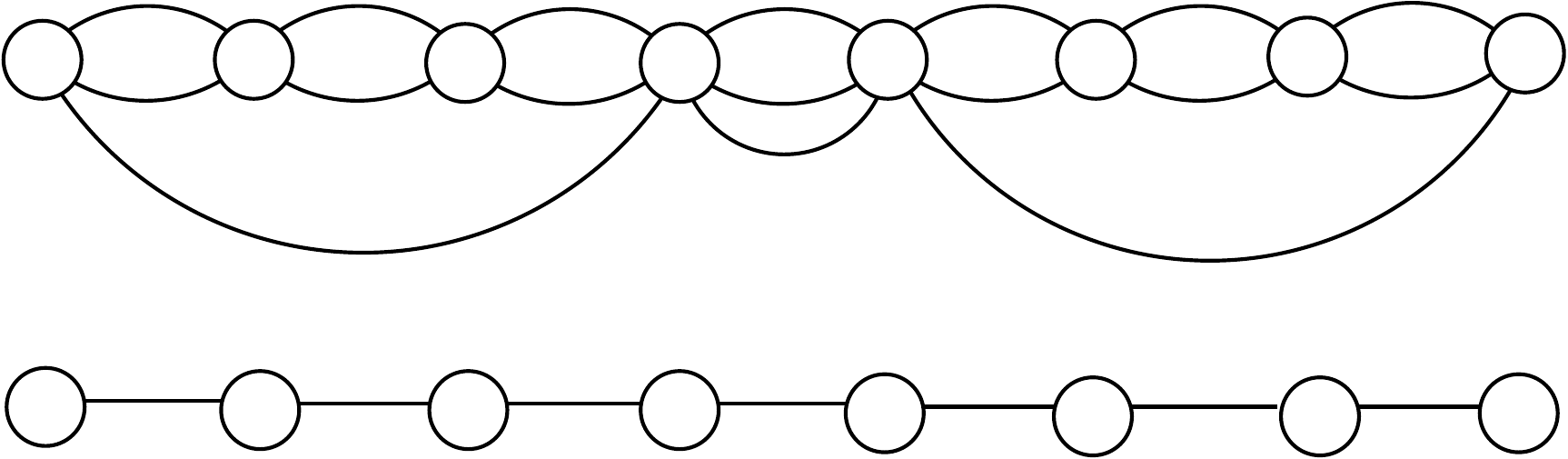}
\caption{Illustration of Theorem~\ref{thm:tree-structure} for
  $\ka=4$. The above graph is mapped to the path below with a
  bijection between the nodes. \label{fig:exampletree}}
}
\end{figure}

For even $\ka-1$, the following theorem shows that the minimum cuts can be
represented by a cactus. Note that the theorem also holds for odd $\ka-1$; however, in this case it is easy to see that the cactus arises from a tree by doubling all edges and hence obtaining Theorem~\ref{thm:tree-structure}.
\begin{theorem}[Dinits, Karzanov, Lomonosov \cite{dinits}, {\cite[Thm 7.1.8]{frankkonyv}}]\label{thm:cactus-structure}
Consider a loopless graph $G=(V,E)$ with minimum cut value $\ka-1$. Then
there exists a cactus $H=(U,L)$ along with a map $\varphi:V\rightarrow
U$ such that the min-cuts of $G$ and the edges of $H$ are in
one-to-one correspondence. That is, for every minimum cut $X\subseteq U$ of
$H$, $\varphi^{-1}(X)$ is a minimum cut in $G$, and
every minimum cut in $G$ can be obtained in this form.
\end{theorem}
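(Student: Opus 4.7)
My plan is to build the cactus $H=(U,L)$ and the map $\varphi:V\to U$ directly from the combinatorial structure of the family $\mathcal{D}$ of minimum cuts of $G$, using submodular uncrossing as the main tool. First I would define the \emph{atoms}: put $u\sim v$ whenever no minimum cut of $G$ separates $u$ from $v$. This is an equivalence relation (by a short uncrossing argument exactly analogous to the footnote on $\ka$-inseparability earlier in the paper), and I would take $U$ to be the set of equivalence classes with $\varphi$ the natural quotient. By construction every minimum cut of $G$ is a union of atoms, so it suffices to describe $\mathcal{D}$ as a family of subsets of $U$ and then realize that family as the min-cut family of some cactus.

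The key technical ingredient is that the cut-counting function $d$ is both submodular, $d(X)+d(Y)\ge d(X\cap Y)+d(X\cup Y)$, and posi-modular, $d(X)+d(Y)\ge d(X\setminus Y)+d(Y\setminus X)$. If $X,Y\in\mathcal D$ are two \emph{crossing} minimum cuts (all four "quadrant" sets $X\cap Y$, $X\cup Y$, $X\setminus Y$, $Y\setminus X$ are proper nonempty subsets of $V$), then because each has $d$-value at least $\ka-1$, both inequalities are tight. Consequently all four quadrants are themselves minimum cuts, and there are no edges between the opposite quadrants $X\cap Y$ and $V\setminus(X\cup Y)$ nor between $X\setminus Y$ and $Y\setminus X$. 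This is the workhorse fact for the rest of the proof.

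Using uncrossing I would partition $\mathcal D$ (viewed as a family of unordered bipartitions $\{X,V\setminus X\}$) by taking the transitive closure of the "crosses" relation. Two bipartitions from different classes are, by definition, mutually non-crossing, so the quotient family of these classes is laminar and can be organized as a tree. Within a single class I claim the cuts admit a cyclic arrangement: the atoms touched by the class can be placed around a cycle so that every cut of the class is obtained by cutting the cycle at exactly two positions, and every such 2-cut of the cycle corresponds to a minimum cut in $G$. Gluing the cycle coming from each crossing class into the tree at the appropriate shared atoms produces the cactus $H$; the edges of $H$ are then in bijection with the elements of $\mathcal D$ as required.

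The main obstacle is the circular-ordering claim. Building the cyclic order requires iterated uncrossing inside a single crossing class, repeatedly using the "no edges between opposite quadrants" property to certify that chains of pairwise-crossing cuts close up consistently into a single cycle rather than a more complicated structure. A separate but related difficulty is showing that the resulting cactus has \emph{exactly} $\mathcal D$ as its min-cut family (no spurious cuts introduced by the construction). The looplessness hypothesis is used here to rule out degenerate configurations in which an extra edge could break the tightness of the uncrossing bounds. Once the cyclic arrangement within each crossing class and the laminarity across classes are established, verifying the one-to-one correspondence between edges of $H$ and elements of $\mathcal D$ reduces to bookkeeping.
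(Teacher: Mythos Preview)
The paper does not prove this theorem: it is quoted as a classical result of Dinits, Karzanov, and Lomonosov (with references to \cite{dinits} and to Frank's book \cite{frankkonyv}) and is then used as a black box in the reduction of Section~\ref{sec:by-1}. There is therefore no proof in the paper to compare your proposal against.

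Viewed on its own, your outline follows the standard route to the cactus theorem (atoms, submodular/posi-modular uncrossing, circular structure inside each crossing family, laminar gluing of the families). Two points deserve care. First, your conclusion that ``the edges of $H$ are then in bijection with the elements of $\mathcal D$'' is not correct: in a cactus the minimum cuts correspond to \emph{pairs} of edges lying on a common circuit, so a circuit of length $r$ contributes $\binom{r}{2}$ min-cuts, not $r$. The actual content of the theorem (its second sentence) is a bijection between the min-cuts of $H$ and the min-cuts of $G$; the phrase ``edges of $H$'' in the first sentence of the statement is loose wording that you should not take literally. Second, the gluing step hides real work: the map $\varphi$ is in general not surjective --- the cactus may require auxiliary nodes $u\in U$ with $\varphi^{-1}(u)=\emptyset$ where several cycles are attached --- and verifying that the assembled cactus has \emph{exactly} $\mathcal D$ as its family of min-cuts (no extra min-cuts created at the gluing points) is more than bookkeeping.
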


Observe that if $G$ does not contain $k$-inseparable pairs (e.g., it
was obtained by contracting all the maximal $k$-inseparable sets),
then $\varphi$ in Theorems~\ref{thm:tree-structure} and
\ref{thm:cactus-structure} is one-to-one: $\varphi(x)=\varphi(y)$ would mean that
there is no minimum cut separating $x$ and $y$.  Therefore, in
this case Theorems~\ref{thm:tree-structure} and
\ref{thm:cactus-structure} imply that we can replace the graph with a
tree or cactus graph $H$ in a way that the minimum cuts are
preserved. Note that the {\em value} of the minimum cut does change:
it becomes 1 (if $H$ is a tree) or 2 (if $H$ is a cactus), but
$X\subseteq V$ is a minimum cut in $G$ if and only if it is a minimum
cut in $H$. The proof of the above theorems also give rise to polynomial time algorithms that find the tree or cactus representations efficiently. Let us summarize the above arguments.
\begin{lemma}\label{lem:contract-insep}
  Let $G=(V,E)$ be a $(k-1)$-edge-connected graph containing no
  $k$-inseparable pairs. Then in polynomial time, one can construct a
  graph $H=(V,L)$ on the same node set having exactly the same set
  of minimum cuts such that
\begin{enumerate}
\item if $k$ is even, then $H$ is a tree (hence the minimum cuts are of size 1), and
\item if $k$ is odd, then $H$ is a cactus (hence the minimum cuts are of size 2).
\end{enumerate}
\end{lemma}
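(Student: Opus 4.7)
The plan is to derive $H$ directly from the two structure theorems available in the preceding subsection, using the no-$k$-inseparable-pair hypothesis to pass from the ambient node set $U$ of the theorem to the original node set $V$.

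First I would split on the parity of $k$. If $k$ is even, then $k-1$ is odd and I would invoke Theorem~\ref{thm:tree-structure}, obtaining a tree $H'=(U',L')$ and a map $\varphi:V\to U'$; if $k$ is odd, I would invoke Theorem~\ref{thm:cactus-structure}, obtaining a cactus $H'$ with the analogous property. In either case, the minimum cuts of $G$ correspond bijectively to the edges of $H'$ (tree case) or to the pairs of edges of $H'$ on a common cycle (cactus case), via $\varphi^{-1}$. Both constructions are known to be polynomial-time, handling the running-time requirement.

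Second, the central point is that $\varphi$ is injective. This is exactly the observation made just before the lemma: if $\varphi(x)=\varphi(y)$ for some $x\neq y$, then $x$ and $y$ lie in the same side of every structural cut of $H'$, so no minimum cut of $G$ separates them; combined with $(k-1)$-edge-connectivity, every $x\bar y$-cut would then have size at least $k$, contradicting the assumption that $G$ has no $k$-inseparable pairs. Once injectivity is established, I would identify $V$ with its image $\varphi(V)\subseteq U'$ and regard $H'$ as carrying $V$ as a subset of its vertices.

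Third, I would reduce $H'$ to a graph $H=(V,L)$ on $V$ by eliminating the auxiliary nodes in $U'\setminus V$. The key observation is that such auxiliary nodes cannot carry structural cuts that would become trivial (empty or full) when restricted to $V$: such a structural cut would have to correspond to an actual minimum cut of $G$ by the theorem's one-to-one correspondence, which is impossible. In the tree case this directly rules out auxiliary leaves and degree-two auxiliary nodes (the latter because otherwise two consecutive edges of $H'$ would induce the same bipartition of $V$), and the remaining auxiliary vertices can be absorbed into an adjacent $V$-node without affecting any non-trivial restricted cut. The cactus case is analogous, handled cycle by cycle.

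The step I expect to be the main obstacle is the last one: carefully checking in the cactus setting that absorbing the auxiliary nodes into adjacent $V$-nodes does not merge two distinct structural minimum cuts nor lose any of them. I would handle this by a local analysis at each auxiliary node, using the bijection between structural cuts of $H'$ and minimum cuts of $G$ together with the injectivity of $\varphi$ to certify that each merge is safe.
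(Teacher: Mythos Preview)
Your strategy matches the paper's: split on parity, invoke the tree/cactus structure theorems, and use the no-$k$-inseparable-pairs hypothesis to make $\varphi$ injective. You actually go further than the paper in noticing that injectivity alone does not force $U'=V$, and your arguments excluding auxiliary leaves and auxiliary degree-$2$ nodes are correct.

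The genuine gap is the last step. Absorbing an auxiliary node of degree $\ge 3$ into an adjacent $V$-node destroys a real minimum cut, not just an artefact. Take $G=K_4$ with $k=4$: the minimum cuts are the four singletons, and the tree of Theorem~\ref{thm:tree-structure} is the star $K_{1,4}$ with an empty centre $c$ and leaves $1,2,3,4$. Contracting the edge $c1$ yields the star centred at $1$, whose min-cuts are $\{2\},\{3\},\{4\}$; the cut $\{1\}$ has vanished. In fact a tree on $n$ nodes has only $n-1$ edge-induced bipartitions, so no tree on the four vertices of $K_4$ can realise all four singleton cuts, and the lemma read literally (with $H$ on exactly the node set $V$) already fails here; the paper's one-line argument glosses over the same point. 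The way to rescue the downstream application is not to absorb but to keep the auxiliary nodes, taking $H$ on $U'\supseteq V$ with all links still between $V$-nodes: since every auxiliary tree node has degree $\ge 3$ (and the analogous statement holds in the cactus), the $O(p)$ bounds on corner nodes in Sections~\ref{sec:1-to-2} and~\ref{sec:2-to-3} go through unchanged.
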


Now we are ready to show that if $G$ is $(k-1)$-edge-connected, then a
kernel containing $O(p)$ nodes, $O(p)$ edges,  and $O(p^3)$ links is possible for every
$k$.  First, we contract every maximal $k$-inseparable set; if multiple
links are created between two nodes with the same weight, let us only
keep one with minimum cost. By Proposition~\ref{prop:contract}, this
does not change the problem. Then we can apply
Lemma~\ref{lem:contract-insep} to obtain an equivalent problem on
graph $H$ having a specific structure.  If $k$ is even, then covering
the $(k-1)$-cuts of $G$ is equivalent to covering the $1$-cuts of the
tree $H$, that is, augmenting the connectivity of $G$ to $k$ is
equivalent to augmenting the connectivity of $H$ to $2$. Therefore, we
can use the algorithm described in Section~\ref{sec:1-to-2} to obtain
a kernel.  If $k$ is odd, then covering the $(k-1)$-cuts of $G$ is
equivalent to covering the $2$-cuts of the cactus $H$, that is,
augmenting the connectivity of $G$ to $k$ is equivalent to augmenting
the connectivity of $H$ to $3$. In this case, Section~\ref{sec:2-to-3}
gives a kernel.

\subsection{Decreasing the size of the cost}\label{sec:size-decrease}
We have shown that for arbitrary instance $(V,E,E^*,c,w,\ka,\pe)$, if
$(V,E)$ is $(\ka{}-1)$-edge-connected, then there exists a kernel on $O(\pe)$
nodes and $O(\pe^3)$ links. However, the costs of the links in this
kernel can be arbitrary rational numbers (assuming the input contained
rational entries).

We show that the technique of Frank and Tardos \cite{franktardos} is
applicable to replace the cost by integers whose size is polynomial in
\pe{} and the instance remains equivalent to the original one.

\begin{theorem}[\cite{franktardos}]\label{th:franktardos}
Let us be given a rational vector $c=(c_1,\ldots,c_n)$ and an integer
$N$.
Then there exists an integral vector $\bar c=(\bar c_1,\ldots,\bar
c_n)$ such that $||\bar c||_\infty\le 2^{4n^3}N^{n(n+2)}$ and
$\mbox{sign}(c\cdot b)=\mbox{sign}(\bar c\cdot b)$, where $b$ is an
arbitrary integer vector with $||b||_1\le N-1$. Such a vector $\bar c$
can be constructed in polynomial time.
\end{theorem}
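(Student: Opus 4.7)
The natural approach is the one of Frank and Tardos themselves: reduce the statement to simultaneous Diophantine approximation and then invoke the LLL lattice basis reduction algorithm for an algorithmic realization. First I would scale the input so that $\|c\|_\infty \le 1$; scaling by a positive constant does not affect any sign, so this is harmless. The plan is then to find a positive integer denominator $Q$ together with an integer vector $\bar c$ such that $|Q c_i - \bar c_i| \le \epsilon$ for every coordinate $i$, where $\epsilon$ will be chosen small enough in terms of $N$.

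The mechanism is the following elementary estimate: for any integer test vector $b$ with $\|b\|_1 \le N-1$,
\[
\bigl| Q (c \cdot b) - \bar c \cdot b \bigr| \;\le\; \epsilon \cdot \|b\|_1 \;\le\; \epsilon (N-1).
\]
Choosing $\epsilon < 1/N$ forces the integer $\bar c \cdot b$ to lie within distance strictly less than $1$ from $Q(c \cdot b)$. Hence whenever $\bar c \cdot b \neq 0$, the signs of $\bar c \cdot b$ and $c \cdot b$ coincide. The LLL-based simultaneous Diophantine approximation algorithm produces such a pair $(Q, \bar c)$ in time polynomial in the bit-length of $c$ and of $\log(1/\epsilon)$, together with a bound of the form $Q \le 2^{O(n^2)} \epsilon^{-n}$, from which $\|\bar c\|_\infty \le Q \cdot \|c\|_\infty + \epsilon$ yields the claimed bound $\|\bar c\|_\infty \le 2^{4n^3} N^{n(n+2)}$ after substituting $\epsilon = 1/N$.

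The main obstacle, and the part that requires genuine care, is the degenerate case $c \cdot b = 0$: the estimate above only guarantees $|\bar c \cdot b| < 1$, which for integers means $\bar c \cdot b = 0$; so in fact this case works for free in one direction. The delicate case is the reverse: one must show that no spurious equality $\bar c \cdot b = 0$ is introduced when $c \cdot b \ne 0$. This is handled by the same inequality: if $\bar c \cdot b = 0$ then $|Q(c \cdot b)| < 1$, so $|c \cdot b| < 1/Q$, and by driving $\epsilon$ down further (or iterating the argument with the relation $b$ that witnessed a near-equality, reducing to a lower-dimensional subproblem) one can clear the ambiguity. Iterating at most $n$ such refinement steps keeps the denominator inside the $2^{4n^3} N^{n(n+2)}$ envelope, while each round is an LLL call running in polynomial time in the current bit-length. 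This yields both the bound on $\|\bar c\|_\infty$ and the polynomial-time construction asserted in the theorem.
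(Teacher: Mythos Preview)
The paper does not prove this theorem at all: it is quoted verbatim from Frank and Tardos \cite{franktardos} and used as a black box in Section~\ref{sec:size-decrease}. So there is no ``paper's own proof'' to compare your proposal against.

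That said, your sketch is a fair outline of the original Frank--Tardos argument (simultaneous Diophantine approximation via LLL, followed by an iterative dimension-reduction to kill spurious equalities $\bar c\cdot b=0$). One point deserves more care than you give it: in the degenerate step you say that if $\bar c\cdot b=0$ then $|c\cdot b|<1/Q$, and that ``driving $\epsilon$ down further'' clears the ambiguity. But simply shrinking $\epsilon$ does not help, since $c\cdot b$ can be an arbitrarily small nonzero rational; the actual mechanism is to project $c$ onto the hyperplane orthogonal to the offending $b$, recurse on the $(n-1)$-dimensional residual, and combine the layers with geometrically separated scaling factors so that lower layers cannot flip signs determined by higher ones. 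Your parenthetical ``reducing to a lower-dimensional subproblem'' is the right idea, but it is doing all the real work; the $\epsilon$-shrinking alternative you offer first is not a valid fix. The cube $n^3$ in the exponent comes precisely from compounding the $2^{O(n^2)}$ LLL bound over these $n$ nested layers.
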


In our setting, $n=O(\pe^3)$ is the length of the vector.  We want to modify the cost function $c$ to obtain a new cost function $\bar c$ with the following property: for arbitrary two sets of links $F,F'$ with
$|F|,|F'|\le\pe$, we have $c(F)<c(F')$ if and only if
$\bar c(F)<\bar c(F')$. This can be guaranteed by requiring that $\mbox{sign}(c\cdot b)=\mbox{sign}(\bar c\cdot b)$ for every vector $b$ containing at most $2\pe$ nonzero coordinates, all of them being $1$ or $-1$. Thus it is sufficient to consider vectors 
$b$ with $||b||_1\le 2\pe$, giving $N=2\pe+1$.  Therefore Theorem~\ref{th:franktardos}
provides a guarantee $||\bar c||_\infty\le
2^{O(\pe^6)}(2\pe+1)^{O(\pe^6)}$, meaning that each entry of $\bar c$
can be described by $O(\pe^6\log \pe)$ bits.  An optimal solution for
the cost vector $\bar c$ will be optimal for the original cost $c$.
This completes the proof of Theorem~\ref{th:main-weighted}.

\begin{remark}
The above construction works for {\em Weighted Minimum Cost Edge
Connectivity Augmentation} defined as an optimization problem. However,
parametrized complexity theory traditionally addresses decision
problems. The corresponding decision problem further includes a value
$\alpha\in \mathbb{R}$ in the input, and requires to decide whether
there exists an augmenting edge set of weight at most $p$ and cost at
most $\alpha$. For this setting, we can apply the Frank-Tardos
algorithm for the vector $(c,\alpha)$ instead of $c$; this gives the
same complexity bound  $O(\pe^6\log \pe)$. 
\end{remark}

\fi\ifappendix

\subsection{Unweighted problems (Proof of Theorem~\ref{th:main})}\label{sec:main-proof}
In this section we show how Theorem~\ref{th:main} for unweighted instances can be deduced from 
Theorem~\ref{th:main-weighted}. 

Consider an instance of {\em Minimum Cost Edge-Connectivity Augmentation by One}: let $G=(V,E)$ be a $(k-1)$-edge-connected  and $E^*_0$ be a set of (unweighted) links with cost vector $c$.
We may take it as an instance of {\em Weighted Minimum Cost Edge-Connectivity Augmentation by One}, setting the weights of all links 1.
Theorem~\ref{th:main-weighted} then returns a kernel with $O(p)$ nodes and $O(p^3)$ links.

The first step in constructing the kernel was Lemma~\ref{lem:contract-insep}, which obtained an equivalent problem instance with the input $G=(V,E)$ being a tree or a cactus, and the connectivity target $k=2$ or $k=3$, respectively.
Let $R\subseteq V$ denote the set of corner nodes as in Sections~\ref{sec:1-to-2} and \ref{sec:2-to-3}, respectively; let $T=V\setminus R$.
 The kernel graph is obtained from $G$ by contracting all paths of degree 2 nodes  to single edges in trees, and all paths of 2-circuits to single 2-circuits in cacti. This was possible because in the metric closure, we can always find an optimal solution using links between corner nodes only (Theorems~\ref{thm:degree-3} and \ref{thm:degree-3-3}). 

Let $c$ denote the original cost function and $\bar c$ the one obtained by {\em Metric-Closure}$(c)$. Consider now a link in the kernel;
it corresponds to a link $f$ in the metric closure in $G$. Let us say that a set of (unweighted) links $A\subseteq E^*_0$ {\em emulates} a link $f$ in the metric closure, if
\begin{itemize}
\item $|A|\le w(f)$,
\item  $\sum_{e\in A(f)} c(f)\le \bar c(f)$, and
\item $\cup_{e\in A(f)}{\cal D}(e)\supseteq {\cal D}(f)$.
\end{itemize}
It is easy to verify that for every link $f$ in the metric closure there exists a set $A(f)$ emulating it (see also the proof of Lemma~\ref{lem:metric-new}).
 In every optimal solution, we may replace $f$ by the set of links $A(f)$ maintaining optimality. Then $|A(f)|\le \pe$ follows from  $w(f)\le \pe$.

We have shown that the $O(p^3)$ links in the weighted kernel may be replaced by $O(p^4)$ original links. This also increases the number of nodes and edges in the kernel, as we must keep all nodes in $T$ incident to these links. The bound $O(p^8\log p)$ on the bit sizes easily follows as in Section~\ref{sec:size-decrease}.

\subsection{Node-connectivity augmentation}\label{sec:node}
Consider an instance $(V,E,E^*,c,w,2,p)$ of {\em Weighted Minimum Cost
   Node-Con\-nectivity Augmentation from 1 to 2}. We reduce it to an
instance of {\em Weighted Minimum Cost Edge-Connectivity Augmentation
  from 1 to 2} via a simple and standard construction.

Let $N\subseteq V$ denote the set of cut nodes in $G=(V,E)$. Let us
perform the following operation for every $v\in N$ (illustrated on Figure~\ref{fig:split}).  Let $V_1,\ldots,
V_r$ denote the node sets of the connected components of $G-v$;
$r\ge 2$ as $v$ is a cut node.  Let us add $r$ new nodes
$v_1,v_2,\ldots,v_r$, connected to $v$. Replace every edge $uv\in E$
with $u\in V_i$ by $uv_i$ and similarly every link $(u,v)$ with $u\in
V_i$ by a link $(u,v_i)$ of the same cost and weight. Note that there
are exactly $r$ edges and no links incident to $v$ after this
operation. Let us call the $vv_i$ edges {\em special edges}.
\begin{figure}
{\centering
\includegraphics[width=0.9\linewidth]{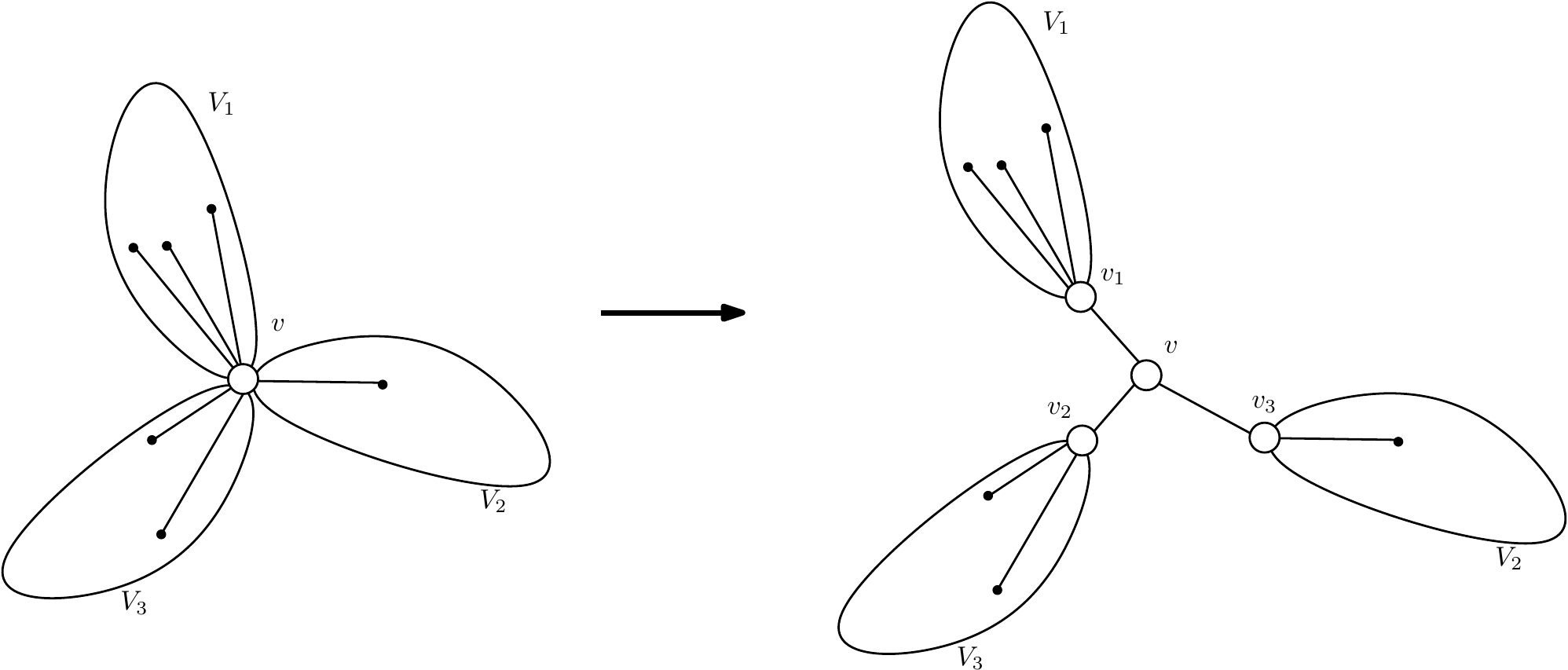}
\caption{The node splitting operation.}\label{fig:split}
}
\end{figure}

Let $G'=(V',E')$ denote the resulting graph after performing this for every $v\in N$. For a link set $F$, let $\varphi(F)$ denote its image
after these operations. The following lemma shows the reduction to the {\em Weighted Minimum Cost Edge-Connectivity Augmentation from 1 to 2} problem.

\begin{lemma}\label{lem:2-node-conn}
Graph $(V,E\cup F)$ is 2-node-connected if and only if $(V',E'\cup \varphi(F))$ is 2-edge-connected.
\end{lemma}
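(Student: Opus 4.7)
The plan is to prove the biconditional by relating edge cuts of $(V', E' \cup \varphi(F))$ to the cut structure of $(V, E \cup F)$. The construction turns each cut node $v \in N$ into a ``star gadget'' in which the $r$ special edges $vv_i$ play the role of $v$ as a potential node cut in $G \cup F$: any route in $G' \cup \varphi(F)$ that leaves the ``$V_i$-side'' of a split to reach the ``$V_j$-side'' must either traverse a special edge or use a link of $F$ that directly joins $V_i$ to another component of $G - v$. The natural projection $\pi \colon V' \to V$ sending each $v_i$ back to $v$ (and fixing original nodes) will be the bridge between the two sides of the equivalence.

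For the forward direction I assume $(V, E \cup F)$ is 2-node-connected and show that every edge of $E' \cup \varphi(F)$ lies on a cycle. For a non-special edge $e^*$, which is the image of some $e \in E \cup F$, I use that 2-node-connectivity implies 2-edge-connectivity (for $|V| \geq 3$), take a cycle through $e$ in $G \cup F$, and lift it to $V'$ via the natural map. For a special edge $vv_i$, I exploit that $v$ is not a cut node of $G \cup F$, so $(G \cup F) - v$ contains a path from some node of $V_i$ to some node of $V_j$ with $j \neq i$; this path uses no edge incident to $v$, hence lifts to a walk in $V'$ that touches neither $v$ nor any $v_k$. Concatenating this lifted walk with the internal path from $v_i$ into $V_i$, and from $V_j$ back to $v$ through $v_j$ via the other special edge $vv_j$, gives the required cycle through $vv_i$.

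For the backward direction I argue the contrapositive. If $G \cup F$ is disconnected, the projection $\pi$ exhibits $G' \cup \varphi(F)$ as disconnected too, contradicting 2-edge-connectivity. Otherwise $G \cup F$ has a cut node $v^*$; since adding edges to $G$ cannot create cut nodes, $v^* \in N$. The components of $(G \cup F) - v^*$ induce a nontrivial partition of $[r(v^*)]$, where $i \sim j$ iff $V_i$ and $V_j$ get merged by some link of $F$. Taking a class $S$ of this partition and defining
\[
X \;=\; \bigcup_{i \in S} \bigl(V_i \cup \{v_i\}\bigr),
\]
extended to include the split copies of any other cut node lying in these $V_i$'s, I will verify that the only edges of $E' \cup \varphi(F)$ crossing $X$ are the special edges $v^*v_i$ for $i \in S$, giving an edge cut of size $|S|$; taking a minimal (singleton) class yields a 1-edge cut, contradicting 2-edge-connectivity of $G' \cup \varphi(F)$.

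The main obstacle will be the careful definition of $X$ in the backward direction: the split copies at other cut nodes $w \in V_i$ must be placed on the correct side of the cut so as not to introduce spurious crossings (each such $w$ sits in a single $V_i$, so all its split copies follow $V_i$ into $X$ or its complement), and the chosen partition class $S$ must be handled so as to yield a genuine 1-edge cut. A symmetric verification in the forward direction, lifting the $G \cup F - v$ path to $V'$ without creating new crossings at other cut nodes, completes the argument.
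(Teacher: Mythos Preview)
Your forward direction is essentially the paper's argument in contrapositive form (they show a bridge in $G'\cup\varphi(F)$ forces a cut in $G\cup F$; you lift cycles from $G\cup F$ to $G'\cup\varphi(F)$), and it is fine.

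The backward direction contains a genuine gap. After observing that a cut node $v^*$ of $G\cup F$ must lie in $N$, you partition $\{1,\dots,r\}$ according to the components of $(G\cup F)-v^*$ and then assert that ``taking a minimal (singleton) class yields a $1$-edge cut''. Nothing forces any class to be a singleton: if $r=4$ and the links in $F$ pair up the four components of $G-v^*$ into two groups of two, every class has size $2$, and your set $X$ is crossed by exactly two special edges $v^*v^*_i$, which does not contradict $2$-edge-connectivity of $G'\cup\varphi(F)$.

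This gap is not repairable, because the implication you are trying to prove is in fact false as stated. Take $G=K_{1,4}$ with center $v^*$ and leaves $a,b,c,d$; then $N=\{v^*\}$ and $G'$ is the spider obtained by subdividing each edge once. With $F=\{(a,b),(c,d)\}$ we have $\varphi(F)=F$, and $G'\cup\varphi(F)$ is the union of two $5$-cycles meeting only in $v^*$; this graph is bridgeless and connected, hence $2$-edge-connected. Yet $(G\cup F)-v^*$ has the two components $\{a,b\}$ and $\{c,d\}$, so $G\cup F$ is not $2$-node-connected. The paper's own proof of this direction says only that ``the converse direction follows by the same argument'', which suffers from exactly the same defect; the obstacle you hit is inherent to the claimed equivalence, not to your method.
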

\begin{proof}
  Consider first a link set $F$ such that $(V,E\cup F)$ is
  2-node-connected. Assume that there is a cut edge in $(V',E'\cup
  \varphi(F))$.  If it is an edge $e\in E'$ that is an image of an
  original edge from $E$, then it is easy to verify that $e$ must also be a
  cut edge in $(V,E\cup F)$. If the cut edge is some edge $vv_i$ added in the
  construction, then $V_i$ is disconnected from the rest
  of the graph in $(V,E\cup F)-v$. The converse direction follows by
  the same argument.
\end{proof}

It is left to prove that a kernel $(V'',E'')$ for the
edge-connectivity augmentation problem can be transformed to a kernel
of the node-connectivity augmentation problem. Graph $(V'',E'')$ was
obtained by first contracting the maximal 2-inseparable sets, then contracting all paths of degree 2 nodes in the resulting tree.
In the first step, no special edges can be contracted, since $v$ and $v_i$ are not 2-inseparable. Also, if $v$ was an original cut node, then,
after the transformation, no link is incident to $v$. Is is not difficult to see that contracting all special edges in $(V'',E'')$ gives 
an equivalent node-connectivity augmentation problem.

\fi


\ifmain

\fi
\ifappendix\section{Augmenting arbitrary graphs to 2-edge-connectivity}\label{sec:augm-arbitr-appendix}

In this section, we allow an arbitrary input graph; by
Proposition~\ref{prop:contract}, we may assume that $G=(V,E)$ is a
forest with $r>1$ components, denoted by $(V_1,E_1)$, $(V_2,E_2)$,
$\ldots,$ $(V_r,E_r)$ (we also
consider the isolated nodes as separate components, hence
$V=\cup_{i=1}^r V_i$).
There are two types of links in $E^*$: $e=(u,v)$ is an {\em internal
  link} if $u$ and $v$ are in the same component and {\em external
  link} otherwise.

In the following, we allow adding multiple copies of the same link. Doing
this can make sense if the link connects two different components:
then the two copies of the same link provides 2-edge-connectivity
between the two components. However, the problem was originally
defined such that multiple copies of the same link cannot be taken
into the solution.  In Section~\ref{sec:no-parallel}, we describe a
clean reduction how to enforce that there can be only one copy of each
link in the solution.

On a high level, we follow the same strategy as in
Section~\ref{sec:1-to-2}: we define an appropriate notion of metric
instances, and show that every input instance can be reduced
efficiently to an equivalent metric one. However, this reduction is
more involved than the reduction for connected inputs.  We are only able to establish
a fixed-parameter algorithm for metric instances, but we are unable to
construct a polynomial kernel.  In Section~\ref{sec:completion}, we will show how
to reduce the problem from arbitrary instances to metric ones.  Then
in Section~\ref{sec:metric-fpt}, we exhibit the FPT algorithm for
metric instances.
The following propositions and definitions are needed for the definition of metric instances.
%

\begin{proposition}\label{prop:2-ec-gen}
Graph $(V,E\cup F)$ is 2-edge-connected if and only if it is connected and
for every edge $e\in E\cup F$, there is a circuit in $E\cup F$
containing it.
\end{proposition}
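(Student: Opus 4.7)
The plan is to verify both directions of the equivalence by appealing to the standard fact that every circuit crosses each cut in an even number of edges.

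For the forward direction, I would note that 2-edge-connectivity means $d_{E\cup F}(X)\ge 2$ for every nonempty proper $X\subsetneq V$, so in particular the graph is connected. To exhibit a circuit through a given edge $e=uv\in E\cup F$, I would delete $e$ and argue that $u$ and $v$ still lie in a common component of $(V,(E\cup F)\setminus\{e\})$: any $u\bar v$-set $X$ has $d_{E\cup F}(X)\ge 2$, so at least one crossing edge besides $e$ survives the deletion, preventing $X$ from being a component. The resulting $u$-$v$ path, together with $e$, closes up into the desired circuit.

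For the reverse direction, I would assume the graph is connected and every edge lies on a circuit, and suppose for contradiction that some nonempty proper $X\subsetneq V$ satisfies $d_{E\cup F}(X)\le 1$. Connectivity forces $d_{E\cup F}(X)=1$, with a unique crossing edge $e$. By hypothesis, $e$ lies on a circuit $C$; but every circuit crosses any cut in an even number of edges, so $C$ would contain a second edge crossing $X$, contradicting $d_{E\cup F}(X)=1$.

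I do not anticipate any genuine obstacle here: this is the classical characterization of 2-edge-connected graphs as connected bridgeless graphs, phrased in the cut-function language the paper already uses. The only care needed is to consistently apply the paper's definition of 2-edge-connectivity in terms of $d(X)\ge 2$, rather than the path-based definition.
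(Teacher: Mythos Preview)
Your proof is correct. The paper actually states this proposition without proof, treating it as the standard characterization of 2-edge-connected graphs as connected bridgeless graphs; your argument supplies exactly the routine verification one would expect.
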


 As before, if $f$ is an internal link connecting two nodes
in $V_i$, let $P(f)$ denote the unique path between the endpoints of
$f$ in $E_i$.
We also say that the node $y$  {\em lies between} the nodes $x$ and
$z$ if $x,y$ and $z$ are in the same component, and $y$ is contained
in the unique path between $x$ and $z$ in this component ($y=x$ or
$y=z$ is possible).
Furthermore, the edge $uv\in E$ is between $x$ and $z$, if it lies on the unique path between $x$ and $y$
in $E$ (equivalently, both $u$ and $v$ are between $x$ and $z$).
We will use the following fundamental property of circuits in a graph, the so-called strong circuit axiom in matroid theory.

\begin{proposition}\label{prop:strong-circuit}
Let $C$ and $C'$ be two circuits in a graph with $f\in C\cap C'$ and $g\in C\setminus C'$. Then there exists a circuit $C''$ with 
$C''\subseteq C\cup C'$, $g\in C''$ and $f\notin C''$.
\end{proposition}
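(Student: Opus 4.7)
The plan is to exploit the concrete structure of circuits in graphs: a circuit is a simple cycle, so removing any single edge leaves a simple path between its two endpoints. This lets the whole argument be carried out by a direct ``detour'' construction, without any appeal to abstract matroid axioms.

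Concretely, I would let $f = uv$ and write $P := C \setminus \{f\}$ and $P' := C' \setminus \{f\}$, which are a $u$--$v$ path and another $u$--$v$ path in the graph, respectively. Since $g \in C \setminus C'$ and $g \neq f$, the edge $g$ lies on $P$; write $g = xy$, with $x$ closer to $u$ along $P$ than $y$.

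The key step is to produce a walk from $x$ to $y$ inside $(C \cup C') \setminus \{f, g\}$. Starting at $x$, traverse $P$ backwards to $u$ (using only edges of $P$ strictly before $g$, hence avoiding $g$, and automatically avoiding $f \notin P$); then traverse $P'$ from $u$ to $v$ (avoiding $f \notin P'$, and $g \notin C'$ so $g \notin P'$); then traverse $P$ from $v$ back to $y$ (using edges of $P$ strictly after $g$). The resulting $x$--$y$ walk uses only edges of $(C \cup C') \setminus \{f, g\}$.

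From this walk I would extract a simple $x$--$y$ path $Q$ in the usual way (shortcutting any repeated vertex), and set $C'' := Q \cup \{g\}$. Because $Q$ is a simple $x$--$y$ path not containing $g$ and $g = xy$, the set $C''$ is a simple cycle, i.e., a circuit; by construction $C'' \subseteq C \cup C'$, $g \in C''$, and $f \notin C''$. There is no real obstacle here: the only mildly delicate point is that the detour walk need not be a simple path, but extracting a simple subpath and re-adding $g$ is standard and preserves all the required containment properties.
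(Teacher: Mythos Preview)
Your argument is correct. The paper itself does not prove this proposition at all: it simply states it as ``the so-called strong circuit axiom in matroid theory'' and uses it as a black box. Your approach is a clean, direct graph-theoretic proof of the special case (graphic matroids) actually needed, relying only on the fact that a circuit minus an edge is a simple path. The detour-walk construction and the shortcutting to a simple path are both sound, including in the presence of parallel edges; $Q\cup\{g\}$ is a simple cycle since $Q$ is a simple $x$--$y$ path not containing $g$. Compared to citing the matroid axiom, your argument is more self-contained and elementary, at the cost of being specific to graphs rather than general matroids---which is exactly what the paper requires.
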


Assume that the graph $(V,E\cup F)$ is 2-edge-connected. By
Proposition~\ref{prop:2-ec-gen}, for every $e\in E\cup F$ there exists
a circuit in $E\cup F$ containing $e$. Let $C(e)$ denote such a
circuit containing $e$ with $|C(e)\cap F|$ minimum (that is,
$C(e)$ contains a minimum number of links); if there are more than one, pick such a circuit arbitrarily.

\begin{proposition}\label{prop:intersect}
For every $e\in E\cup F$, consider the circuit $C(e)$. Then for every
$1\le i\le r$, if $C(e)$ intersects $(V_i,E_i)$, then the intersection is a
path (possibly a single node), and $C(e)$ contains either a single internal link between two nodes in $V_i$ or exactly  two external
links incident to $V_i$.
\end{proposition}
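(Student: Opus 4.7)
The plan is to exploit the minimality of $|C(e)\cap F|$ built into the definition of $C(e)$, combined with the strong circuit axiom (Proposition~\ref{prop:strong-circuit}) and the fact that each $(V_i,E_i)$ is a tree. Fix an index $i$ with $C(e)$ meeting $V_i$; let $a_i$ be the number of internal links of $V_i$ lying in $C(e)$, and $b_i$ the number of external links of $C(e)$ with an endpoint in $V_i$. Since $C(e)$ is a closed walk and $V_i$ is a union of components of $(V,E)$, every entry into $V_i$ through an external link must be matched by an exit, so $b_i=2m$ is even. The goal is to establish $(a_i,b_i)\in\{(1,0),(0,2)\}$ and that the resulting intersection with $(V_i,E_i)$ is a path.

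First I would rule out any internal link in $V_i$ when $b_i\geq 2$. If $f\in C(e)$ is an internal link in $V_i$, then $C_f=\{f\}\cup P(f)$ is a circuit with only one link. If $e\in C_f$, this already contradicts the minimality of $|C(e)\cap F|\geq b_i\geq 2$. Otherwise apply the strong circuit axiom to $C(e)$ and $C_f$ with shared edge $f$ and distinguished edge $e\in C(e)\setminus C_f$: the resulting circuit $C''$ contains $e$ but not $f$, and since $f$ was the only link of $C_f$, $|C''\cap F|\leq|C(e)\cap F|-1$, again a contradiction.

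Next I would bound $m\leq 1$. With $a_i=0$, $C(e)\cap E_i$ consists of tree paths $P_1,\dots,P_m$ where $P_j$ joins the $V_i$-endpoints of external links $g_{2j-1},g_{2j}$, and consecutive pairs are joined outside $V_i$ by arcs $Q_j$. For each $j$, let $P^*_j\subseteq E_i$ be the unique tree path connecting the $V_i$-endpoints of $g_{2j}$ and $g_{2j+1}$; since $P^*_j\subseteq V_i$ while $Q_j$ lies entirely outside $V_i$, the closed walk $C^*_j=g_{2j}\cup Q_j\cup g_{2j+1}\cup P^*_j$ is a simple circuit with only two links. If $m\geq 2$ and $e\in C^*_j$, minimality gives $|C(e)\cap F|\leq 2<2m$, a contradiction; if $e\notin C^*_j$, apply the strong circuit axiom to $C(e)$ and $C^*_j$ with shared edge $g_{2j}$ and distinguished edge $e$, producing a circuit through $e$ strictly smaller in $F$ than $C(e)$, again a contradiction. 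Hence $m\leq 1$.

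Finally, the case $b_i=0$ with $C(e)$ meeting $V_i$ forces $C(e)$ to lie entirely inside $V_i$ together with its internal links; writing $C(e)$ as the symmetric difference of the fundamental cycles $\{f_j\}\cup P(f_j)$ of those links, $e\in C(e)$ lies in an odd number of them, so some single-link circuit contains $e$ and contradicts minimality whenever $a_i\geq 2$. This leaves $(a_i,b_i)\in\{(1,0),(0,2)\}$; in the first case $C(e)\cap E_i=P(f)$ is a tree path, and in the second it is $P_1$, possibly degenerating to a single vertex when $g_1,g_2$ share an endpoint in $V_i$. The main subtlety is checking that $C^*_j$ in the third step really is a simple circuit rather than merely a closed walk: this is exactly why we need $P^*_j\subseteq V_i$ and $Q_j$ outside $V_i$, so that their edges and interior vertices are automatically disjoint, and why the case split on $e\in C^*_j$ is needed to ensure that the applicability hypothesis $e\in C(e)\setminus C^*_j$ of the strong circuit axiom holds.
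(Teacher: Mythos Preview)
Your approach is essentially the paper's: use the strong circuit axiom together with the fundamental cycle $\{f\}\cup P(f)$ to eliminate internal links, then shortcut through a tree path in $E_i$ to bound the number of external visits to $V_i$.

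One inaccuracy to fix: the circuit $C^*_j$ need not have ``only two links'', since the arc $Q_j\subseteq C(e)$ lying outside $V_i$ may itself contain links (for instance, external links between components other than $V_i$). This does not actually break your argument. Every link of $C^*_j$ already belongs to $C(e)$, because $Q_j\subseteq C(e)$ and $P^*_j\subseteq E_i\subseteq E$ contributes no links; hence in the case $e\notin C^*_j$ the circuit $C''$ produced by the strong circuit axiom still satisfies $C''\cap F\subseteq (C(e)\cap F)\setminus\{g_{2j}\}$, and in the case $e\in C^*_j$ one has
\[
|C^*_j\cap F|=2+|Q_j\cap F|<2m+\sum_{j'}|Q_{j'}\cap F|=|C(e)\cap F|
\]
whenever $m\ge 2$. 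With this correction your proof goes through and matches the paper's argument.
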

\begin{proof}
First, assume $C(e)$ contains an internal link $f$ incident to $V_i$. If $e=f$ is itself
this internal link, then $C(e)$ must consist of $e$ and the unique
path $P(e)$ in $E_i$ connecting the two endpoints of $e$. Indeed, this circuit
contains the minimum number of links (one), and furthermore there is no other
circuit in $E\cup \{e\}$; hence $C(e)$ is uniquely defined in this
case. On the other hand, if $e\neq f$, then let us apply
Proposition~\ref{prop:strong-circuit} to $C(e)$ and $C(f)$ (note that
$C(f)$ consists of $f$ and a path in $E_i$ by the above
argument). This gives a circuit $C\subseteq C(e)\cup C(f)$, $e\in C$,
$f\notin C$, contradicting the fact that $C(e)$ contained the minimum
number of links.

Hence we may assume that $C(e)$ contains no internal links; assume it
has some external links incident to $V_i$. Let $C(e)$ be of the form
$P_1-f_1-P_2-f_2-\ldots-P_t-f_t$, where $f_1,\ldots,f_t$ are the external
links incident to $V_i$, and $P_1,\ldots,P_j$ are the paths on $C(e)$
between two subsequent $f_j$'s.
If $t=2$, then the intersection between $C(e)$ and $(V_i,E_i)$ must
clearly be a path and hence the claim follows. Assume now $t>2$, and
that $e\in P_1$.
Let $Q$ denote
the path in $E_i$ between the endpoints of $f_1$ and $f_t$. Now
$f_1-Q-f_t-P_1$ gives a circuit in $E\cup F$ containing $e$, a
contradiction to the choice of $C(e)$.
\end{proof}

To define the notion of shadows in this setting, we first need
the analogues of $P(f)$ for external links.
This motivates our next definition.
Consider a leaf $u$ in a tree $(V_i,E_i)$ and let $(V_j,E_j)$ be a
different component. For some $1\le t\le \pe$, let $S_t(u,V_j)$ denote the endpoint of a
cheapest link between $u$ and a node in $V_j$ of weight at most $t$, that is
\[
S_t(u,V_j)=\mbox{argmin}_z\{c(f): f \mbox{ is an $(u,z)$ link}, z\in V_j,
w(f)\le t\}.
\]
If there are multiple possible choices, pick one arbitrarly.
 We say that the external
link $f=(u,v)$ is {\em foliate} if one of its endpoints, say $u$,  is a leaf in one of
the components $(V_i,E_i)$; let 
$w(f)=t$; assume $v\in V_j$ ($i\neq j$). Let
$P(f)$ denote the unique path in $E_j$ between $v$ and
$S_t(u,V_j)$.
As we shall see in Section~\ref{sec:completion}, foliate links with larger $P(f)$ are more useful, which motivates defining shadows based on comparing these sets. Shadows will be defined for internal links and foliate external
links only. All other external links are only shadows of themselves.

\begin{figure}
{\centering
\includegraphics[width=0.5\linewidth]{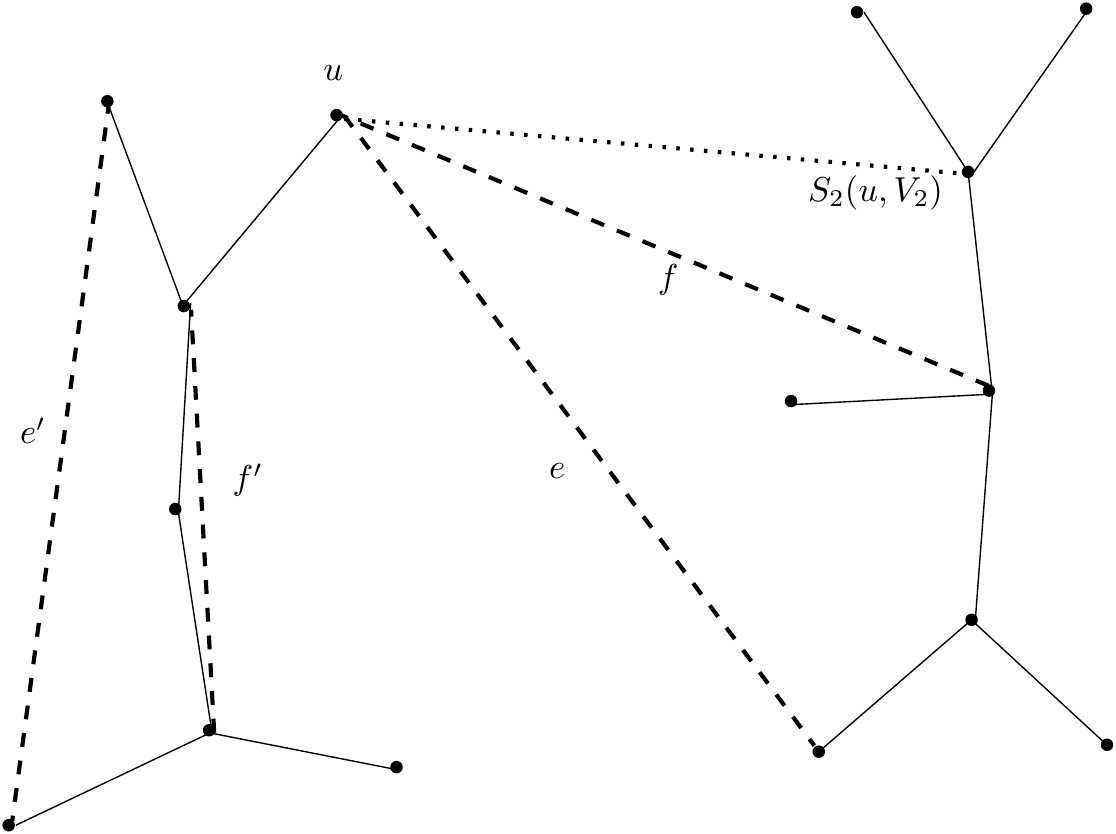}
\caption{The external link $f$ is a shadow of the external link $e$, and
  the internal link $f'$ is a shadow of the internal link $e'$.}\label{fig:shadows}
}
\end{figure}

\begin{definition}
Consider two links $e$ and $f$, with $w(f)\ge w(e)$.
 We say that $f$ is a {\em shadow} of $e$ in either of the following cases.
\begin{itemize}
\item $e=f$;
\item $e$ and $f$ are both internal links in the same component and
$P(f)\subseteq P(e)$;
\item $e=(u,x)$, $f=(u,y)$ are two foliate external links for a leaf
  $u$, and $P(f)\subseteq P(e)$.
\end{itemize}
\end{definition}
Note that in the last case, $x$ and $y$ must be in the same component $V_j$ not containing
  $u$, and for $t=w(e)$, $z=S_t(u,V_j)$, and $y$ is between $x$ and
  $z$. The definition is illustrated in Figure~\ref{fig:shadows}.
Given this notion, the definition
of metric instances is identical as in Section~\ref{sec:metric}. We say that the instance is {\em metric}, if 
\begin{enumerate}[(i)]
\item $c(f)\le c(e)$  holds whenever the link $f$ is a shadow of link $e$.
\item Consider three links $e=(u,v)$, $f=(v,z)$ and $h=(u,z)$ with
  $w(h)\ge w(e)+w(f)$. Then $c(h)\le c(e)+c(f)$.
\end{enumerate}

\subsection{Computing the metric completion}\label{sec:completion}
We use the algorithm {\em Metric-Completion}$(c)$ identical to the one
in Figure~\ref{fig:completion-alg}, with the meaning of shadows
modified. A technical difficulty is that the definition of shadow for
external links involve the nodes $S_t(u,V_j)$, whose definition
depends on the cost function, hence can change during the computation
of the metric completion. Moreover, the definition of $S_t(u,V_j)$
might involve an arbitrary choice if there are multiple cheapest
$t$-links. We use the following convention: while modifying the cost
function $c$, we modify the nodes $z=S_t(u,V_j)$ only if necessary.
That is, only if after the modification, link $(u,z)$ is not among the cheapest
$t$-links between $u$ and $V_j$ anymore.  We next prove that
Lemma~\ref{lem:metric-new} is still valid.

\begin{lemma}\label{lem:metric-new-2}
Consider a problem instance $(V,E,E^*,c,w,2,\pe)$.
The algorithm {\em Metric-Completion$(c)$} returns a metric cost function $\bar c$ with $\bar{c}(e)\le c(e)$ for every link $e\in E^*$. 
Moreover, if for a link set $\bar F\subseteq E^*$, $(V,E\cup \bar F)$ is
2-edge-connected, then there exists an  $F\subseteq E^*$
such that $(V,E\cup F)$ is 2-edge-connected, $c(F)\le \bar{c}(\bar F)$
and $w(F)\le w(\bar F)$. Consequently, and optimal solution for $\bar{c}$
provides an optimal solution for $c$.
\end{lemma}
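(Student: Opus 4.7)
The plan is to follow the inductive template used in the proof of Lemma~\ref{lem:metric-new}, but with two new ingredients needed to accommodate the disconnected input. On the metric side, I would prove by induction on $t$ that after iteration $t$, every rank-$t$ inequality is satisfied, and no subsequent iteration can violate it. For a triangle inequality on links $e,f,h$ with $w(h)=t\ge w(e)+w(f)$, the costs $c(e)$ and $c(f)$ are not modified during iteration $t$, so after fixing the inequality we have $c(h)\le c(e)+c(f)$, and the shadow update in the second half of iteration $t$ can only decrease $c(h)$ further. For shadow inequalities, the transitivity of the shadow relation together with the identity $c(f)=\min\{\tilde c(e):f\text{ is a shadow of }e\}$ after the second half of iteration $t$ gives the desired bound exactly as before. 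The genuinely new point is that the shadow relation for foliate external links depends on $S_t(u,V_j)$, which itself depends on $c$; but since costs only ever decrease and $S_t$ is updated only when its current value ceases to be strictly cheapest, the relation ``$f$ is a shadow of $e$'' is stable throughout the iterations in which it is queried, and transitivity carries over.

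The second part of the lemma reduces, as in Lemma~\ref{lem:metric-new}, to showing that a single modification of $c$ — either fixing one triangle inequality or fixing one shadow inequality — can be reversed at the level of augmenting link sets without increasing cost or weight. For a triangle fix $\bar c(h)=c(e)+c(f)$, I would replace each occurrence of $h$ in $\bar F$ by the pair $\{e,f\}$; the weight bound is immediate from $w(h)\ge w(e)+w(f)$, and for 2-edge-connectivity I would verify that any cut $X$ crossed by $h=(u,z)$ is crossed by $e=(u,v)$ or $f=(v,z)$, depending on which side of $X$ contains $v$. For a shadow fix $\bar c(f)=c(e)$, the natural replacement $F=(\bar F\setminus\{f\})\cup\{e\}$ works trivially when $e=f$, and also when $e,f$ are internal links in the same component, since $P(f)\subseteq P(e)$ means that every cut crossed by $f$ is crossed by $e$, and the rest of the verification runs as in the connected case.

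The main obstacle is the foliate external shadow case: $e=(u,x)$ and $f=(u,y)$ with $u$ a leaf of some $V_i$ and $x,y\in V_j$, where there may be cuts $X$ crossed by $f$ but not by $e$ (for instance, $X$ separates $u$ from $y$ but not from $x$). My plan is to handle this via a circuit-exchange argument built on Propositions~\ref{prop:2-ec-gen}, \ref{prop:strong-circuit}, and \ref{prop:intersect}. For every edge $g\in E\cup\bar F$, fix a minimal-link circuit $C(g)$ through $g$; after swapping $f$ for $e$, I would show that each $g$ still lies on some circuit in $E\cup F$ by rerouting any circuit that used $f$ through $e$ along the subpath $P(f)\subseteq P(e)$ in $V_j$, possibly completing it with the cheapest $w(f)$-link witnessing $S_{w(f)}(u,V_j)$. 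The stability of $S_t$ under the update convention ensures that this auxiliary link is available in the instance with the right cost. Once this circuit-replacement argument is in place, Proposition~\ref{prop:2-ec-gen} yields 2-edge-connectivity of $(V,E\cup F)$, and chaining these single-step reductions over the execution of {\sc Metric-Completion} completes the proof.
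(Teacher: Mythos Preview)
Your outline tracks the paper's proof closely for the metric property and for the triangle and internal-shadow fixes, but the foliate external shadow case has a real gap. You commit to the single replacement $F=(\bar F\setminus\{f\})\cup\{e\}$ and then attempt to reroute every minimal-link circuit $C(g)$ through $e$ along the $x$--$y$ subpath of $P(e)$ in $V_j$. This cannot work in general: if $g$ itself lies on that $x$--$y$ subpath, the circuit $C(g)$ already traverses part of it before reaching $y$ and using $f$, so replacing $f$ by $e$ plus the $x$--$y$ path does not produce a circuit (it produces a walk that revisits edges). Your fallback, ``possibly completing it with the cheapest $w(f)$-link witnessing $S_{w(f)}(u,V_j)$,'' does not rescue the argument, because that link $h=(u,z)$ is not in $F$; a circuit certifying 2-edge-connectivity of $(V,E\cup F)$ may only use edges of $E$ and links of $F$, not links that are merely present in $E^*$.

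The idea you are missing is a dichotomy. The paper sets $F_e=(\bar F\setminus\{f\})\cup\{e\}$ and $F_h=(\bar F\setminus\{f\})\cup\{h\}$ (where $h$ is a cheapest link from $u$ to $z=S_t(u,V_j)$, so its cost and weight are controlled), and proves that \emph{at least one} of $E\cup F_e$, $E\cup F_h$ is 2-edge-connected. Calling an edge $e$-critical (resp.\ $h$-critical) if it is a bridge in $E\cup F_e$ (resp.\ $E\cup F_h$), one first shows via Propositions~\ref{prop:strong-circuit} and~\ref{prop:intersect} that every $e$-critical edge lies on the $x$--$y$ path in $E_j$ and every $h$-critical edge lies on the $y$--$z$ path. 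Then one argues these cannot coexist: if $g_e$ is $e$-critical and $g_h$ is $h$-critical, both $C(g_e)$ and $C(g_h)$ contain $f$, and by Proposition~\ref{prop:intersect} each avoids the other's subpath; applying the strong circuit axiom to $C(g_e)$ and $C(g_h)$ then yields a circuit through $g_e$ avoiding $f$, contradicting $e$-criticality. Hence one of $F_e$, $F_h$ is a valid replacement. Your proposal needs this two-candidate argument; the single-candidate rerouting does not go through.
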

\begin{proof}
  The proof of the metric property of $\bar c$ is almost identical to
  that in Lemma~\ref{lem:metric-new}.  We need only one additional
  observation: after fixing the triangle inequalities in iteration
  $t$, the nodes $S_t(u,V_j)$ cannot change anymore.  This is because
  all shadows of links between $u$ and $V_j$ are also links between
  $u$ and $V_j$, hence we cannot decrease the cost of
  the cheapest such link in the second part of phase $t$. Therefore, it follows that the shadow relations
  are unchanged during and after the second part of iteration $t$ and
  this relation is transitive.

  For the second part, it is again enough to verify the claim for the
  case when $\bar c$ arises by a single modification from $c$.  First,
  assume the modification is fixing a triangle inequality
  $c(h)>c(e)+c(f)$ by setting $\bar c(h)=c(e)+c(f)$ and $\bar
  c(g)=c(g)$ for every $g\neq h$. We again set $F=\bar F$ if $h\notin
  \bar F$ and $F=(\bar F\setminus\{h\})\dunion \{e,f\}$ otherwise. The
  only difference is that $F$ is a multiset (as in this section we
  assume that a link can be selected into the solution twice) and
  $\dunion$ denotes disjoint union, i.e. if $e$ or $f$ was already
  present in $F$, then we keep the old copies as well; but the same
  analysis carries over.

Next, assume $\bar c(f)=c(e)$ was set in the second part of iteration
$t$, and $\bar c(g)=c(g)$ for every $g\neq f$.
 If $f$ is an internal link, it is easy to verify that replacing $f$ by $e$ retains 2-edge-connectivity.

Let us now focus on the case when $f$ is a foliate external link, and
$f\in\bar F$.  Let
$e=(u,x)$, $f=(u,y)$, $t=w(f)$, with $u$ being a leaf, and 
$x,y\in V_j$ for a component not containing $u$; let $z=S_t(u,V_j)$.  Let $h=(u,z)$
denote a cheapest $t$-link between $u$ and $V_j$. As $f$ is a shadow
of $e$, the node $y$ appears on the path between $x$ and $z$ in $E_j$.

Let $F_{e}=(\bar F\dunion \{e\})\setminus\{f\}$ and $
F_{h}=(\bar F\dunion \{h\})\setminus\{f\}$. We aim to prove that either
$E\cup  F_e$ or $E\cup  F_h$ is 2-edge-connected. 
Let us say that an edge in $(E\cup \bar F)\setminus\{f\}$ is  {\em
$e$-critical} or  {\em
$h$-critical}, if it is a cut edge in $E\cup  F_e$ or in $E\cup
 F_h$, respectively. We call an edge {\em critical} if it is
either of the two.
\begin{myclaim}\label{cl:crit-on-path}
 If $g$ is $e$-critical, then it must lie on the path in $E_j$ between 
  $x$ and $y$. If $g$ is $h$-critical, then it must lie on the path in $E_j$
  between $y$ and $z$.
\end{myclaim}
\begin{proof}
  We prove for the $e$-critical case; the same argument works when $e$
  is  $h$-critical. 
Consider the circuit $C(g)$ containing a minimum number of links as
in Proposition~\ref{prop:intersect}.
For $g$ to become a cut edge in $E\cup F_e$, we must have
  $f\in C(g)$.
  Let $C'$ denote the circuit consisting of the links $e=(u,x)$, $f=(u,y)$ and the $x-y$ path on $E_j$. If the latter does not contain $g$, then we may use 
  Proposition~\ref{prop:strong-circuit} for $C(g)$ and $C'$ to obtain a circuit $C''\subseteq C(g)\cup C'$ with $g\in C''$, $f\notin C''$. The existence of such a $C''$
  contradicts our assumption that $g$ is a cut edge in $E\cup F_e$.
\cqed
\end{proof}

\begin{myclaim}
Either there exist no $e$-critical edges or there exist no $h$-critical edges.
\end{myclaim}
\begin{proof}
 For a contradiction, assume that there exists an $e$-critical edge $g_e$
  and an $h$-critical $g_h$.  Consider the circuits $C(g_e)$ and
  $C(g_h)$ containing the minimum number of links as in  Proposition~\ref{prop:intersect}; for the critical property, both of them must contain $f$. By Claim~\ref{cl:crit-on-path}, both $g_e,g_h\in E_j$; $g_e$ lies on the $x-y$ path, and $g_h$ lies on the $y-z$ path.
 Then Proposition~\ref{prop:intersect} implies that circuit $C(g_e)$ must
  be disjoint from the $y-z$ path in $E_j$ and $C(g_h)$ must be
  disjoint from the $x-y$ path.  Hence $g_h\notin C(g_e)$ and
  $g_e\notin C(g_h)$. Using Proposition~\ref{prop:strong-circuit}, we
  get a circuit $C\subseteq (C(g_e)\cup C(g_h))\setminus\{f\}$ and
  $g_e\in C$. This circuit is contained in $E\cup F_e$, a
  contradiction to the fact that $g_e$ is $e$-critical.  
\cqed
\end{proof} 
This claim
completes the proof, showing that $f$ can be exchanged to either $e$
or $h$. (It is easy to check that $e$ or $h$ itself cannot become a
cut edge, as it would imply that $f$ was a cut edge in $E\cup F$).
\end{proof}

\subsection{FPT algorithm for metric instances}\label{sec:metric-fpt}
In this section, we assume thaa problem instance $(V,E,E^*,c,w,2,\pe)$ is
metric. Let $R$ again denote the set of {\em corner nodes}, that is,
nodes of degree not equal to 2. Again, if there are more than $2\pe$
leaves, then the problem is infeasible; otherwise, $|R|\le 4\pe-2$.
For a leaf $u$ in the tree $(V_1,E_1)$.
\[
{\cal S}_u=\{ v\in V: v=S_t(u,V_j)\mbox{ for some }1\le t\le \pe, 2\le j\le r\}.
\]

The following theorem gives rise to a straightforward FPT algorithm.
\begin{theorem}\label{thm:degree-3-gen}
Consider a metric instance $(V,E,E^*,c,w,2,\pe)$, and let  $u$ be a leaf in 
the tree $(V_1,E_1)$. 
There exists an optimal solution 
solution $F$ such that for every link $f=(u,v)\in F$, it holds that $v\in R\cup {\cal S}_u$.
\end{theorem}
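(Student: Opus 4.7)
The plan is to adapt the extremal argument of Theorem~\ref{thm:degree-3} to the forest setting. Take an optimal solution $F$, minimizing first $|F|$ and then the secondary measure $\ell(F)=\sum_{f=(u,v)\in F}|P(f)|$ ranging over links incident to the distinguished leaf $u$. Note that $P(f)$ is well-defined for every such link: internal links in $V_1$ use the tree path, and since $u$ is a leaf every external link at $u$ is automatically foliate. Assume for contradiction that some $f=(u,v)\in F$ satisfies $v\notin R\cup {\cal S}_u$; then $v$ has degree $2$ in its own component, and we split into the internal and external cases.

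For the internal case ($v\in V_1$), the argument of Theorem~\ref{thm:degree-3} goes through with one additional sub-case. Let $x,z$ be the two neighbors of $v$ in $V_1$ with $xv\in P(f)$. The edge $vz$ lies in some circuit $C$ of $(V,E\cup F)$; by Proposition~\ref{prop:intersect} either $C$ contains a single internal link of $V_1$ (and the two exchange cases of Theorem~\ref{thm:degree-3} apply directly), or $C$ contains two external links incident to $V_1$ with $C\cap V_1$ a path containing $vz$. In the latter sub-case, if $v$ is interior to that path then $xv$ also lies on it and hence remains covered after removing $f$, so we may exchange $f$ for its shadow $(u,x)$, strictly decreasing $\ell$; if $v$ is an endpoint of that path then an external link of $C$ is incident to $v$, and we merge $f$ with it into a heavier foliate link via the triangle inequality, strictly decreasing $|F|$.

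For the external case, $v\in V_j$ with $j\neq 1$, $v$ has degree $2$ in $V_j$, and $v\neq z:=S_{w(f)}(u,V_j)$. Let $y$ be the neighbor of $v$ on the path from $v$ to $z$ in $V_j$, and set $f'=(u,y)$ with $w(f')=w(f)$. Then $P(f')=P(f)\setminus\{vy\}\subsetneq P(f)$, so $f'$ is a shadow of $f$, and by the metric property $c(f')\le c(f)$; clearly $\ell(f')<\ell(f)$. To check that $(V,E\cup(F\setminus\{f\})\cup\{f'\})$ remains $2$-edge-connected, the only new edge is $f'=uy$: an alternate $u$-$y$ path exists by extending any alternate $u$-$v$ path (which exists because $f$ is not a bridge in $(V,E\cup F)$) through the edge $vy$, and for every other edge whose original covering cycle used $f$ we replace the segment $u$-$f$-$v$ by $u$-$f'$-$y$-$vy$-$v$, yielding a closed walk in the new graph from which a simple cycle through that edge can be extracted. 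This contradicts the choice of $F$.

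The main obstacle is the $2$-edge-connectivity bookkeeping in each exchange, which is genuinely more delicate than in the tree case of Theorem~\ref{thm:degree-3}: covering circuits may traverse several components via multiple external links, and Propositions~\ref{prop:intersect} and \ref{prop:strong-circuit} (together with the metric shadow and triangle inequalities) must be invoked to justify the circuit-swap operations. Once each local exchange is shown to preserve feasibility at no cost increase while strictly decreasing the lexicographic pair $(|F|,\ell(F))$, the extremal choice of $F$ yields the desired contradiction.
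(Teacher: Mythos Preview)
Your external case has a genuine gap. You argue that after the shadow swap $f=(u,v)\mapsto f'=(u,y)$ every edge $e$ with $f\in C(e)$ is still covered, because ``we replace the segment $u\text{--}f\text{--}v$ by $u\text{--}f'\text{--}y\text{--}vy\text{--}v$, yielding a closed walk \dots\ from which a simple cycle through that edge can be extracted.'' This fails precisely for the edge $e=vy$ itself: if the only circuit through $vy$ in $E\cup F$ already uses $f$ (and hence, at the degree-2 node $v$, uses exactly the two edges $vy$ and $f$), then your replacement walk traverses $vy$ twice, and the simple cycle you extract no longer contains $vy$. Concretely, take $V_1=u\!-\!u'$, $V_2=s\!-\!x\!-\!y\!-\!z\!-\!t$, $V_3=\{p\}$, and $F=\{(u,y),(u',s),(y,p),(t,p)\}$ with $S_t(u,V_2)=s$. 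One checks $E\cup F$ is $2$-edge-connected and the unique circuit through $xy$ is $x\!-\!y\!-\!u\!-\!u'\!-\!s\!-\!x$, which contains $f=(u,y)$. After your swap to $f'=(u,x)$ the edge $xy$ becomes a bridge (the graph splits into $\{u,u',s,x\}$ and $\{y,z,t,p\}$), so the exchange does not preserve feasibility.

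What is missing is exactly the dichotomy you carried out in the internal case but dropped in the external one. The paper's proof treats both cases uniformly: it first proves (Claim~\ref{cl:must-f}) that if the shadow swap fails then \emph{every} circuit through $xy$ contains $f$; under this assumption it shows $C(yz)$ must avoid both $xy$ and $f$ and therefore contains another link $h$ incident to the bad node, and then merges $f$ and $h$ via the triangle inequality. The hard part is verifying $2$-edge-connectivity after the merge, which requires the additional structural fact that $C(xy)$ and $C(yz)$ meet only at the bad node (Claim~\ref{claim:only-v}); this is where Propositions~\ref{prop:intersect} and~\ref{prop:strong-circuit} are really used. Your external case needs this second branch; the swap alone is not enough.
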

Given this theorem, the FPT algorithm is as follows. If the number of leaves is more than $2\pe$, we terminate by concluding infeasibility.
Otherwise, we pick an arbitrary node $u$ in the first tree. We branch according to all possible incident links connecting it to one of the corner
nodes or to the elements of ${\cal S}_u$. This is altogether $O(\pe)$ nodes with $\pe$ possible links connecting them to $u$, giving
$O(\pe^2)$ branches. This gives an algorithm with running time
$(\pe^2)^\pe=2^{O(\pe\log \pe)}$, proving Theorem~\ref{th:0to2}.

\begin{figure}
{\centering
\includegraphics[width=0.6\linewidth]{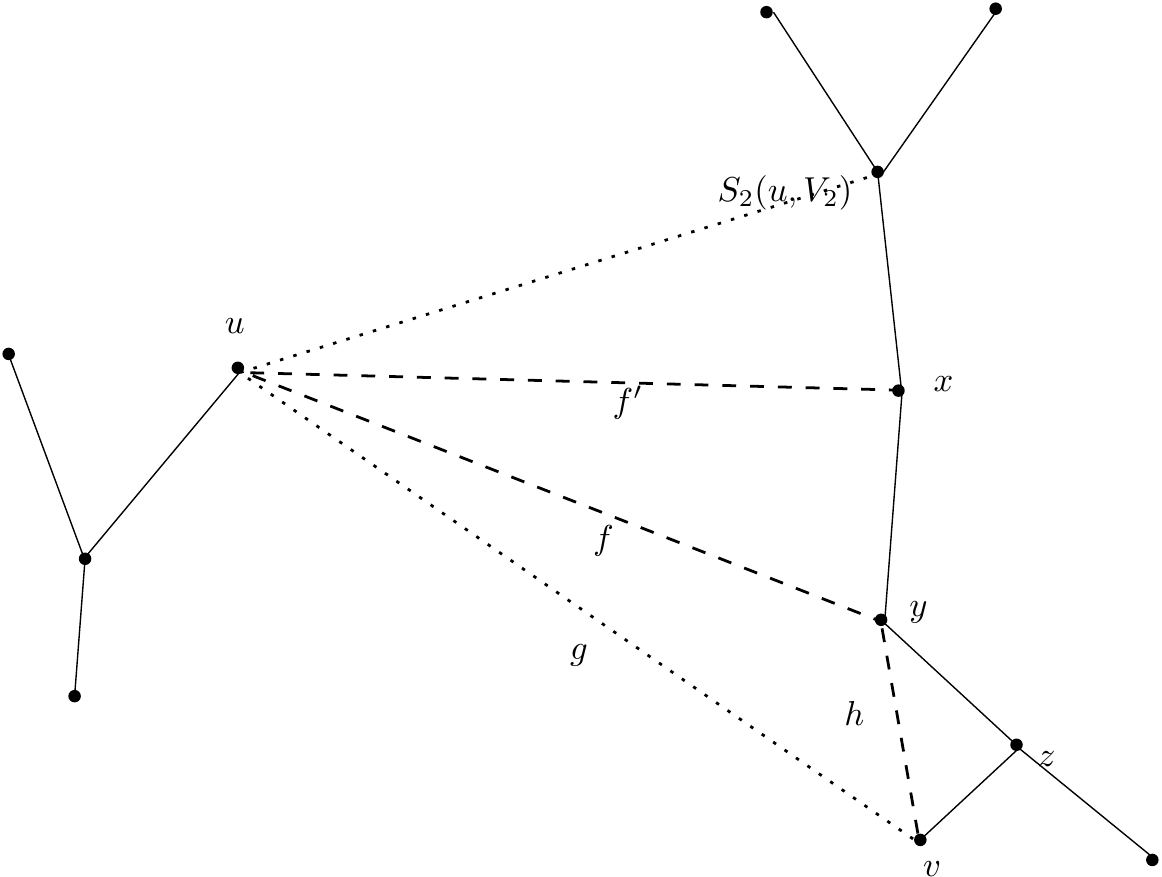}
\caption{Illustration of the proof of Theorem~\ref{thm:degree-3-gen}.}\label{fig:proof2}
}
\end{figure}

\begin{proof}[Proof of Theorem~\ref{thm:degree-3-gen}]
For an internal link or a foliate external link $f$, let $\ell(f)=|P(f)|$.
For other external links, let $\ell(f)=0$.
Consider an optimal solution $F$ such that $|F|$ is minimal, and
subject to this, $\ell(F)=\sum_{e\in F}\ell(f)$ is minimal.

For a contradiction, consider a link $f=(u,y)$ with $y\notin R\cup
{\cal S}_u$. Let $t=w(f)$.  If $f$ is an internal link, let $x$ be the
neighbour of $y$ between $u$ and $y$. If $f$ is external,
w.l.o.g. assume $y\in V_2$; in this case, let $x$ be the neighbour of
$y$ closer to $S_t(u,V_2)$.  In both cases, let $z$ be the other
neighbour of $y$ in $E_1$ or in $E_2$, which is uniquely defined since $y$ has degree 2.
Note that $\ell(f)$ is the length of the path between $u$ and $y$ in
$E_1$ or between $S_t(u,V_2)$ and $y$ in $E_2$. The external  case is
illustrated in Figure~\ref{fig:proof2}; for the internal case, see Figure~\ref{fig:proof1} in Section~\ref{sec:1-to-2}.

\begin{myclaim}\label{cl:must-f}
For any circuit $C\subseteq E\cup F$ with $xy\in C$, we must have $f\in C$.
\end{myclaim}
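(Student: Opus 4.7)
The plan is a proof by contradiction: assume that some circuit $C\subseteq E\cup F$ contains $xy$ but not $f$. I would then construct the modified (multiset) augmenting set
\[
F' := (F\setminus\{f\}) \dunion \{f'\},
\]
where $f'$ is the unique $w(f)$-link between $u$ and $x$, and derive a contradiction by showing that $F'$ is also a feasible augmentation with $c(F')\le c(F)$, $w(F')\le p$, $|F'|=|F|$, but $\ell(F')<\ell(F)$, violating the lexicographic minimality of $(|F|,\ell(F))$ in the choice of $F$.

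To establish the cost, weight, and $\ell$ comparisons, I would first verify that $f'$ is a shadow of $f$. In the internal case, $x$ lies on the $E_1$-path from $u$ to $y$, so $P(f')\subsetneq P(f)$; in the external case (say $y\in V_2$), both links are foliate at the leaf $u$ and $x$ lies on the $E_2$-path from $y$ to $S_t(u,V_2)$, so again $P(f')\subsetneq P(f)$. Since $w(f')=w(f)=t$ by construction, $f'$ is indeed a shadow of $f$, and metric property (i) gives $c(f')\le c(f)$. Moreover $\ell(f')=|P(f')|=\ell(f)-1$ yields $\ell(F')<\ell(F)$.

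The main step is to show that $(V,E\cup F')$ remains 2-edge-connected, for which I would prove $d_{E\cup F'}(X)\ge 2$ for every nonempty proper $X\subsetneq V$ by case-splitting on the position of $\{u,x,y\}$ relative to $X$. The only configurations in which $d_{E\cup F'}(X)<d_{E\cup F}(X)$ are those where $f$ crosses $X$ but $f'$ does not, i.e.\ those cuts separating $y$ from $\{u,x\}$. For any such $X$, the edge $xy$ also crosses $X$; by parity of circuit-cut intersections, $C$ must then contain at least one further edge $g$ crossing $X$, and since $f\notin C$ we have $g\in (E\cup F)\setminus\{f\}\subseteq E\cup F'$, providing the second crossing. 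In every other configuration a direct computation gives $d_{E\cup F'}(X)\ge d_{E\cup F}(X)\ge 2$.

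I expect the main obstacle to be the cut analysis itself: one has to enumerate all positions of $u,x,y$ relative to $X$ carefully to identify exactly the ``bad'' configurations, and to confirm that in every such case $xy$ indeed crosses $X$ so that $C$ supplies the missing crossing. Even the trivial cut $X=\{y\}$ falls into this bad class (where deleting $f$ alone would drop $d(y)$ to $1$), and there the second edge of $C$ incident to $y$ is precisely what rescues 2-edge-connectivity.
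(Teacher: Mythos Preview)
Your proposal is correct and follows the same strategy as the paper: replace $f$ by the $t$-link $f'=(u,x)$, note that $f'$ is a shadow of $f$ so $c(f')\le c(f)$ and $\ell(f')=\ell(f)-1$, and derive a contradiction to the minimality of $\ell(F)$ by showing that $(V,E\cup F')$ is still 2-edge-connected. The only difference is in how 2-edge-connectivity of $E\cup F'$ is verified: the paper uses the circuit characterization (Proposition~\ref{prop:2-ec-gen}) and explicitly constructs a circuit through each edge of $E\cup F'$ (using the assumed circuit $C$ for $xy$, and modifying $C(e)$ for every other edge by swapping $f$ for $f'$ together with $xy$), whereas you argue directly on cut degrees, using circuit--cut parity on $C$ to supply the second crossing for exactly those cuts separating $y$ from $\{u,x\}$. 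Both arguments work; your cut-based version is slightly more streamlined since it avoids the edge-by-edge circuit reconstruction.
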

\begin{proof}
For a contradiction, assume there exists a circuit $C$ with $xy\in C$, $f\notin C$.
Let $f'=(u,x)$ be a $t$-link, and consider $F'=(F\setminus \{f\})\cup
\{f'\}$.  Link $f'$ is a shadow of $f$ and hence $c(f')\le c(f)$, that is,
$c(F)\le c(F')$; further, $\ell(f')=\ell(f)-1$. We claim that $E\cup
F'$ is also 2-edge-connected, thereby contradicting the minimal choice
of $\ell(F)$. The edge $xy$ is not a cut edge, as witnessed by the
circuit $C$ not containing $f$. For any other edge $e\in (E \cup F)\setminus\{f\}$, we know that there is a circuit $C(e)\subseteq E\cup F$
containg $e$. If $f\notin C(e)$, then $C(e)\subseteq E\cup F'$ as well.
In the sequel, assume $f\in C(e)$. If  $xy\in C(e)$, then 
$f$ and $xy$ can be replaced in $C(e)$ by $f'$, giving a circuit in $E\cup F'$ containing $e$.
On the other hand, if $xy\notin C(e)$, then we can replace $f$ by $f'$
and $xy$.
We can show in a similar way that $f'$ cannot be a cut edge either: given a circuit of
$E\cup F$ containing $f$, we can either replace $f$ by $f'$ and $xy$, or replace $f$ and $xy$ by $f'$ to obtain a circuit in
$E\cup F'$ containing $f'$.
\cqed
\end{proof}

Consider now the edge $yz\in E$,  and let $C(yz)$ be a circuit in
$E\cup F$ containing $yz$ and having a minimal number of 
links. Let $C(xy)$ be the analogous circuit for $xy$; the previous
claim implies $f\in C(xy)$.
\begin{myclaim}
We have $xy,f\notin C(yz)$, and there is a link $h=(y,v)\in C(yz)\cap F$.
\end{myclaim}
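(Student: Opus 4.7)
The plan is to establish the three assertions in sequence, leveraging Claim~\ref{cl:must-f} and the strong circuit axiom, together with the fact that $y \notin R$ means $y$ has degree exactly $2$ in its component, so its only $E$-edges are $xy$ and $yz$.

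First I would show $xy \notin C(yz)$ by contradiction. If $xy \in C(yz)$, then Claim~\ref{cl:must-f} forces $f \in C(yz)$ as well. But then $C(yz)$ would contain three edges incident to $y$, namely $xy$, $yz$, and $f$, which is impossible since a circuit uses exactly two edges at each vertex.

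Next, to prove $f \notin C(yz)$, I would argue again by contradiction, this time invoking the strong circuit axiom (Proposition~\ref{prop:strong-circuit}). Since $E \cup F$ is 2-edge-connected, there is a circuit $C(xy) \subseteq E \cup F$ containing $xy$, and Claim~\ref{cl:must-f} applied to this circuit yields $f \in C(xy)$. If we also had $f \in C(yz)$, then $f \in C(xy) \cap C(yz)$ and, by the previous step, $xy \in C(xy) \setminus C(yz)$. Proposition~\ref{prop:strong-circuit} would then produce a circuit $C'' \subseteq C(xy) \cup C(yz)$ with $xy \in C''$ and $f \notin C''$, directly contradicting Claim~\ref{cl:must-f}.

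Finally, the existence of $h$ is a degree-count: inside the circuit $C(yz)$, the vertex $y$ has exactly two incident edges, one of which is $yz$. By the previous two steps, the other one is neither $xy$ nor $f$. Since the only edges of $E$ at $y$ are $xy$ and $yz$, this remaining edge must be a link of $F$ incident to $y$, different from $f$; writing it as $h = (y,v)$ gives $h \in C(yz) \cap F$ as required. The only mild subtlety, which I would flag explicitly, is ensuring that when applying Claim~\ref{cl:must-f} to the circuit produced by the strong circuit axiom we are still inside $E \cup F$ (which we are, since $C'' \subseteq C(xy) \cup C(yz) \subseteq E \cup F$), so the claim applies verbatim.
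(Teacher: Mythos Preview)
Your proof is correct and follows essentially the same approach as the paper: the degree-$3$ contradiction at $y$ to exclude $xy$ from $C(yz)$, and the strong circuit axiom (Proposition~\ref{prop:strong-circuit}) combined with Claim~\ref{cl:must-f} to exclude $f$. The only cosmetic difference is that the paper invokes Proposition~\ref{prop:intersect} to phrase the argument in terms of the path $P=C(yz)\cap E_j$ ending at $y$, whereas you bypass that proposition and argue directly from the fact that $y$ has exactly two $E$-edges; your version is in fact slightly leaner.
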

\begin{proof}
By Proposition~\ref{prop:intersect}, $C(yz)$ intersects the
 component of $yz$  ($E_1$ or $E_2$) in a single path $P$ and there
 are at most two incident links. If $xy\in P$, then by the previous claim, $f\in C(yz)$. Then $y$ has degree 3 in the circuit $C(yz)$, a contradiction.
 Consequently, the path $P$ must end in $y$, and hence $C(yz)$ must
 contain a link $h$ incident to $y$. The proof is complete by showing $h\neq
 f$. Indeed, if $h=f$, then we can apply
 Proposition~\ref{prop:strong-circuit} for $C(xy)$ and $C(yz)$ to
 obtain a circuit $C'$ with $xy\in C'$ and $f\notin C'$, a
 contradiction to the previous claim.
\cqed
\end{proof}

The rest of the proof is dedicated to showing that
2-edge-connectivity is maintained if we replace $f$ and $h$ by a
$(u,v)$-link $g$ of weight $w(f)+w(h)$. Since the instance is metric,
we must have $c(g)\le c(f)+c(h)$. Let $F'=(F\cup\{g\})\setminus\{f,h\}$.  Showing that $E\cup F'$ is
2-edge-connected yields a contradiction to the minimal choice of
$|F|$.  By Proposition~\ref{prop:2-ec-gen}, we have to show that
$E\cup F'$ is connected and  for each edge there is a circuit
containing it.  Connectivity follows easily: if $E\cup F'$ became
disconnected by removing  links $f=(u,y)$ and $h=(y,v)$, and adding $(u,v)$, then node $y$ must lie in a different component than $u$ and
$v$. However, as $f\in C(xy)$ by Claim~\ref{cl:must-f}, the path
$C(xy)\setminus\{f\}$ still appears in $E\cup F'$ and connects the
endpoints $u$ and $y$ of $f$.  To verify the existence of a circuit
for each edge, we need the following.
\begin{myclaim}\label{claim:only-v}
The only common node of the  circuits $C(xy)$ and $C(yz)$ is $y$.
\end{myclaim}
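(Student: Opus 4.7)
The plan is to derive a contradiction with Claim~\ref{cl:must-f}: if $C(xy)$ and $C(yz)$ shared another common vertex $w\neq y$, then I would build a circuit through $xy$ that avoids $f$, which is forbidden.

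First, I would collect the structural facts about the two circuits at $y$. By Claim~\ref{cl:must-f} we have $f\in C(xy)$, so $C(xy)$ uses exactly the two edges $xy$ and $f$ at $y$. By the previous claim $f\notin C(yz)$ and there is a link $h=(y,v)\in C(yz)\cap F$, so $C(yz)$ uses the two edges $yz$ and $h$ at $y$. Moreover, applying Proposition~\ref{prop:intersect} to $C(yz)$ inside the component containing $yz$, the circuit meets this component in a single path; since $y$ already has an incident link $h$ on $C(yz)$, that path must end at $y$ (otherwise $y$ would have degree $\ge 3$ in $C(yz)$). In particular, $xy\notin C(yz)$.

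Now assume for contradiction that there is a common vertex $w\neq y$. Splitting the circuit $C(xy)$ at $\{y,w\}$ into its two internally disjoint $y$--$w$ subpaths, let $P$ be the one containing $xy$; then $f\notin P$. Similarly, splitting $C(yz)$ at $\{y,w\}$, let $Q$ be the $y$--$w$ subpath containing $h$; then $yz\notin Q$. Consider the edge set $E(P)\triangle E(Q)$. Since $P$ and $Q$ are paths sharing exactly the endpoints $y$ and $w$, the symmetric difference has all even degrees and hence decomposes into edge-disjoint circuits. The edge $xy$ lies in $E(P)$ but not in $C(yz)\supseteq E(Q)$, so $xy\in E(P)\triangle E(Q)$ and therefore belongs to some circuit $C'$ in this decomposition. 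On the other hand, $f\notin E(P)$ by the choice of $P$ and $f\notin C(yz)\supseteq E(Q)$, so $f\notin C'$. Hence $C'\subseteq E\cup F$ is a circuit through $xy$ avoiding $f$, contradicting Claim~\ref{cl:must-f}.

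The main point to handle carefully is that the symmetric difference of two $y$--$w$ paths really is an even subgraph (and therefore a disjoint union of circuits)---a standard fact, but essential for extracting the circuit $C'$ through the specific edge $xy$. Everything else in the argument follows directly from Proposition~\ref{prop:intersect} together with the fact that, at $y$, the tree edges $xy$ and $yz$ lie on different sides of the two circuits.
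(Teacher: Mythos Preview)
Your argument is correct and follows the same strategy as the paper: build, from pieces of the two circuits, a cycle through $xy$ that avoids $f$, contradicting Claim~\ref{cl:must-f}. The paper's execution is a bit more direct: it walks along the path $C(xy)\setminus\{f\}$ from $y$ until the \emph{first} vertex $a$ lying on $C(yz)$, so that the resulting $y$--$a$ segment is internally disjoint from $C(yz)$ and, glued to either $y$--$a$ arc of $C(yz)$, yields a simple cycle outright. Your symmetric-difference route reaches the same contradiction, but one phrase needs tightening: you assert that $P$ and $Q$ ``share exactly the endpoints $y$ and $w$'', which is not justified---they may well share internal vertices and even edges. This does not hurt the conclusion, since $P$ and $Q$ are both $y$--$w$ paths and hence $E(P)\triangle E(Q)$ has even degree at every vertex regardless of further overlap; but the justification should say that rather than claim internal disjointness.
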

\begin{proof}
For a contradiction, assume the two circuits intersect in nodes other
than $y$.
Let us start moving on the path $P_0=C(xy)\setminus\{f\}$ from $y$ until
we hit the first node on 
$C(yz)$; let $a$ be this intersection point and let $P_1$ be the part
of $P_0$ between $y$ and $a$.
 Let $P_2$ be one of the two parts of $C(yz)$ between $a$ and $y$.
Now $P_1\cup P_2$ is a circuit containing $xy$ but not $f$, a
contradiction to   Claim~\ref{cl:must-f}.
\cqed
\end{proof}

Consequently, $\hat C= (C(xy)\cup C(yz)\cup\{g\})\setminus\{f,h\}$ is
a circuit in $E\cup F'$ containing $g$.  For an arbitrary $e\in (E\cup F')\setminus\{g\}$,
consider the circuit $C(e)$ in $E\cup F$. We are done if $C(e)$
contains neither of $f$ and $h$. If $C(e)$ contains both $f$ and $h$,
then we can replace these two edges in the circuit with $g$. Assume
$C(e)$ contains exactly one of them, say $f\in C(e)$ (the case $h\in C(e)$ can
be proved similarly). If $e\in C(xy)$, then $\hat C$ does contain $e$.
Otherwise, if $e\notin C(xy)$, then we may use
Proposition~\ref{prop:strong-circuit} to obtain a circuit $C'\subseteq
C(e)\cup C(xy)$, $f\notin C'$, $e\in C'$. Also, $h\notin C'$ as it
was contained in neither $C(e)$ nor $C(xy)$. Now $e\in C'\subseteq
E\cup F'$, completing the proof.  
\end{proof}

\subsection{Forbidding using links twice}\label{sec:no-parallel}
The algorithm presented in the previous section solves the version of
the problem where we allow taking the same link twice in a
solution. Here we show how to solve the original version of the
problem, where this is not allowed.  We present a solution for the
restriction when we do not even allow adding parallel links of
different weights between two nodes. The argument can be easily
modified to the weaker restriction when we may allow parallel links
with different weight.

As before, let $(V_1,E_1),\dots,(V_r,E_r)$ be the components of the
input graph. Note that the components can contain cycles and hence
they are not necessarily trees; however, this will not cause any
complications for the arguments presented in this section. For every
$1\le i < j \le r$ and $1\le t \le p$, consider the $t$-links between
$V_i$ and $V_j$; if there is a {\em unique} $t$-link of minimum cost
between these components, then let us select this link into the set
$S$.  If $r>\pe$, then there exists no feasible solution (as we would
need more than $\pe$ links to connect the components).  If $r\le \pe$,
we have $|S|\le \pe^3$.

As a first step of the algorithm, we branch on which subset of $S$
appears in the solution. That is, for every subset $S'\subseteq S$
with $w(S')\le \pe$ and not containing any parallel links, we obtain a
new graph $G'$ by adding the links in $S'$ to the graph $G$.  Note that
adding the set $S'$ can decrease the number of components and can
create further cycles. We define a new parameter $\pe'=\pe-w(S')$ and
define a new cost function $c'$, whose only difference from $c$ is
that the cost of every link in $S\setminus S'$ and of every link
parallel to a link in $S'$ is $\infty$. We solve the modified instance
for the graph $G'=(V',E')=(V,E\cup S')$, parameter $\pe'$, and cost
function $c'$ using the algorithm of the previous section. If $F'$ is
the solution obtained this way, then we return the solution $F=S'\cup
F'$. The branching step adds a factor of $O((\pe^3)^\pe)=O(2^{\pe\log
  \pe})$ to the running time of the algorithm.

 It is clear that if the original instance has a solution not
containing duplicated links, then no matter which subset of the links
$S$ it uses, our algorithm returns a solution with not larger
cost. More importantly, we claim that if our algorithm returns a
solution using some links twice, then it can be modified such that it
does not use any link twice and the cost does not increase. These two
statements prove that this algorithm indeed finds an optimum solution
for the problem where duplicated links are not allowed. 
We observe first the following simple lemma:
\begin{lemma}\label{lem:replacetwice}
  Let $G=(V,E \cup F)$ be a 2-edge-connected graph. {\em(i)} If $F$
  contains two parallel edges with the same endpoints $x$ and $y$ in the same
  component of $G$, we may remove one of them without destroying
  2-edge-connectivity.  {\em(ii)} Suppose that $e_1,e_2 \in F$ are two
  parallel links with endpoints $x$ and $y$ in different components of
  $G$. Suppose that $F$ contains another link $e^*$ (different from
  $e_1,e_2$) whose endpoints are in the same connected components of
  $(V,E)$ as $x$ and $y$, respectively. Then $G^*=(V,E\cup (F\setminus
  e_1))$ is also 2-edge-connected.
\end{lemma}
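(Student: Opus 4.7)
The plan is to apply Proposition~\ref{prop:2-ec-gen}, which characterizes 2-edge-connectivity as connectivity together with the property that every edge lies on a cycle. The approach is to exhibit, in each of the two parts, a single cycle in $G$ containing $e_1$ whose remaining edges all survive in $G-e_1$; once such a cycle is in hand, a uniform edge-substitution argument will handle every other edge.

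First I would identify the common structural certificate. In part~(i), since $x$ and $y$ lie in the same component of $(V,E)$, take the unique $x$--$y$ path $P$ in $E$ and set $C = P \cup \{e_1\}$. In part~(ii), let $e^* = (x', y')$ with $x'$ in the component of $x$ and $y'$ in the component of $y$; let $P_x \subseteq E$ be the $x$--$x'$ path and $P_y \subseteq E$ the $y$--$y'$ path, and set $C = P_x \cup \{e^*\} \cup P_y \cup \{e_1\}$. In both cases $C$ is a cycle of $G$ through $e_1$, and replacing $e_1$ by its parallel copy $e_2$ in $C$ produces a second cycle $C'$ lying entirely in $G^* := (V, E \cup (F \setminus \{e_1\}))$ and passing through $e_2$.

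With $C$ and $C'$ available, I would verify the two conditions of Proposition~\ref{prop:2-ec-gen} for $G^*$. Connectivity is immediate: $e_1$ lies on the cycle $C$, so removing it cannot disconnect $G$. For the cycle condition, pick any $g \in G^*$. If $g = e_2$, the cycle $C'$ witnesses it. Otherwise, choose a cycle $C_g \subseteq G$ through $g$ using 2-edge-connectivity of $G$. If $e_1 \notin C_g$, then $C_g \subseteq G^*$ and we are done. If $e_1 \in C_g$ but $e_2 \notin C_g$, the parallelism of $e_1, e_2$ between $x$ and $y$ lets us swap $e_1$ for $e_2$ in $C_g$, producing a cycle in $G^*$ containing $g$. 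The only remaining case, $\{e_1, e_2\} \subseteq C_g$, would force $C_g$ to be the 2-cycle $\{e_1, e_2\}$, and hence $g \in \{e_1, e_2\}$, contradicting our choice of $g$.

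The main conceptual step is recognizing that both parts are instances of the same template once the right witness cycle $C$ is isolated: part~(i) uses only an $E$-path, while part~(ii) routes through the auxiliary link $e^*$ and the two $E$-paths attaching it to $x$ and $y$. The only point requiring some care is the terminal case distinction, where one must invoke the elementary observation that a simple cycle containing two parallel edges between the same pair of vertices must in fact consist of just those two edges.
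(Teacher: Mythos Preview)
Your proof is correct and follows essentially the same approach as the paper: both hinge on the witness cycle through $e_2$ built from $e^*$ and the two $E$-paths, together with the observation that because $e_1$ is parallel to $e_2$, any edge that would become a bridge after deleting $e_1$ was already a bridge in $G$ (you phrase this constructively via the $e_1\to e_2$ swap in $C_g$, the paper phrases it contrapositively via cut edges). Your unified template for parts (i) and (ii) is a nice touch; one cosmetic remark is that in the application context of Section~\ref{sec:no-parallel} the components of $(V,E)$ need not be trees, so ``the unique $x$--$y$ path'' should simply read ``an $x$--$y$ path'', but nothing in your argument uses uniqueness.
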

\begin{proof}
  The first statement is straightforward. For the second, observe that
  the edge $e_2$ is not a cut edge in $G^*$: there is a circuit
  containing $xy$ formed by $e^*$, a path in the component of $x$, a
  path in the component of $y$, and $e_2$ itself. Moreover, if $G^*$
  has a cut edge other than $e_2$, then it is a cut edge of $G$ as
  well, a contradiction.  
\end{proof} 
Note that $F'$ cannot contain links parallel to $S'$ (as the cost of
every such link is $\infty$ in $c'$), hence parallel links can appear
only in $F'$ itself.  Suppose that the algorithm finds a multiset $F'$
of links, containing parallel pairs.  Consider two links $e_1,e_2\in
F'$ between $x$ and $y$; let $t=w(e_1)$.  If $x$ and $y$ are in the
same component of $V_i$ of $G$, then Lemma~\ref{lem:replacetwice}(i)
implies that $e_1$ can be safely removed.  Assume therefore that $x$
and $y$ are in two different connected components $V_i$ and $V_j$ of
$G$, respectively.  As $e_1\notin S$, link $e_1$ is not the unique
minimum cost $t$-link between $V_i$ and $V_j$ in the original
instance. Therefore, there is a $t$-link $e^*$ between $V_i$ and $V_j$
with $c(e^*)\le c(e_1)$ (note that possibly $e^*\in S\setminus
S'$). Link $e^*$ connects the same two connected components of $G$ as
$e_1$. Therefore, if $e^*$ is already in $S'\cup F'$ or there is a
link parallel to it in $S'\cup F'$, then
Lemma~\ref{lem:replacetwice}(ii) implies that removing $e_1$ from
$S'\cup F'$ does not destroy 2-edge-connectivity.  Otherwise, we
replace $e_1$ with $e^*$; the cost of the new solution $S'\cup
(F'\setminus e_1)\cup e^*$ obtained this way is not larger than the
cost of $S'\cup F'$. Again by Lemma~\ref{lem:replacetwice}(ii),
removing $e_1$ from $(V,E\cup S'\cup (F'\cup e^*))$ does not destroy
2-edge-connectivity, i.e., $(V,E\cup S'\cup (F'\setminus e_1)\cup e^*)$ is
2-edge-connected. We repeat this replacement for every duplicated
link. Note that this process does not create new duplicated links: we
add $e^*$ only if there is no link in $F=S'\cup F'$ with the same
endpoints as $e^*$. Therefore, we obtain a solution having not larger
cost and containing no duplicated links.

\fi


\bibliographystyle{abbrv}
\bibliography{connectivity-fixed}



\end{document}